\definecolor{darkblue}{rgb}{0,0.1,0.55}
\newtheoremstyle{plain2} 
  {15pt}
  {15pt}
  {\itshape}%
  {}%
  {\bfseries}%
  {.}%
  {.5em}%
  {}%
\theoremstyle{plain2}
\newtheorem{theorem}{Theorem}[section]
\newtheorem{lemma}[theorem]{Lemma}
\newtheorem{corollary}[theorem]{Corollary} 
\newtheorem{heuristic}{Heuristic}{\bfseries}{\itshape}
\newtheorem{hclaim}{Heuristic Claim}{\bfseries}{\itshape}
\newtheoremstyle{definition2}
  {12pt}
  {12pt}
  {\upshape}%
  {}%
  {\bfseries}%
  {.}%
  {.5em}%
  {}%
\theoremstyle{definition2}
\newtheorem{definition}[theorem]{Definition}
\newtheorem{problem}{Problem}
\newtheoremstyle{remark2}
  {12pt}
  {12pt}
  {\upshape}%
  {}%
  {\itshape}%
  {.}%
  {.5em}%
  {}%
\theoremstyle{remark2}
\newtheorem{remark}{Remark}
\newcommand{\E}{\mathbb{E}} 	
\newcommand{\N}{\mathbb{N}}
\newcommand{\R}{\mathbb{R}}
\newcommand{\Z}{\mathbb{Z}}
\newcommand{\cA}{{\mathcal A}}
\newcommand{\cC}{{\mathcal C}}
\newcommand{\cH}{{\mathcal H}}
\newcommand{\cO}{{\mathcal O}}
\newcommand{\cS}{{\mathcal S}}
\newcommand{\cT}{{\mathcal T}}
\newcommand{\cU}{{\mathcal U}}
\newcommand{\cW}{{\mathcal W}}
\newcommand{\vb}{\mathbf{b}}
\newcommand{\vc}{\mathbf{c}}
\newcommand{\vs}{\mathbf{s}}
\newcommand{\vv}{\mathbf{v}}
\newcommand{\vx}{\mathbf{x}}
\newcommand{\vy}{\mathbf{y}}
\newcommand{\vz}{\mathbf{z}}
\newcommand{\mQ}{\mathbf{Q}}
\DeclarePairedDelimiter{\norm}{\lVert}{\rVert}
\newcommand{\innerP}[2]{\langle #1, #2 \rangle}
\newcommand{\poly}{\mathrm{poly}}
\newcommand{\epsapprox}{\epsilon}
\newcommand{\nrep}{\ell}
\newcommand{\searchthreshold}{t}
\newcommand*{\transpose}{{\mathsf{T}\!}}
\title{An Improved Quantum Algorithm for 3-Tuple Lattice Sieving}
\author{Lynn Engelberts\thanks{QuSoft and CWI, the Netherlands. Supported by the Dutch National Growth Fund (NGF), as part of the Quantum Delta NL program. {\tt lynn.engelberts@cwi.nl}} \and Yanlin Chen\thanks{QuICS and University of Maryland, United States. Most of this work was completed while the author was at QuSoft and CWI. {\tt yanlin@umd.edu}} \and Amin Shiraz Gilani\thanks{QuICS and University of Maryland, United States. Partially supported by the DOE ASCR Quantum Testbed Pathfinder program (awards DE-SC0019040 and DE-SC0024220). {\tt asgilani@umd.edu}} \and
Maya-Iggy van Hoof\thanks{Horst G\"ortz Institute for IT-Security, Ruhr University Bochum, Bochum, Germany. Partially supported by the Deutsche Forschungsgemeinschaft (DFG, German Research Foundation) under Germany's Excellence Strategy-EXC 2092 CASA-390781972 ``Cyber Security in the Age of Large-Scale Adversaries''. {\tt iggy.hoof@ruhr-uni-bochum.de}}
\and Stacey Jeffery\thanks{QuSoft, CWI and University of Amsterdam, the Netherlands. Partially supported by the European Union (ERC, ASC-Q, 101040624). {\tt jeffery@cwi.nl}} \and
Ronald de Wolf\thanks{QuSoft, CWI and University of Amsterdam, the Netherlands. Partially supported by the Dutch Research Council (NWO) through Gravitation-grant Quantum Software Consortium, 024.003.037. {\tt rdewolf@cwi.nl}}} 
\begin{document}
\maketitle
\allowdisplaybreaks 

\begin{abstract}
The assumed hardness of the Shortest Vector Problem (SVP) in high-dimensional lattices is one of the cornerstones of post-quantum cryptography. 
The fastest known heuristic attacks on SVP are via so-called sieving methods. While these still take exponential time in the dimension~$d$, they are significantly faster than non-heuristic approaches and their heuristic assumptions are verified by extensive experiments. 
$k$-Tuple sieving is an iterative method where each iteration takes as input a large number of lattice vectors of a certain norm, and produces an equal number of lattice vectors of slightly smaller norm, by taking sums and differences of $k$ of the input vectors. Iterating these ``sieving steps'' sufficiently many times produces a short lattice vector. The fastest attacks (both classical and quantum) are for $k=2$, but taking larger $k$ reduces the amount of memory required for the attack. In this paper we improve the quantum time complexity of 3-tuple sieving from $2^{0.3098 d}$ to $2^{0.2846 d}$, using a two-level amplitude amplification aided by a preprocessing step that associates the given lattice vectors with nearby ``center points'' to focus the search on the neighborhoods of these center points. Our algorithm uses $2^{0.1887d}$ classical bits and QCRAM bits, and $2^{o(d)}$ qubits. This is the fastest known quantum algorithm for SVP when total memory is limited to $2^{0.1887d}$.
\end{abstract}

\thispagestyle{empty} 
\newpage
\setcounter{page}{1}

\section{Introduction}

\subsection{Classical and quantum sieving approaches to the Shortest Vector Problem}\label{sec:intro_sieving}

The Shortest Vector Problem (SVP) is the following: given linearly independent vectors $\vb_1,\dots,\vb_d\in\R^d$, find a shortest nonzero vector in the lattice obtained by taking integer linear combinations of these vectors, that is, in
\[
\Lambda=\left\{\sum_{i=1}^d z_i \vb_i \colon z_1,\dots,z_d\in\Z\right\}.
\]  
The geometry and combinatorics of SVP are interesting mathematical problems in their own right, but SVP also underlies the most promising alternatives to the old favorites of integer factorization and discrete logarithm (which are both broken by quantum computers~\cite{Shor97}) as a basis for public-key cryptography; see~\cite{MR08} for a survey on lattice-based cryptography. 
Indeed, much of our future cryptography is premised on the assumption that there is no efficient algorithm for SVP, in fact not even for \emph{approximate} SVP, which is the task of finding a nonzero vector of length at most some small factor $\gamma$ (say some polynomial in $d$) larger than the shortest length. 

The fastest provable classical algorithm for SVP has runtime essentially $2^d$~\cite{ADRS15} on worst-case instances. This was improved to runtime roughly $2^{0.95d}$ on a quantum computer, and even to $2^{0.835d}$ on a quantum computer with a large QCRAM~\cite{ACKS25QSVP}.  For meaningfully breaking cryptosystems, worst-case algorithms are more than needed and it suffices to have a fast \emph{heuristic} algorithm, one that works for most (practical) instances under some plausible, though not quite rigorous, assumptions. 
After all, no one would use a cryptographic system that is known to be breakable with a significant probability. 
The fastest heuristic methods we know for solving approximate SVP still take exponential time $2^{cd}$, but with a constant $c<1$ that is much smaller than for the best known worst-case algorithms. These heuristic methods are based on sieving ideas, which were first introduced in the context of SVP by Ajtai, Kumar, and Sivakumar~\cite{AKS01} and made more practical by Nguyen and Vidick~\cite{NV08}. (In fact, the best non-heuristic, worst-case algorithms such as \cite{ADRS15} can also be viewed as ``sieving'' algorithms.) 

The most basic type of sieving is \emph{2-tuple} sieving. The idea here is to begin by sampling a large list~$L$ of $m$ random vectors from the lattice $\Lambda$, all of roughly the same norm $R$, with $R$ chosen large enough to allow efficient sampling (in fact we may assume $R=1$ by scaling the lattice at the start). We then try to find slightly shorter vectors that are sums or differences of pairs of vectors $\vx,\vy \in L$ (note that $\vx+\vy$ and $\vx-\vy$ are still in $\Lambda$ because a lattice is closed under taking integer linear combinations). 
The use of heuristics in the analysis of this approach can be explained as follows. Given two uniformly random $\vx,\vy \in \R^d$ of the same norm $R$, the probability that $\vx+\vy$ or $\vx-\vy$ has norm $<R$ can be seen to be $p\approx 2^{-0.2075d}$ from basic estimates of the area of a cap on a $d$-dimensional sphere. With $m$ initial vectors, we have $\binom{m}{2}$ 2-tuples of vectors, so if these vectors behave like i.i.d.\ uniformly random unit vectors then the expected number of 2-tuples inducing a shorter vector is $p\binom{m}{2}$. Hence, choosing $m\approx 2/p\approx 2^{0.2075d}$ ensures there will be roughly $m$ 2-tuples that each give a shorter vector. If we can efficiently \emph{find} those $m$ good 2-tuples, then we have generated a new list~$L'$ of $m$ shorter lattice vectors. The heuristic assumption is that, once normalized, the vectors in~$L'$ again behave like i.i.d.\ uniformly random unit vectors (this is a common assumption in lattice-sieving algorithms, and has been extensively verified numerically~\cite{NV08,BLS2016,ADHKPS19}). 
This list $L'$ will form the starting point of the next sieving iteration. Under the heuristic assumption, we can then form a new list~$L''$ of roughly $m$ lattice vectors of even shorter norm by taking sums or differences of 2-tuples of vectors from~$L'$, and so on. 

Usually a relatively small (polynomial in $d$) number of such iterations suffices to find quite short vectors (see~\Cref{sec:explanation-sieving}), so the cost of the overall procedure for solving approximate SVP will be dominated by the cost of one sieving iteration, which is the cost of finding $m$ good 2-tuples among the $\binom{m}{2}$ 2-tuples. How much time does this take?  Nguyen and Vidick~\cite{NV08} used brute-force search over all $\binom{m}{2}$ 2-tuples, which takes time $O(m^2)=O(2^{0.416d})$. This time has since been improved using nearest-neighbor data structures that allow us to quickly find, for a given $\vx\in L$, a small number of $\vy\in L$ that are somewhat close to $\vx$,  and in particular using locality-sensitive filtering techniques that allow us to focus the search for a close $\vy$ to the neighborhoods of shared ``center points''. 
These algorithms have been further improved by using various \emph{quantum} subroutines to speed up the search for good 2-tuples, in particular Grover's search algorithm~\cite{LMP15,Laa16}, quantum walks~\cite{CL21}, and reusable quantum walks~\cite{BCSS22}. Currently, the best classical and quantum runtimes are $2^{0.2925d+o(d)}$~\cite{BDGL16} and $2^{0.2563d+o(d)}$~\cite{BCSS22}, respectively, which are the fastest known  heuristic attacks on SVP. 
These attacks, however, do require at least $2^{0.2075d}$ bits of memory. 

The 2-tuple-sieving approach can be generalized to $k$-tuple sieving for some larger constant $k\geq 3$ in the natural way~\cite{BLS2016}: by looking for $k$-tuples $\vx_1,\dots,\vx_k$ from our current list~$L$ such that $\norm{\vx_1\pm \vx_2\pm\cdots\pm \vx_k}\leq 1$ for some choice of the coefficients $\pm 1$ (note that the number of sign patterns for the coefficients is just $2^{k-1}$, which disappears in the big-$O$ because $k$ is a constant). 
The advantage of going to $k>2$ is that the initial list size $m$ can be smaller while still giving roughly $m$ good $k$-tuples (and hence roughly $m$ shorter lattice vectors for the next sieving iteration), meaning the algorithm requires less memory. 
The disadvantage, on the other hand, is that the time to find $m$ good $k$-tuples increases with $k$, because the set of $k$-tuples has $\binom{m}{k}$ elements, which grows with $k$, even when we take into account that the minimal required list size $m$ itself goes down with~$k$. For example, for 2-tuple sieving and 3-tuple sieving, the required list sizes are $m_2\approx 2^{0.2075 d}$ and  $m_3\approx 2^{0.1887 d}$, respectively, yet $\binom{m_3}{3}\gg \binom{m_2}{2}$ despite the fact that $m_3\ll m_2$.

As in 2-tuple sieving, a relatively small (polynomial in $d$) number of iterations suffices for finding short lattice vectors. Hence, the overall cost of $k$-tuple sieving will be dominated by the cost of finding $m$ good $k$-tuples among the set of all $\binom{m}{k}$ $k$-tuples from the given list of $m$ vectors. In~\Cref{table:k234complexities}, we give the best known classical and quantum upper bounds on the time complexity of $k$-tuple sieving for $k=2,3,4$, as established in previous work. These algorithms use time and memory that is exponential in~$d$, so  the table just gives the constant in the exponent of the time complexity, suppressing $o(1)$ terms. 


\bigskip

\begin{table}[hbt]
\centering
\begin{tabular}{|l|l|l|l|}
\hline 
$k$ & Memory & Classical time & Quantum time\\ \hline
2 & 0.2075 & 0.2925~\cite{BDGL16}& 0.2563~\cite{BCSS22}\\
3 & 0.1887 & 0.3383~\cite{CL23} & 0.3098~\cite{CL23}, \textbf{0.2846 (this work)} \\
4 & 0.1724 & 0.3766~\cite{HKL2017}  & 0.3178~\cite[App.~B.2]{KMPR2019}\\ \hline
\end{tabular}
\caption{Exponents of the best known classical and quantum time complexities for $k$-tuple sieving with minimal memory usage, for small~$k$. Our main result improves the quantum time exponent for $k=3$ to 0.2846, while keeping the memory exponent at $0.1887$.}\label{table:k234complexities}
\end{table}

\subsection{Our results}\label{sec:result}

Our main result is an improvement of the quantum time complexity of 3-tuple sieving, from $2^{0.3098d+o(d)}$ to $2^{0.2846d+o(d)}$. 
Interestingly, this means quantum 3-tuple sieving is now faster than classical 2-tuple sieving which is still the best classical heuristic algorithm for SVP  (and of course 3-tuple sieving uses less memory than 2-tuple sieving, which is the main advantage of considering $k>2$). 
Our improvement is relatively small and will not scare cryptographers who base their constructions on the assumed hardness of SVP. Often when deciding on an appropriately large key size to guarantee a certain level of security for their cryptosystem, they already allow for a quadratic quantum speedup over the best known classical attack (the best speedup one can hope for just using Grover's algorithm or amplitude amplification without exploiting further specific structure of the problem), which is already better than all quantum speedups that we actually know how to achieve.  
In addition, our results include factors of the form $2^{o(d)}$ in the runtime, which are insignificant for asymptotic analysis but may be quite large for the dimensions used in practice. 
However, we remark that our improvement is significantly larger than recent progress on this front.  For comparison, the most recent improvement in the exponent for quantum 2-tuple sieving was from 0.2570~\cite{CL21} to 0.2563~\cite{BCSS22}. 

The core computational problem that we would like to solve, and which dominates the cost of 3-tuple sieving, is the following:

\begin{problem}[Finding many solutions]\label{prob:finding-many-3-tuples} 
    Given a list $L$ of $m$ i.i.d.\ uniform samples from the unit sphere in $\R^d$, find $m$ $3$-tuples of distinct $\vx, \vy, \vz \in L$ such that $\norm{\vx - \vy - \vz} \leq 1$. 
    We will refer to such a tuple $(\vx, \vy, \vz)$ as a \emph{3-tuple solution}.
\end{problem}  
\noindent As argued before, we could also allow all sign patterns $\norm{\vx \pm \vy \pm \vz}$, but asymptotically this does not matter because there are only 4 such patterns.
We assumed here for simplicity that the $m$ initial vectors all have norm exactly~1.\footnote{It may seem odd to start with vectors $\vx,\vy,
\vz$ of norm~1 and then to find new vectors $\vx-\vy-\vz$ whose norm is still (at most)~1. However, the way this is actually used in practice is to aim at finding vectors with norm $\leq 1-\mu$ for some $\mu$ that is inverse-polynomially small in~$d$: big enough to ensure that a polynomially-small number of sieving iterations results in quite short vectors, and small enough to only affect the runtime up to subexponential factors (i.e., $o(1)$ in the exponent). Aiming at norm~$\leq 1$ rather than $\leq  1-\mu$ is just to simplify notation.}  
To ensure that $m$ such tuples exist with high probability over the choice of $L$, the list size $m$ has to be at least roughly $(27/16)^{d/4 + o(d)} \approx 2^{0.1887 d}$~\cite{HK2017}, providing a memory lower bound for 3-tuple sieving.  

Expanding $\norm{\vx - \vy - \vz}^2$ shows that unit vectors $\vx,\vy,\vz$ satisfy $\norm{\vx - \vy - \vz} \leq 1$ if and only if $\innerP{\vx}{\vy} + \innerP{\vx - \vy}{\vz} \geq 1$. This fact lets us replace the condition $\norm{\vx-\vy-\vz} \leq 1$ by two conditions on the inner products $\innerP{\vx}{\vy}$ and $\innerP{\vx - \vy}{\vz}$, motivating us to search for 3-tuple solutions $(\vx,\vy,\vz)$ by first searching for a pair $(\vx,\vy)$ with sufficiently large inner product. 
Our main result is indeed a faster quantum algorithm for a mildly relaxed version of~\Cref{prob:finding-many-3-tuples}. Slightly simplified (by omitting approximations in the inner products), this relaxed problem is: 

\begin{problem}[Finding many solutions with a stronger property]\label{prob:finding-many-3-tuples-relaxed}
    Given a list $L$ of $m$ i.i.d.\ uniform samples from the unit sphere in $\R^d$, find $m$ $3$-tuples of distinct $\vx, \vy, \vz \in L$ such that $\innerP{\vx}{\vy}\geq 1/3$ and $\innerP{\vx-\vy}{\vz}\geq 2/3$ (implying $\norm{\vx - \vy - \vz} \leq 1$). 
\end{problem}

\noindent The latter problem demands something stronger than $\norm{\vx - \vy - \vz} \leq 1$. 
This slightly increases the required list size~$m$ that ensures $m$ triples with the stronger property exist, but only by a factor~$2^{o(d)}$ as a result of the chosen inner-product conditions (e.g., see~\cite{HK2017} or~\Cref{lem:size-of-Tsol-application} below). 
This factor is negligible for our asymptotic purposes, and allows us to solve~\Cref{prob:finding-many-3-tuples} with list size $m \approx 2^{0.1887 d}$ using an algorithm for~\Cref{prob:finding-many-3-tuples-relaxed}. 

We will find those $m$ triples in~\Cref{prob:finding-many-3-tuples-relaxed} one by one. To find one good triple, consider the following approach: 
\begin{enumerate}
\item Create a uniform superposition over all $\vx\in L$, and conditioned on $\vx$ use amplitude amplification to create (in a second register) a superposition over all $\vy\in L$ such that $\innerP{\vx}{\vy}\geq 1/3$. 
\item Starting from the state of the previous step, conditioned on $\vx,\vy$ use amplitude amplification to create (in a third register) a uniform superposition over all $\vz\in L$ such that $\innerP{\vx-\vy}{\vz}\geq 2/3$, and set a ``flag'' qubit to~1 if such a $\vz$ exists. 
\end{enumerate}
Then use amplitude amplification on top of this two-step procedure to amplify the part of the state where the flag is~1. 
Measuring the resulting state gives us one of the triples $(\vx,\vy,\vz)$ that we want. Repeating $\tilde{O}(m)$ times, we will obtain (except with negligibly small error probability) $m$ triples satisfying the stronger property, and hence $m$ new vectors with norm at most~$1$. 
This approach is already better than a basic amplitude amplification on all $m^3$ triples $(\vx,\vy,\vz)$, due to the somewhat surprising properties of ``two-oracle search''~\cite{KLL15twoOracles}. 
However, when executed as stated, this procedure can be shown to yield an overall runtime of roughly $2^{0.3350d}$ for $m \approx 2^{0.1887 d}$, and therefore does not yet outperform the state-of-the-art quantum runtime $2^{0.3098d}$ from~\cite{CL23} (see~\Cref{sec:overview-algo}). 
The costs of steps~1 and~2 can also be seen to be unbalanced, suggesting there may be room for improvement. 

To get a faster algorithm, we improve the search for vectors satisfying the stronger property by locality-sensitive filtering using random product codes (RPCs, following~\cite{BDGL16}). Roughly speaking, the idea here is to choose a certain number of sufficiently random center points, and do a preprocessing step that, for each of the $m$ vectors in $L$, writes down their closest center points. The properties of RPCs allow us to do this efficiently. 
Then for a given $\vx$ in step~1, we can improve the search for a $\vy \in L$ satisfying $\innerP{\vx}{\vy}\geq 1/3$ by restricting the search to those $\vy\in L$ that share a close center point with~$\vx$.  
Specifically, using the data structure prepared during preprocessing, we can quickly create, for a given $\vx$, a superposition over all center points $\vc$ that are close to $\vx$, and subsequently, for each such $\vc$, a superposition over all $\vy\in L$ that are close to $\vc$. 
Since those $\vy$ are relatively close to $\vx$, they are more likely to satisfy $\innerP{\vx}{\vy}\geq 1/3$. Putting amplitude amplification on top of this then allows us to quickly prepare a superposition over pairs $(\vx,\vy)$ with $\innerP{\vx}{\vy}\geq 1/3$ that share a close center point, improving over the cost of the original step~1. 

Similarly, for step~2, it is easier to find a $\vz\in L$ close to $\vx-\vy$ if we focus only on those $\vz \in L$ that share a close center point with $\vx-\vy$. This approach may overlook some close $(\vx,\vy)$ pairs or some $\vz\in L$ close to $\vx-\vy$  (if $\vx$ and $\vy$, respectively $\vx-\vy$ and $\vz$, do not share a close center point). However, we can show that a careful application of these ideas enables us to balance the costs of steps~1 and~2, and that this modified method (repeated $\tilde{O}(m)$ times and choosing a fresh RPC once in a while to ensure no good triple is overlooked all the time) will find $m$ good triples. 
This results in a quantum algorithm that solves~\Cref{prob:finding-many-3-tuples} with list size $m \approx 2^{0.1887d}$ in time $2^{0.2846 d}$, giving a speedup over the previous best known time complexity for this choice of~$m$.  

\subsection{Discussion and future work}

Various ingredients of our algorithm, including the use of RPCs and of course amplitude amplification, have been used before in quantum $k$-tuple sieving algorithms. What distinguishes our algorithm, and enables our speedup, is first our nested use of amplitude amplification (to do two-oracle search in the vein of~\cite{KLL15twoOracles}), and second the way we leverage the preprocessing that we do in advance, where we write down for each lattice vector in $L$ its set of close center points from the RPC. Since we can afford relatively expensive preprocessing in terms of both time and memory, this allows us to balance different terms in the time complexity.
Altogether, this yields a more sophisticated nested amplitude amplification that results in our speedup. 

One of the main messages of our paper is that the toolbox of techniques that have been used for sieving is not exhausted yet, and can still give improved results when combined with a well-chosen preprocessing step.
How much further can this be pushed? Surprisingly, we only used basic amplitude amplification here, not the more sophisticated quantum-walk approaches that are good at finding collisions (and which have been used for sieving~\cite{CL21,BCSS22}). 
In fact, the work that led to our algorithm for~\Cref{prob:finding-many-3-tuples-relaxed} started from a quantum-walk approach, but optimizing the parameters suggested that the current version (which can be viewed as a quantum walk with vertex size $1$) is optimal. 
This might be due to the fact that there are \emph{many} 3-tuple solutions that need to be found --- a setting in which quantum walks may be less helpful (consider that for \emph{element distinctness}, the optimal algorithm uses quantum walks~\cite{Amb04ED}, whereas for its many-solution variant, \emph{collision finding}, there is an optimal quantum algorithm that uses only amplitude amplification~\cite{BHT98}). 
Are there other ways to improve the algorithm further by using quantum walks? Also, can we improve the best quantum algorithm for 2-tuple sieving? This would give the fastest known heuristic attack on SVP, rather than just a faster attack with moderate memory size. Finally, do our techniques lead to improvements for 4-tuple sieving?

\subsection{Organization}

The remainder of this paper is organized as follows. \Cref{sec:prelim} provides preliminaries on the computational model, data structure, and quantum-algorithmic techniques used in this work, as well as on properties of RPCs and the unit sphere relevant to our analysis. In~\Cref{sec:main-quantum-algorithm}, we present our new quantum algorithm, and the details of its application to SVP are given in~\Cref{sec:application}. 

\section{Preliminaries} \label{sec:prelim}

\subsection{Notation}\label{sec:notation}

Throughout the paper, $d$ will always be the dimension of the ambient space $\R^d$, $\log$ without a base means the binary logarithm, $\ln=\log_e$ the natural logarithm, and $\exp(f)=e^f$. 
We write $p_{\theta}$ as shorthand for $(1-\cos^2(\theta))^{d/2}$ (as motivated by~\Cref{sec:sphere}). The set of rotation matrices over $\R^d$ is denoted by $\mathrm{SO}(d)$. 

We write $x \sim D$ to denote that $x$ is a sample from the distribution $D$. For $M\in \Z^+$ and a set $S\subseteq \R^d$, we let $\cU(S,M)$ denote the distribution that samples $C \subseteq S$ by selecting $M$ elements from $S$ independently and uniformly with replacement. Throughout this work, we refer to multisets simply as sets, so the cardinality (or ``size'') $|C|$ corresponds to the total number of sampled elements, counting repetitions. When an argument requires using independence explicitly, we view $C$ as an ordered sequence of random variables. 
We write $x \sim \cU(S)$ as shorthand for $\{x\} \sim \cU(S,1)$.

For $a,b \in \R$ and $\epsapprox > 0$, we write $a \approx_\epsapprox b$ if $|a - b| \leq \epsapprox$. In particular, there is some dependency on~$\epsapprox$ in our analysis, so we will fix $\epsapprox \coloneqq 1/(\log d)^2$ to absorb this dependence into the $2^{o(d)}$ notation. 
We also write $a =_d b$ as shorthand for $b2^{-o(d)} \leq a \leq b2^{o(d)}$, i.e., $a$ and $b$ are equal up to subexponential factors in $d$. We let $a \geq_d b$ denote $a \geq b2^{-o(d)}$, and $a \leq_d b$ denote $a \leq b2^{o(d)}$.  

Whenever we write $\ket{0}$, we mean the all-zero computational basis state over the number of qubits implied by the context, not necessarily a single-qubit state. 
When considering unitaries~$U$ on quantum states in some Hilbert space $H$, we will often only be interested in its application on a subset $X \subseteq H$ of states, and we will sometimes (with slight abuse of notation) write ``$U \colon \ket{\psi} \mapsto \ket{\bot}$ if $\ket{\psi} \in H \setminus X$'', where $\ket{\bot}$ is a substitute for a quantum state that is irrelevant for our analysis.

\subsection{Tail bounds}

\begin{lemma}[Chernoff bound~\cite{chernoff1952Bound}, {\cite[Section~4.2.1]{MU05probability}}]\label{lem: Chernoff}
    Let $X = \sum_{i=1}^m X_i$ be a sum of independent random variables $X_i \in \{0,1\}$ 
    and define $\mu \coloneqq \E[X]$.
    Then 
    \begin{enumerate}[label=(\roman*)]
        \item $\Pr\left[X \geq (1 + \delta) \mu\right] \leq \big( \frac{e^\delta}{(1+\delta)^{1+\delta}}\big)^{\mu} \leq e^{-\frac{\delta^2}{2 + \delta} \mu}$ for all $\delta \geq 0$.\footnote{The second inequality follows because $\delta-(1+\delta)\ln(1+\delta)+\delta^2/(2+\delta)<0$ for all $\delta >0$; for $\delta=0$ it is trivial.}  
        \item $\Pr\left[X \leq (1 - \delta)\mu\right] \leq e^{-\frac{\delta^2}{2} \mu}$ for all $\delta \in (0,1)$.  
        \item $\Pr\left[|X - \mu| \geq \delta \mu\right] \leq 2 e^{-\frac{\delta^2}{3} \mu}$ for all $\delta \in (0,1)$. 
    \end{enumerate} 
\end{lemma}

\noindent We will use the following immediate corollary.  

\begin{corollary}[Simple application of the Chernoff bound]\label{cor:simple-application-of-Chernoff}
    Let $m \colon \N \to \N$. 
    For $d \in \N$, let $X(d) = \sum_{i=1}^{m(d)} X_i(d)$ be a sum of independent random variables $X_i(d) \in \{0,1\}$.  Then $X(d) \leq_d \max\{1, \E[X(d)]\}$, except with probability $e^{-\omega(d)}$. 
    Moreover, if $\E[X(d)] = \omega(d)$, then $X(d) =_d \E[X(d)]$, except with probability $e^{-\omega(d)}$.
\end{corollary}

\begin{proof}
Let $\mu \coloneqq \E[X]$, where $X = X(d)$. 
Applying part~$(i)$ of~\Cref{lem: Chernoff} with $\delta = d^2 \max(1, 1/\mu)$  yields $\Pr[X \geq (1 + \delta) \mu] \leq e^{-\delta \mu / 2} \leq e^{-d^2/2} = e^{-\omega(d)}$, where the last inequality uses that $\delta \geq 2$ for $d \geq 2$. 
Note that $(1 + \delta) \mu \leq_d \max\{1, \mu\}$ by definition of $\delta$. (This can be seen by separating the case $\mu \leq 1$ and $\mu > 1$.) 
To prove the second part, assume that $\mu = \omega(d)$. Applying part~$(iii)$ of~\Cref{lem: Chernoff} with (say) arbitrary constant $\delta \in (0,1)$ yields $\Pr[|X - \mu| \geq \delta \mu] \leq 2 e^{-\delta^2 \mu / 3} = e^{-\omega(d)}$ by assumption on $\mu$, from which the claim follows. 
\end{proof}

\subsection{Computational model and quantum preliminaries}

\subsubsection{Computational model}\label{sec:compmodel}

Our computational model is a classical computer (a classical random-access machine) that can invoke a quantum computer as a subroutine. This classical computer can also write bits to a {quantum-readable classical-writable} classical memory (QCRAM). This memory stores an $n$-bit string $w = w_0 \dots w_{n-1}$, and supports \emph{quantum random access queries} or \emph{QCRAM queries}, which correspond to calls to the unitary
$O_w \colon \ket{i, b} \mapsto \ket{i, b \oplus w_i}$ 
for $i \in \{0, \dots, n-1\}$ and $b \in \{0,1\}$. Note that QCRAM itself (as a memory) remains classical throughout the algorithm: it stores a classical string rather than a quantum superposition over strings.  
The notion of QCRAM is commonly used in quantum algorithm design for efficient quantum queries to classical data, allowing us to query (read) multiple bits of the classical data in superposition. In contrast, write operations (i.e., changes to the stored string $w$) can only be done  classically.

In classical algorithms, allowing random access memory queries with unit or logarithmic cost is standard practice. 
The intuition is that the $n$ bits of memory can be, for example, arranged on the leaves of a binary tree with depth $\lceil \log_2 n\rceil$, and querying the $i$-th bit corresponds to traversing a $\lceil \log_2 n\rceil$-length path from the root to the $i$-th leaf of this binary tree.  For similar reasons, QCRAM queries are often assumed ``cheap'' to execute (i.e., in time $O(\log n)$) once the classical memory is stored in QCRAM.

While the physical implementation of such a device is nontrivial and controversial due to the challenge and expense of doing quantum error-correction on the whole device (whose size will have to be proportional to the number of stored bits rather than to its logarithm, though its depth will still be logarithmic), QCRAM remains conceptually acceptable for theoretical purposes. This is particularly true in theoretical computer science contexts, where classical RAM is likewise assumed to be fast and error-free without a concern for error correction. One may hope that in the future, quantum hardware will be able to implement such memory with comparable reliability and efficiency. In cryptography it is also important to learn the runtime of the best quantum algorithms for breaking cryptography under fairly optimistic assumptions on the hardware, which is exactly what we do in this paper.

In our computational model, we will count one RAM operation or one QCRAM write operation in the  classical machine, or one elementary gate in a quantum circuit, as one ``step''. When we refer to the ``time'' or ``time complexity'' of an algorithm or subroutine, we mean the total number of steps it takes on a worst-case input. We will typically count the number of QCRAM queries separately.

\subsubsection{Amplitude amplification} \label{sec:prelim-AA}

Our main quantum algorithmic tool will be amplitude amplification~\cite{BHMT02}. 
The goal of amplitude amplification is to project an easily generated state $\ket{\psi}$ onto a ``marked'' subspace. 
\emph{Fixed-point} amplitude amplification~\cite{yoder2014FixedPointSearch,gilyen2018QSingValTransf} ensures that, with high probability, the output state is the desired state (assuming it is nonzero), even if we run for more steps than necessary. The version that we give below ensures that the algorithm always stops after a fixed number of steps, even if $\ket{\psi}$ does not overlap the marked space at all. 

\begin{lemma}[Fixed-point amplitude amplification (implicit in~{\cite{yoder2014FixedPointSearch,gilyen2018QSingValTransf}})]\label{lem:AA-without-knowing-norm}
Let {\tt Samp} be a unitary implementable in time $\sf S$, and let $\ket{\psi} \coloneqq {\tt Samp}\ket{0}$. Let $\Pi$ be a projector on the Hilbert space $H$ of $\ket{\psi}$, and let {\tt Check} be a unitary implementable in time ${\sf C}$ such that, for all $\ket{\phi} \in H$, ${\tt Check}\ket{\phi} \ket{0}=\Pi\ket{\phi}\ket{1}+(I-\Pi)\ket{\phi}\ket{0}$. 
For all $r \geq 1$ and $\delta\in(0,1/2]$, there exists a quantum algorithm ${\tt AA}_{r}({\tt Samp},{\tt Check})$ that satisfies the following, except with probability $\delta$: 
\begin{enumerate}
    \item[(i)] If $\norm{\Pi\ket{\psi}} \geq \frac{1}{r}$, then it outputs the state $\ket{\psi_{\mathrm{out}}} = \frac{\Pi\ket{\psi}}{\norm{\Pi\ket{\psi}}} \ket{1}$ in time $O(\log(\frac{1}{\delta}) \norm{\Pi\ket{\psi}}^{-1}({\sf S}+{\sf C}))$. 
    \item[(ii)] If $\norm{\Pi\ket{\psi}} = 0$, then it outputs the state $\ket{\psi_{\mathrm{out}}} = \ket{\psi} \ket{0}$ in time $O(\log(\frac{1}{\delta}) r ({\sf S}+{\sf C}))$. 
\end{enumerate} 
\end{lemma} 

\noindent We call the auxiliary qubit in the output of amplitude amplification its \emph{flag} qubit or \emph{flag} register. 

\begin{proof} 
By~\cite{yoder2014FixedPointSearch,gilyen2018QSingValTransf}, there exists a  universal constant $\eta > 0$ and a quantum algorithm $\cA_{r'} = \cA_{r'}({\tt Samp}, {\tt Check})$ that generates a state $\ket{\psi_{\mathrm{out}}}$ in time $O(r'({\sf S}+{\sf C}))$, satisfying: 
\begin{enumerate}
    \item[(1)] If $\norm{\Pi\ket{\psi}}\neq 0$ and $r' \geq \eta \log(\frac{1}{\delta}) \norm{\Pi\ket{\psi}}^{-1}$, then $\norm{\ket{\psi_{\mathrm{out}}} - \frac{\Pi\ket{\psi}}{\norm{\Pi\ket{\psi}}}\ket{1}} \leq \delta$.
    \item[(2)] If $\norm{\Pi\ket{\psi}}=0$ and $r' \geq 1$, then $\ket{\psi_{\mathrm{out}}}=\ket{\psi}\ket{0}$.
\end{enumerate}
Our lemma extends this result slightly: in part~$(i)$, we require the runtime to be much smaller than in part~$(ii)$ whenever 
the actual (possibly unknown) value of $\norm{\Pi\ket{\psi}}$ is significantly larger than the lower bound $\frac{1}{r}$ that our algorithm has for it.

Specifically, we define the quantum algorithm ${\tt AA}_{r}({\tt Samp}, {\tt Check})$ as follows. 
It first runs $\cA_{r'}$ with $r' = 1$ and measures the auxiliary qubit. If the measurement outcome is 1, then it stops and outputs the resulting state. Otherwise, it doubles $r'$ (i.e., replaces $r'$ by $2r'$) and repeats. This continues until $r' \geq 2 \eta \log(\frac{2}{\delta}) r$, at which point it outputs the current state. 

By construction, ${\tt AA}_{r}({\tt Samp}, {\tt Check})$ always terminates, and outputs a state $\ket{\psi_{\mathrm{out}}}$. 
Note that if ${\tt AA}_{r}({\tt Samp}, {\tt Check})$ terminates after $\ell$ repetitions, then the aforementioned result implies that the total runtime is $O(2^{\ell} ({\sf S}+{\sf C}))$, because $\sum_{i=1}^\ell 2^i = O(2^{\ell})$.   

Suppose $\norm{\Pi\ket{\psi}} \geq \frac{1}{r}$. Let $\ell$ be the smallest integer such that $2^\ell \geq \eta \log(\frac{2}{\delta}) \norm{\Pi\ket{\psi}}^{-1}$. 
By statement~(1), the inner product $\alpha$ between $\frac{\Pi\ket{\psi}}{\norm{\Pi\ket{\psi}}}\ket{1}$ and the output state $\ket{\psi_{\mathrm{out}}}$ of $\cA_{2^\ell}$ satisfies $|\alpha| \geq 1 - \delta/2$, so the probability that measuring the ancilla qubit yields outcome 1 is at least $(1 - \delta/2)^2 \geq 1 - \delta$.  
In other words, with probability at least $1 - \delta$, ${\tt AA}_{r}({\tt Samp}, {\tt Check})$ generates the state $\ket{\psi_{\mathrm{out}}} = \frac{\Pi\ket{\psi}}{\norm{\Pi\ket{\psi}}} \ket{1}$ after at most $\ell$ repetitions, from which part~$(i)$ follows.

On the other hand, if $\norm{\Pi\ket{\psi}} = 0$, statement~(2) implies that ${\tt AA}_{r}({\tt Samp}, {\tt Check})$ always generates $\ket{\psi_{\mathrm{out}}}=\ket{\psi}\ket{0}$. The complexity claim follows because, by construction of the algorithm, it terminates after $\ell$ repetitions, for some $\ell$ satisfying $2^{\ell} = O(\log(\frac{1}{\delta}) r)$. 
\end{proof}

\begin{remark}[Choice of $\delta$]\label{rem:AA-choice-delta}
When applying~\Cref{lem:AA-without-knowing-norm}, we will always consider a superexponentially small $\delta$, say $\delta= \exp(-2^{\sqrt{d}})$, and will not state its value explicitly. This implicit choice ensures that the error probability of ${\tt AA}_{r}({\tt Samp},{\tt Check})$ is $2^{-\omega(d)}$, and incurs a cost of only a factor $\log(1/\delta)\leq 2^{o(d)}$ in the overall complexity of ${\tt AA}_{r}({\tt Samp},{\tt Check})$, a cost that is negligible for our purposes. 
\end{remark}

\begin{remark}[Unstructured search as a special case]\label{rem:AA-special-case-search}
Amplitude amplification allows us to search in a finite set $M_0$ for elements of a (possibly empty) subset $M_1$, and create a uniform superposition $\ket{M_1}$ over these elements if they exist. Namely, let {\tt Samp} be a quantum algorithm that maps $\ket{0}$ to a uniform superposition $\ket{\psi}$ over the elements of $M_0$, and {\tt Check} a quantum algorithm that maps $\ket{x}\ket{0} \mapsto \ket{x}\ket{1}$ if $x \in M_1$ and that acts as identity for $x \in M_0 \setminus M_1$.  
Taking $r \coloneqq |M_0|$ (assuming $|M_0|$ is known or encoded in {\tt Samp}), \Cref{lem:AA-without-knowing-norm} implies that, with probability at least $1- \delta$, ${\tt AA}_{r}({\tt Samp},{\tt Check})$ generates a state $\ket{\psi_{\mathrm{out}}}$ in time $O(\sqrt{|M_0|}\log\frac{1}{\delta})$, which satisfies $\ket{\psi_{\mathrm{out}}} = \ket{M_1}\ket{1}$ if $M_1 \neq \emptyset$, and  $\ket{\psi_{\mathrm{out}}} = \ket{\psi}\ket{0}$ otherwise. 
\end{remark}

\subsubsection{Two-oracle search} \label{sec:prelim-two-oracle-search} 

One of the most important quantum algorithmic primitives is Grover's algorithm~\cite{Gro96}, which solves the problem of \emph{oracle search}. Consider a set $M_0$ of elements that are easy to ``sample'' --- meaning that we can generate some superposition over the elements of $M_0$, where we think of the squared amplitudes as the sampling probabilities. 
For a marked subset $M_1\subseteq M_0$, consider the problem of searching over $M_0$ for an element of $M_1$. If we can sample an element of $M_0$ in complexity ${\sf S}$ and we can check membership in $M_1$ in complexity ${\sf C}$, then there is a quantum algorithm~\cite{BHMT02} (see~\Cref{lem:AA-without-knowing-norm} for a slightly different ``fixed-point'' version) that finds an element of $M_1$ in complexity (neglecting constants)
\[\frac{1}{\sqrt{\varepsilon}}\left({\sf S}+{\sf C}\right)\]
where $\varepsilon$ is the probability that an element sampled from $M_0$ is in $M_1$. 
In our applications, the sampling complexity ${\sf S}$ is negligible, resulting in complexity $\frac{1}{\sqrt{\varepsilon}}{\sf C}$.

Oracle search is often called \emph{unstructured search}, since the oracle abstracts away any potential structure of the problem that an algorithm might be able to take advantage of. Although this makes it quite general, we can in some cases take advantage of a small amount of structure. A simple case is \emph{two-oracle search}, first studied in~\cite{KLL15twoOracles} for the case of a unique marked element. In this problem, one wants to find an element of $M_2\subseteq M_0$, but in addition to having access to an oracle for checking membership in $M_2$ (at cost ${\sf C}_2$), one also has access to an oracle for checking membership in a set $M_1$ such that $M_2\subseteq M_1\subseteq M_0$ (at cost ${\sf C}_1$). See~\Cref{fig:high-level-sketch} for a depiction. 
Assuming ${\sf C}_1\ll {\sf C}_2$, this additional structure gives an advantage, intuitively because one can always first cheaply check membership in $M_1$, and for nonmembers, not waste time on a more expensive  membership check in $M_2$. This significantly reduces the number of times we check membership in~$M_2$. By variable-time search~\cite{Amb10VariableTime}, if $\varepsilon_1$ is the probability that an element sampled from $M_0$ is in $M_1$, and $\varepsilon_2\leq \varepsilon_1$ is the probability that an element sampled from $M_0$ is in $M_2$, then a quantum algorithm can find an element of $M_2$ in complexity 
\[
    \sqrt{\frac{\varepsilon_1}{\varepsilon_2}} \left(\frac{1}{\sqrt{\varepsilon_1}}{\sf C}_1+{\sf C}_2\right) = \frac{1}{\sqrt{\varepsilon_2}}{\sf C}_1+\sqrt{\frac{\varepsilon_1}{\varepsilon_2}}{\sf C}_2
\]
where we neglect logarithmic factors and assume the cost ${\sf S}$ of sampling from $M_0$ is negligible.

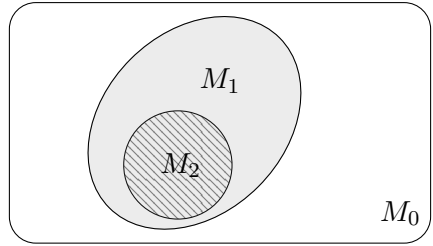
\begin{figure}[H]
    \centering
    \vspace*{1cm}
    \begin{tikzpicture}[scale=0.8]
    \draw[rotate=0, pattern=north west lines, pattern color=white, rounded corners=10pt] (-3.5,-2) rectangle (3.5,2);
    \draw[pattern=north west lines, pattern color=gray, opacity=1] (-0.7,-0.7) ellipse (0.9cm and 0.9cm);
    \draw[fill=gray, opacity=0.15, rotate=45] (-0.3,0.3) ellipse (2cm and 1.5cm);
    \draw[draw=black, rotate=45] (-0.3,0.3) ellipse (2cm and 1.5cm);
    \node at (3,-1.5) (A) {$M_0$};
    \node at (0,0.7) (B) {$M_1$};
    \node at (-0.65,-0.7) (C) {$M_2$};
    \end{tikzpicture}
    \vspace*{0.4cm}
    \caption{Illustration of the two-oracle search setup, where the task is to search for elements in the search space $M_0$ that belong to a marked subset $M_2 \subseteq M_0$, given the ability to check membership in both $M_2$ and some subset $M_1$ satisfying $M_2 \subseteq M_1 \subseteq M_0$.}
    \label{fig:high-level-sketch}
\end{figure}

\subsubsection{Finding all elements from a set with unknown size}\label{sec:find-all}

Suppose you aim to find all elements of a finite set $X$ by querying an algorithm {\tt Samp} that samples (say) uniformly from $X$. By the coupon collector's argument, $|X|^{1 + o(d)}$ calls to {\tt Samp} suffice to succeed with overwhelming probability. 
While this might seem to require knowing $|X|$ to decide when to stop, it actually suffices to keep sampling until no new element has been observed for some time --- provided a suitable (possibly loose) upper bound on $|X|$ is known. 

\clearpage 
\begin{lemma}[Folklore]\label{lem:find-all}
    Let $X$ be a nonempty finite set and let ${\tt Samp}$ be a (classical, resp.\ quantum) algorithm that samples from $X$ in time ${\sf S}$. 
    Suppose there exists $0 < \varepsilon \leq 1$ such that, for all $x \in X$, 
    \[
        \Pr[{\tt Samp} \text{ outputs } x] \geq \frac{\varepsilon}{|X|} 
    \]
    where the probability is over the internal randomness of ${\tt Samp}$. 
    For all $\delta \in (0,1)$, there exists a (classical, resp.\ quantum) algorithm that, given a positive integer $\searchthreshold \geq \frac{1}{\varepsilon}\ln(|X|/\delta)$, outputs all elements of $X$ in time $O(\searchthreshold |X| \ln(|X|){\sf S})$ using $O(\searchthreshold |X| \ln(|X|))$ applications of ${\tt Samp}$, except with probability $\delta$. The algorithm uses $O(|X|)$ classical bits beyond the memory used by {\tt Samp}. 
    (The algorithm does not require prior knowledge of $|X|$.)  
\end{lemma}

\noindent Several arguments for this lemma are possible; the one presented below is inspired by~\cite{Iiridayn2020}. 

\begin{proof}
Let $\cA({\tt Samp}, \searchthreshold)$ be the algorithm that repeatedly calls $\tt Samp$ and maintains the set $S \subseteq X$ of distinct elements seen so far, and that outputs $S$ once no new element has been observed in the last $(|S| + 1) \searchthreshold$ calls. (This stopping rule is inspired by~\cite{Iiridayn2020}.) 
The runtime of $\cA({\tt Samp}, \searchthreshold)$ is determined by the number of calls to ${\tt Samp}$, and its additional memory usage beyond that of ${\tt Samp}$ is $O(|X|)$. 

We claim that, except with probability $\delta$, the following holds for all $s \in \{0, \dots, |X| -1\}$: immediately after the set $S$ constructed during $\cA({\tt Samp}, \searchthreshold)$ is of size $s$, a new element of $X$ is found within the next $\lceil |X| \searchthreshold / (|X| - s) \rceil$ calls to $\tt Samp$. 
Since $\lceil |X| \searchthreshold /(|X| - s) \rceil \leq (s + 1)\searchthreshold$, this implies that the stopping rule is satisfied if and only if all elements of $X$ have been found, meaning that $\cA({\tt Samp}, \searchthreshold)$ outputs all elements of $X$. 
Moreover, it implies that the total number of calls to ${\tt Samp}$ is upper bounded by 
\[\sum_{s=0}^{|X| - 1} \left\lceil \frac{|X| \searchthreshold}{|X| - s} \right\rceil + (|X| + 1)\searchthreshold \leq |X| \searchthreshold \sum_{j = 1}^{|X|} \frac{1}{j} + O(|X|\searchthreshold) = O(\searchthreshold |X| \ln(|X|))
\] 
since $\sum_{j = 1}^{|X|} \tfrac{1}{j} \leq \ln(|X|) + 1$. 

It remains to prove the claim. Fix $s \in \{0, \dots, |X| -1\}$ and consider the call of ${\tt Samp}$ (during $\cA({\tt Samp}, \searchthreshold)$) that returns the $s$-th new element of $X$. 
By assumption on $\varepsilon$, the probability that a single run of ${\tt Samp}$ returns a new element of $X$ is at least $(|X| - s)\varepsilon/|X|$, so the probability that the next $\lceil |X| \searchthreshold /(|X| - s) \rceil$ (independent) calls to $\tt Samp$ yield no new element is at most \[\left(1 - (|X| - s)\frac{\varepsilon}{|X|}\right)^{\left\lceil \frac{|X| \searchthreshold}{|X| - s} \right\rceil} \leq e^{- \varepsilon \searchthreshold}.\] 
The claim now follows from a union bound over all $s \in \{0, \dots, |X| - 1\}$, because $|X| e^{- \varepsilon \searchthreshold} \leq \delta$ by assumption on $\searchthreshold$. 
\end{proof}

\subsubsection{Relations and associated data structures}\label{sec:relation-and-data-structure-requirement}

We will facilitate the application of two-oracle search using carefully constructed relations and associated data structures. 

\begin{definition}\label{def:R-collision}
A \emph{relation} $R$ on $X\times Y$ is a subset $R\subseteq X\times Y$, equivalently viewed as a function $R:X\times Y\rightarrow\{0,1\}$. 
For every $x\in X$, we let $R(x)=\{y\in Y:(x,y)\in R\}$.
We let $R^{\transpose}\subseteq Y\times X$ be the relation defined by $(y,x)\in R^{\transpose}$ if and only if $(x,y)\in R$. 
A pair $x,x'\in X$ such that $R(x)\cap R(x')\neq\emptyset$ is called an \emph{$R$-collision}. 
\end{definition}

\noindent Note that the latter is a natural generalization (from functions to relations) of the concept of a collision. 

The standard way of accessing a function $f:X\rightarrow Y$ is through queries, meaning an algorithm is given a description of $f$ that allows the ability to efficiently compute, for any $x\in X$, the value $f(x)$. If $f$ is a 1-to-1 function, then a more powerful type of access allows the efficient computation of $f^{-1}(y)$ for any $y\in Y$. This is not always possible from a simple description of $f$, but is, for example, possible if all function values of $f$, $(x,f(x))$, are stored in a data structure, perhaps constructed during some preprocessing step. 

For relations, we can distinguish three types of (quantum) access. First, the simplest type, \emph{query access}, means that for any $(x,y)\in X\times Y$, it is possible to efficiently check if $(x,y)\in R$. 
A second type that is more analogous to evaluation of a function $f$ is \emph{forward superposition query access} to $R$, meaning we can query an oracle $\cO_{R}$ that acts, for all $x\in X$, as:  
\begin{align*}
    \ket{x}\ket{0} &\mapsto \left\{\begin{array}{ll}
    \ket{x}\sum_{y\in R(x)}\frac{1}{\sqrt{|R(x)|}}\ket{y} &\quad \mbox{if }R(x)\neq \emptyset\\
    \ket{x}\ket{\bot} &\quad \mbox{otherwise.}
    \end{array}\right. 
\intertext{The third type of access to $R$, analogous to having standard and inverse query access to $f$, is to have query access to both $\cO_{R}$ and $\cO_{R^{\transpose}}$. That is, one can not only implement a forward superposition query, but also its inverse: }  
    \ket{y}\ket{0} &\mapsto \left\{\begin{array}{ll}
    \ket{y}\sum_{x\in R^{\transpose}(y)}\frac{1}{\sqrt{|R^{\transpose}(y)|}}\ket{x} &\quad \mbox{if }R^{\transpose}(y)\neq \emptyset\\
    \ket{y}\ket{\bot} &\quad \mbox{otherwise.}
    \end{array}\right.
\end{align*}
We will always implement this third type of access by working with a data structure $D(R)$ that stores $R$ in QCRAM. 

\paragraph{Data structures for relations.} 
Specifically, we will store a relation $R\subseteq X\times Y$ in a classical data structure, denoted $D(R)$, in a way such that the following operations can be performed using $O(\log|X|+\log|Y|)$ time and classical memory: 
\begin{itemize}
    \item \textbf{Insert:} For any $(x,y)\in (X\times Y)\setminus R$, add $(x,y)$ to $D(R)$.
    \item \textbf{Lookup by $x$:} For any $x\in X$, return a pointer to an array containing all $y$ such that $(x,y)\in R$, and its size $|R(x)|$. 
    \item \textbf{Lookup by $y$:} For any $y\in Y$, return a pointer to an array containing all $x$ such that $(x,y)\in R$, and its size $|R^{\transpose}(y)|$.
\end{itemize}
To accomplish this, we store the elements $(x,y)\in R$ in two different ways: once in a keyed data structure with $x$ as the ``key'' and $y$ as an associated ``value'', and once in another keyed data structure, with $y$ as the key, and $x$ as the value. 

By storing $D(R)$ in QCRAM, we can use the ability to access the QCRAM in superposition to perform lookups in superposition.
We then call $D(R)$ a \textit{QCRAM data structure}. 

\clearpage 
\begin{lemma}\label{lem:DSlemma}
    Let $D(R)$ be a QCRAM data structure for a relation $R \subseteq X \times Y$. Then the following operations can be performed using $O(\log|X|+\log|Y|)$ time and QCRAM queries:
\begin{itemize}
    \item \textbf{\emph{Insert:}} For any $(x,y)\in (X\times Y)\setminus R$, add $(x,y)$ to $D(R)$. 
    \item \textbf{\emph{Lookup by $x$ in superposition:}} For any $x\in X$, map $$\ket{x}\mapsto \left\{\begin{array}{ll}
    \ket{x}\sum_{i=1}^{|R(x)|}\frac{1}{\sqrt{|R(x)|}}\ket{i}\ket{y_i} &\quad \mbox{if }R(x)\neq\emptyset\\
    \ket{x}\ket{\bot} &\quad \mbox{otherwise}
    \end{array}\right.$$ 
    where $\{y_1,\dots,y_{|R(x)|}\}=R(x)$.
    \item \textbf{\emph{Lookup by $y$ in superposition:}} For any $y\in Y$, map $$\ket{y}\mapsto \left\{\begin{array}{ll}
    \ket{y}\sum_{i=1}^{|R^{\transpose}(y)|}\frac{1}{\sqrt{|R^{\transpose}(y)|}}\ket{i}\ket{x_i}&\quad \mbox{if }R^{\transpose}(y)\neq\emptyset\\
    \ket{y}\ket{\bot} &\quad \mbox{otherwise}
    \end{array}\right.$$  
    where $\{x_1,\dots,x_{|R^{\transpose}(y)|}\}=R^{\transpose}(y)$.
\end{itemize}\end{lemma}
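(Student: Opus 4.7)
The plan is to reduce each of the three operations to either a classical write to the underlying data structure or a short sequence of coherent QCRAM queries and size-independent quantum gates. Since the classical keyed data structure described immediately above the lemma already supports insert, lookup-by-$x$, and lookup-by-$y$ in $O(\log|X|+\log|Y|)$ time and memory, and since QCRAM by definition permits classical writes at the same cost as classical RAM, the insert operation requires no further argument: one simply inserts $(x,y)$ into both of the keyed tables that together make up $D(R)$, taking $O(\log|X|+\log|Y|)$ steps in total.

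For the lookup-by-$x$ in superposition, I would fix the QCRAM layout so that the table keyed by $x$ stores, for each $x$ appearing in $R$, a pair $(p(x),|R(x)|)$, while the actual value list $y_1,\dots,y_{|R(x)|}$ lives contiguously in a separate memory region starting at address $p(x)$. Given the input register $\ket{x}$, the algorithm proceeds in three coherent stages. First, $O(\log|X|)$ QCRAM queries coherently fetch $(p(x),|R(x)|)$ into a fresh ancilla register, or a sentinel symbol $\bot$ when $x$ is absent from the table. Second, conditioned on the size register not being $\bot$, prepare the uniform superposition $\frac{1}{\sqrt{|R(x)|}}\sum_{i=1}^{|R(x)|}\ket{i}$ on a new index register using only $O(\log|Y|)$ elementary gates and no QCRAM queries. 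Third, $O(\log|Y|)$ further QCRAM queries fetch the value at memory address $p(x)+i-1$ into an output register, yielding $\ket{y_i}$. The ancilla $(p(x),|R(x)|)$ register is then uncomputed by running the first stage in reverse, producing the state claimed in the lemma.

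The one substantive subroutine is the second stage: coherently preparing $\frac{1}{\sqrt{N}}\sum_{i=1}^{N}\ket{i}$ when $N$ itself lives in a quantum register and need not be a power of two. This is a well-known primitive that can be implemented in $O(\log N)$ gates; a clean way is to start from a Hadamard-uniform superposition over $\{0,\dots,2^{\lceil \log N\rceil}-1\}$ and apply a constant number of rounds of fixed-point amplitude amplification (\Cref{lem:amplitude-amplification}) with marking predicate $1\le i\le N$, using an arithmetic comparator as the check routine. Since $|R(x)|\le|Y|$ and hence $\log|R(x)|=O(\log|Y|)$, both the gate count and the (zero) QCRAM-query count of this step fit comfortably within the stated budget, and the error incurred can be driven below any exponentially small $\delta$ with only logarithmic overhead, which we absorb in the constant factor.

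Finally, the lookup-by-$y$ in superposition is fully symmetric: the same three-stage recipe is applied to the second keyed table, which stores $(q(y),|R^{-1}(y)|)$ and has the $x_i$ values stored contiguously starting at $q(y)$. Combining the three parts, each of the three operations runs in $O(\log|X|+\log|Y|)$ time and uses $O(\log|X|+\log|Y|)$ QCRAM queries, as claimed. The only step that is not essentially a direct lift of the classical procedure is the coherent preparation of a uniform superposition of a data-dependent size, and this is the point that will require the most care to present cleanly.
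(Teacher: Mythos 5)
Your proposal follows essentially the same route as the paper's proof: use the two keyed tables of $D(R)$ to coherently fetch the pointer and size $(p(x),|R(x)|)$, prepare a uniform superposition over indices $\{1,\dots,|R(x)|\}$, and then do a coherent QCRAM dereference of $p(x)+i-1$, with insert handled purely classically and lookup-by-$y$ symmetric. The only place you add substance beyond the paper's terse ``generate $\sum_{i=1}^{N}\frac{1}{\sqrt{N}}\ket{i}$'' is the choice of subroutine for preparing a uniform superposition of data-dependent size $N$; your fixed-point amplitude amplification against a comparator works because $N/2^{\lceil\log N\rceil}\ge 1/2$ so a data-independent number of rounds suffices, though note that driving the error down to $\delta$ costs $\Theta(\log(1/\delta))$ rounds, not a ``constant factor'': for exponentially small $\delta$ this is $\Theta(d)$, which still fits $O(\log|X|+\log|Y|)$ when $|X|,|Y|=2^{\Theta(d)}$ but is not negligible. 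A cleaner alternative that avoids any approximation error is the standard sequence of $O(\log N)$ controlled rotations (with angles read off the bits of the size register), which prepares $\frac{1}{\sqrt{N}}\sum_{i=1}^{N}\ket{i}$ exactly and coherently in $N$.
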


\noindent In later sections, we will abuse notation by letting $\ket{x,y}$ denote $\ket{x,i,y}$, where $i$ is the index of $y$ in the array storing $R(x)$, and sometimes we will even let $\ket{x,y,x'}$ denote $\ket{x,i,y,j,x'}$ where $i$ is as above, and $j$ is the index of $x'$ in the array storing $R^{\transpose}(y)$. This is not a problem, as all we require from the specific encoding of $\ket{x,y}$ is that we can do computations on both~$x$ and~$y$ --- it does not matter if there is superfluous information in the encoding, as long as it is not too large. 

\begin{proof}
The insertion is directly inherited from the data structure. We describe a superposition lookup of $x$ (the case for $y$ is virtually identical). The classical lookup by $x$ returns a number $N=|R(x)|$, and a pointer to an array storing $R(x) = \{y_1, \dots,y_N\}$. If $N\neq 0$, then generate $\sum_{i=1}^N\frac{1}{\sqrt{N}}\ket{i}\ket{0}$. Use a QCRAM query to access the entries of the array, to map $\ket{i}\ket{0}\mapsto \ket{i}\ket{y_i}$.
The time and memory complexities follow immediately from the properties of $D(R)$. 
\end{proof}

\subsection{Geometric properties of the unit sphere}\label{sec:sphere} 


We denote the sphere of the $d$-dimensional unit ball by $\cS^{d-1}\coloneqq\{\vx \in\R^d \colon \norm{\vx} = 1\}$. A unit vector $\vx\in\cS^{d-1}$ and angle $\alpha \in [0,\pi/2]$ define the set $\cH_{\vx,\alpha}\coloneqq\{\vc \in\cS^{d-1} \colon \innerP{\vx}{\vc} \geq \cos(\alpha)\}$, called the \textit{spherical cap} of center $\vx$ and angle $\alpha$. 
Note that $\cH_{\vx,0} = \{\vx\}$ and that $\cH_{\vx,\pi/2}$ is a hemisphere. 
The intersection of two spherical caps forms a \textit{spherical wedge}, and is denoted by $\cW_{\vx,\alpha,\vy,\beta}\coloneqq \cH_{\vx,\alpha} \cap \cH_{\vy,\beta}$ for $\vx,\vy\in\cS^{d-1}$ and $\alpha, \beta \in [0,\pi/2]$. 
If $\innerP{\vx}{\vy}=\cos(\theta)$ for $\theta \in [0, \pi/2]$, then  
\begin{equation}
\begin{split}
    \cW_{\vx,\alpha,\vy,\beta} = \begin{cases}
            \cH_{\vx, \alpha} &\quad\text{if  $\theta \leq \beta - \alpha$} \\ 
            \cH_{\vy, \beta} &\quad\text{if  $\theta \leq \alpha - \beta$} \\ 
            \emptyset &\quad\text{if $\theta > \alpha + \beta$}
        \end{cases}
        \label{eq:wedge-edge-cases}
\end{split}
\end{equation} 
and in the remaining case ($|\alpha - \beta| < \theta \leq \alpha + \beta$) the wedge is nonempty and properly contained in both $\cH_{\vx, \alpha}$ and $\cH_{\vy, \beta}$. 

The volume (or hyperarea) of a spherical cap or wedge allows us to quantify the probability that a random unit vector is close to one or two given vectors, respectively.   
Given $\vx,\vy\in\cS^{d-1}$ with $\innerP{\vx}{\vy}=\cos(\theta)$, we denote the ratio of the volume of $\cW_{\vx,\alpha,\vy,\beta}$ to that of $\cS^{d-1}$ by $\cW(\alpha,\beta\mid\theta)$, as it depends on $\vx,\vy$ only through their mutual angle $\theta$. 
This normalized wedge volume equals the probability that, for a fixed pair $(\vx,\vy)$ of unit vectors satisfying $\innerP{\vx}{\vy}=\cos(\theta)$, a uniformly random unit vector lies in the wedge $\cW_{\vx,\alpha,\vy,\beta}$.  

The normalized volume of a spherical cap is thus of the form $\cW(\alpha, \alpha \mid 0)$, and can be estimated using the following lemma from~\cite{BDGL16}. 

\begin{lemma}[Volume of a spherical cap {\cite[Lemma~2.1]{BDGL16}}]\label{lem: cap volume} 
    Let $\alpha \in (0,\pi/2)$. 
    For all $\vx \in \cS^{d-1}$, 
    \begin{equation*}
        \cW(\alpha, \alpha \mid 0) \coloneqq \Pr_{\vc \sim \cU(\cS^{d-1})}\left[\innerP{\vx}{\vc} \geq  \cos(\alpha)\right] = d^{\,\Theta(1)}  \underbrace{\left(1 - \cos^2(\alpha)\right)^{d/2}}_{\eqqcolon\, p_{\alpha}}.
    \end{equation*} 
\end{lemma}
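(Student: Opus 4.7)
The plan is to reduce the volume of the spherical cap to an elementary one-dimensional integral and then apply a Laplace-type estimate.

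First, I would use rotational invariance of the uniform distribution on $\cS^{d-1}$ to reduce to the case $\vx = \vec{e}_1$. Parameterizing the sphere by the polar angle $\theta \in [0,\pi]$ between a point and $\vec{e}_1$, the surface measure factors as $A_{d-2}(\sin\theta)^{d-2}\,d\theta$, where $A_{d-2}$ is the $(d-2)$-dimensional hyperarea of $\cS^{d-2}$. The target probability then becomes
\begin{equation*}
\Pr_{\vc \sim \cU(\cS^{d-1})}[\langle\vx,\vc\rangle \geq \cos\alpha] \;=\; \frac{\int_0^\alpha (\sin\theta)^{d-2}\,d\theta}{\int_0^\pi (\sin\theta)^{d-2}\,d\theta}.
\end{equation*}

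Next, I would handle the denominator and numerator separately. The denominator is a standard Beta integral, equal to $\sqrt{\pi}\,\Gamma((d-1)/2)/\Gamma(d/2)$, which is $\Theta(1/\sqrt{d})$ by Stirling, so it only contributes a $\poly(d)$ factor. For the numerator, I would substitute $u = \sin\theta$ to rewrite it as $\int_0^{\sin\alpha} u^{d-2}(1-u^2)^{-1/2}\,du$. For the upper bound, I would use $(1-u^2)^{-1/2}\leq (\cos\alpha)^{-1}$ on the interval $[0,\sin\alpha]$ and integrate, obtaining $(\sin\alpha)^{d-1}/((d-1)\cos\alpha)$. For the matching lower bound, I would restrict the integral to a small interval such as $[\sin\alpha-1/d,\sin\alpha]$, use $u^{d-2} \geq (\sin\alpha)^{d-2}(1-1/(d\sin\alpha))^{d-2} = \Omega((\sin\alpha)^{d-2})$, and note that $(1-u^2)^{-1/2}\geq 1$.

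Combining the two estimates gives that the ratio equals $\poly(d)\cdot(\sin\alpha)^{d-1}$, which, absorbing the constant factor $1/\sin\alpha$ into the polynomial prefactor (justified since $\alpha$ is fixed and bounded away from $0$ and $\pi/2$), equals $\poly(d)\cdot (1-\cos^2\alpha)^{d/2} = \poly(d)\cdot p_\alpha$. The only mildly delicate point is ensuring the lower bound on the numerator is polynomial rather than merely subexponential; this is a routine Laplace-type computation since the integrand is sharply peaked at $u=\sin\alpha$, but one must pick the interval length ($\Theta(1/d)$ suffices) carefully so that the factor $(1-1/(d\sin\alpha))^{d-2}$ remains bounded below by a constant. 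Everything else — the rotational reduction, the change of variables, and the Stirling estimate for the denominator — is entirely mechanical.
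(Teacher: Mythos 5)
Your proof is correct. Note that the paper does not prove this lemma at all---it cites it as Lemma~2.1 of~\cite{BDGL16}---so there is no internal proof to compare against. Your argument is the standard elementary derivation: rotational invariance reduces to a one-dimensional integral in the polar angle, the surface-measure factorization gives the ratio of $\int_0^\alpha \sin^{d-2}\theta\,d\theta$ to the full Wallis integral, the denominator is $\Theta(d^{-1/2})$ by Stirling, and the numerator is sandwiched between $\Omega\big((\sin\alpha)^{d-2}/d\big)$ (by restricting to the interval $[\sin\alpha - 1/d, \sin\alpha]$, on which $(1-1/(d\sin\alpha))^{d-2}$ stays bounded below) and $O\big((\sin\alpha)^{d-1}/d\big)$ (by bounding $(1-u^2)^{-1/2}\le 1/\cos\alpha$). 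All the numerics check out, and the final absorption of the $\alpha$-dependent constants ($\sin\alpha$, $\cos\alpha$, $e^{-1/\sin\alpha}$) into the $\poly(d)$ prefactor is legitimate under the implicit convention of the lemma that $\alpha$ is a fixed constant and the $\poly(d)$ factor is a two-sided bound that may depend on $\alpha$. The one point worth making explicit is that $\sin\alpha - 1/d > 0$ only for $d > 1/\sin\alpha$; for smaller $d$ the bound can be absorbed into the polynomial factor anyway. Overall this is a clean, self-contained replacement for the citation.
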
 

\noindent In~\cite{BDGL16}, an estimate for the normalized volume of a spherical wedge was also given, but it was missing the condition $|\alpha - \beta| \leq \theta \leq \alpha + \beta$, which is necessary for their formula to hold. 
Note that the other cases follow directly from~\Cref{eq:wedge-edge-cases} and~\Cref{lem: cap volume}: $\cW(\alpha, \beta \mid \theta) = 0$ if $\theta > \alpha + \beta$, and $\cW(\alpha, \beta \mid \theta) =_d \min\{p_\alpha, p_\beta\}$ if $\theta \geq |\alpha - \beta|$. (When $\theta = |\alpha - \beta|$, \Cref{lem: cap volume} and~\Cref{lem: wedge volume} coincide.) More detailed formulas for the wedge volume can be found in~\cite{techreport:LK2014}.  

\begin{lemma}[Volume of a spherical wedge (correction of~{\cite[Lemma~2.2]{BDGL16}})]\label{lem: wedge volume} 
    Let $\alpha, \beta, \theta \in (0,\pi/2)$ satisfy $|\alpha - \beta| \leq \theta \leq \alpha + \beta$.  
    For all $\vx, \vy \in \cS^{d-1}$ with $\innerP{\vx}{\vy} = \cos(\theta)$, 
    \begin{align*}
        \cW(\alpha, \beta \mid \theta) &\coloneqq \Pr_{\vc \sim \cU(\cS^{d-1})}\left[\innerP{\vx}{\vc} \geq \cos(\alpha) \text{ and } \innerP{\vy}{\vc} \geq  \cos(\beta)\right] = d^{\,\Theta(1)}  (1 - \gamma^2)^{d/2}
        \intertext{where $\gamma^2 \coloneqq \frac{\cos^2(\alpha) + \cos^2(\beta) - 2\cos(\alpha)\cos(\beta)\cos(\theta)}{\sin^2(\theta)}$. 
        In particular, if $\alpha = \beta$,}
        \cW(\alpha, \alpha \mid \theta) &\coloneqq \Pr_{\vc \sim \cU(\cS^{d-1})}\left[\innerP{\vx}{\vc} \geq \cos(\alpha) \text{ and } \innerP{\vy}{\vc} \geq  \cos(\alpha)\right] = d^{\,\Theta(1)}  \left(1 - \frac{2\cos^2(\alpha)}{1 + \cos(\theta)}\right)^{d/2}.
    \end{align*}
\end{lemma}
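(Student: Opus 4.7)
My first step would be to exploit the rotational invariance of the uniform measure on $\cS^{d-1}$ to fix a convenient choice of $\vx$ and $\vy$. Any rotation $R \in \mathrm{SO}(d)$ preserves both the hypothesis $\langle \vx, \vy \rangle = \cos(\theta)$ and the measure, so I may take $\vx = e_1$ and $\vy = \cos(\theta)\, e_1 + \sin(\theta)\, e_2$. The wedge $\cW_{\vx,\alpha,\vy,\beta}$ then becomes the set of $\vc \in \cS^{d-1}$ whose first two coordinates $(u,v) := (c_1, c_2)$ lie in the planar region
$$P := \{(u,v) \in \R^2 : u \geq \cos(\alpha),\ \cos(\theta)\, u + \sin(\theta)\, v \geq \cos(\beta)\}.$$
Using the standard formula for the marginal density of the first two coordinates of a uniform point on $\cS^{d-1}$, which is proportional to $(1 - u^2 - v^2)^{(d-4)/2}$ on the open unit disk $B_2$, I would write
$$\cW_d(\alpha,\beta \mid \theta) \;=\; \frac{1}{Z_d}\int_{P \cap B_2} (1 - u^2 - v^2)^{(d-4)/2}\, du\, dv,$$
where the normalizing constant $Z_d$ is a Beta-function value that is $\poly(d)$.

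The core of the proof is then a Laplace-type estimate. The integrand can be written as $\exp\bigl(\tfrac{d-4}{2}\ln(1-u^2-v^2)\bigr)$, so on $P \cap B_2$ it is maximized at the point closest to the origin. I would solve the quadratic program $\min_{(u,v)\in P} u^2 + v^2$ by Lagrange multipliers (or by elementary geometry): the unique minimizer is the corner where both linear constraints are tight, namely $u^\star = \cos(\alpha)$ and $v^\star = (\cos(\beta) - \cos(\alpha)\cos(\theta))/\sin(\theta)$, and a direct computation using $\sin^2(\theta) + \cos^2(\theta) = 1$ gives $(u^\star)^2 + (v^\star)^2 = \gamma^2$, matching the claimed formula. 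Next I would verify that the well-definedness hypothesis puts $\gamma^2$ into a compact subinterval $[\kappa',\, 1-\kappa']$ of $(0,1)$, so the corner lies strictly inside $B_2$ and the exponent is smooth and non-degenerate there. A standard saddle-point argument then applies: one localizes the integral to a $\poly(d)$-sized neighborhood of $(u^\star, v^\star)$ (of radius $\Theta(1/\sqrt d)$), Taylor-expands the log-integrand to second order, and evaluates the resulting Gaussian in the translated cone of constraints to obtain a $\poly(d)$ prefactor; the integral outside this neighborhood is exponentially smaller by the convexity of $u^2+v^2$. Combining yields $\cW_d(\alpha,\beta\mid\theta) = \poly(d)\cdot(1-\gamma^2)^{d/2}$.

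The special case $\alpha = \beta$ follows by substituting into the general formula and simplifying with $\sin^2(\theta) = (1-\cos(\theta))(1+\cos(\theta))$, giving $\gamma^2 = 2\cos^2(\alpha)/(1+\cos(\theta))$; the alternative well-definedness condition (2) is precisely what asserts this $\gamma^2$ is bounded away from $1$. I expect the main technical obstacle to be the Laplace estimate at the corner, since the feasible region has a non-smooth boundary there rather than being a full neighborhood of the minimizer; however this is handled routinely either by changing coordinates aligned with the two active constraints (reducing the estimate to a product of one-dimensional Laplace integrals on half-lines) or by comparing to a Gaussian integral over a cone, both of which only affect the prefactor by a constant. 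The role of the parameter $\kappa'$ in the well-definedness hypothesis is exactly to rule out degeneracies (either $\gamma^2 \to 0$, in which case the minimizer escapes to the boundary of $B_2$ and the subexponential factors stop being polynomial, or $\gamma^2 \to 1$, where the wedge becomes a vanishingly thin set and the Gaussian approximation breaks).
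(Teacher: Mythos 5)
The paper does not give its own proof of this lemma; it cites \cite{BDGL16} (Lemma~2.2) and inherits the statement. So there is no in-paper proof to compare against, but your approach (rotational invariance to reduce to the first two coordinates, marginal density proportional to $(1-u^2-v^2)^{(d-4)/2}$, then a two-dimensional Laplace estimate anchored at the point of $P$ closest to the origin) is the standard route and is the one taken in \cite{BDGL16}. Your computation of the corner, the algebra giving $(u^\star)^2+(v^\star)^2=\gamma^2$, the treatment of the Laplace estimate at a polyhedral corner, and the $\alpha=\beta$ specialization are all fine.

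There is, however, a genuine gap: you assert without justification that the minimizer of $u^2+v^2$ over $P$ is the corner where both linear constraints are tight. The projection of the origin onto $P$ sits at that corner precisely when both Lagrange multipliers are nonnegative, which unwinds to the two inequalities $\cos(\beta)\ge\cos(\alpha)\cos(\theta)$ and $\cos(\alpha)\ge\cos(\beta)\cos(\theta)$. When one of these fails — say $\cos(\beta)<\cos(\alpha)\cos(\theta)$ — the optimum moves to the boundary point $(\cos(\alpha),0)$, the optimal value is $\cos^2(\alpha)<\gamma^2$, and the wedge volume is actually $\poly(d)(1-\cos^2(\alpha))^{d/2}$, which is exponentially larger than the claimed $\poly(d)(1-\gamma^2)^{d/2}$. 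Note that the ``well-definedness'' hypothesis as written in the paper ($\kappa'\le\gamma^2\le1-\kappa'$) does \emph{not} rule this out: for instance $\cos(\alpha)=1/\sqrt{2}$, $\cos(\theta)=1/2$, $\cos(\beta)=1/10$ gives $\gamma^2\approx0.59\in(\kappa',1-\kappa')$, yet the closest point of $P$ is $(\cos(\alpha),0)$ with squared norm $1/2$. You should therefore either state the two corner inequalities as additional hypotheses, or note explicitly that they hold automatically when $\alpha=\beta$ (both reduce to $\cos(\theta)\le1$) and for the parameter regimes used in the paper's applications of the lemma — which is why the paper can get away with the weaker hypothesis.
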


\noindent The previous two lemmas allow us to estimate the probability that a random unit vector $\vc$ satisfies both $\innerP{\vx}{\vc} \geq \cos(\alpha)$ and $\innerP{\vy}{\vc} \geq  \cos(\beta)$, for fixed $\vx,\vy \in \cS^{d-1}$. In our analysis, we instead deal with events where those inner products are equal to $\cos(\alpha)$ (respectively,  $\cos(\beta)$) up to an additive~$\epsapprox > 0$.   
We therefore consider the following variants of~\Cref{lem: cap volume} and~\Cref{lem: wedge volume}. These results are folklore and follow from the previous lemmas, provided that~$\epsapprox$ is chosen sufficiently small (see~\Cref{app:proofs-approx-lemmas} for the proofs). Throughout this work, we globally fix $\epsapprox = 1/(\log d)^2$. 

\begin{lemma}[Volume of a spherical cap, approximate version]\label{lem: approx cap volume} 
    Let $\alpha \in (0, \pi/2)$ satisfy $\alpha = \Omega(1)$. 
    For all $\vx \in \cS^{d-1}$, 
    \begin{equation*}
        \Pr_{\vc \sim \cU(\cS^{d-1})}\left[\innerP{\vx}{\vc} \approx_\epsapprox  \cos(\alpha)\right] =_d p_{\alpha}.
    \end{equation*}
\end{lemma}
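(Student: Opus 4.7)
The plan is to express the event as a difference of two spherical cap probabilities and invoke \Cref{lem: cap volume}. Writing $c := \cos\alpha$ and defining $\alpha_\pm \in (0,\pi/2)$ by $\cos\alpha_\pm = c \mp \epsapprox$ (so $\alpha_- > \alpha > \alpha_+$), and using that lower-dimensional boundaries have measure zero on $\cS^{d-1}$, one has
\[
\Pr_{\vc \sim \cU(\cS^{d-1})}[\langle\vx,\vc\rangle \approx_\epsapprox c] \;=\; \cW_d(\alpha_-, \alpha_- \mid 0) - \cW_d(\alpha_+, \alpha_+ \mid 0).
\]
Since $\alpha \in [\kappa,\pi/2-\kappa]$ and $\epsapprox = 1/\log^2(d) \to 0$, for all sufficiently large $d$ both $\alpha_\pm$ lie in $[\kappa/2, \pi/2 - \kappa/2]$, so \Cref{lem: cap volume} applies and gives $\cW_d(\alpha_\pm, \alpha_\pm \mid 0) = \poly(d) \cdot (1-(c \mp \epsapprox)^2)^{d/2}$, where ``$\poly(d)$'' is read as an expression bounded both above and below by polynomials in $d$.

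The key algebraic estimate is $(1 - (c \pm \epsapprox)^2)^{d/2} = p_\alpha \cdot 2^{\pm o(d)}$. Since $c$ is bounded away from $\pm 1$ by constants depending only on $\kappa$, one has $1 - (c \pm \epsapprox)^2 = (1-c^2)\bigl(1 \mp \tfrac{2c\epsapprox \pm \epsapprox^2}{1-c^2}\bigr)$ with the bracketed factor equal to $1 + O(\epsapprox)$; raising to the $d/2$-th power and using $\epsapprox d = d/\log^2(d) = o(d)$ yields the factor $2^{\pm O(\epsapprox d)} = 2^{\pm o(d)}$. The upper bound then follows immediately, since the band is contained in the larger cap:
\[
\Pr[\langle\vx,\vc\rangle \approx_\epsapprox c] \;\leq\; \cW_d(\alpha_-, \alpha_- \mid 0) \;=\; \poly(d) \cdot p_\alpha \cdot 2^{o(d)} \;\leq_d\; p_\alpha.
\]

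The lower bound is the main obstacle, since naively subtracting the two asymptotic expressions from \Cref{lem: cap volume} could in principle suffer from catastrophic cancellation. The way around this is to observe that the inner cap is actually a \emph{vanishing} fraction of the outer cap: using $(1-(c+\epsapprox)^2) - (1-(c-\epsapprox)^2) = -4c\epsapprox$, the ratio equals
\[
\frac{\cW_d(\alpha_+, \alpha_+ \mid 0)}{\cW_d(\alpha_-, \alpha_- \mid 0)} \;=\; \poly(d) \cdot \left(1 - \tfrac{4c\epsapprox}{1-(c-\epsapprox)^2}\right)^{d/2} \;\leq\; \poly(d) \cdot e^{-\Omega(d/\log^2(d))},
\]
which is less than $1/2$ for all sufficiently large $d$, since the exponential decays faster than any polynomial in $d$. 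Hence the band probability is at least $\tfrac12 \cW_d(\alpha_-, \alpha_- \mid 0) \geq_d p_\alpha$, using $1 - (c-\epsapprox)^2 \geq 1 - c^2$ for constant $c > 0$ and small $\epsapprox$. Combining the two bounds yields $\Pr[\cdot] =_d p_\alpha$, as desired.
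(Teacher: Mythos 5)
Your proof is correct and follows essentially the same route as the paper's: both write the band probability as a difference of two cap probabilities from~\Cref{lem: cap volume}, establish the upper bound by containment in the larger cap, and establish the lower bound by showing the smaller cap is a vanishing fraction of the larger one (the ratio decays like $e^{-\Theta(d/\log^2 d)}$, beating any polynomial factor). The paper uses $\Pr[\langle\vx,\vc\rangle\geq\cos\alpha]$ rather than $\Pr[\langle\vx,\vc\rangle\geq\cos\alpha-\epsapprox]$ as the dominant term in the lower bound, but this is a cosmetic difference in an otherwise identical argument.
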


\begin{lemma}[Volume of a spherical wedge, approximate version]\label{lem: approx wedge volume}  
    Let $\alpha, \beta, \theta \in (0, \pi/2)$ satisfy $|\alpha - \beta| + \Omega(1) \leq \theta \leq \alpha + \beta - \Omega(1)$. 
    For all $\vx, \vy \in \cS^{d-1}$ with $\innerP{\vx}{\vy} \approx_\epsapprox \cos(\theta)$, 
    \begin{equation*}
        \Pr_{\vc \sim \cU(\cS^{d-1})}\left[\innerP{\vx}{\vc} \approx_\epsapprox \cos(\alpha) \text{ and } \innerP{\vy}{\vc} \approx_\epsapprox \cos(\beta)\right] =_d \cW(\alpha, \beta \mid \theta).
    \end{equation*} 
\end{lemma}

\noindent The~$\Omega(1)$ terms in these lemmas hide a (small) absolute constant $c>0$ that is independent of~$d$. They ensure that the non-approximate lemmas remain applicable as $d \to \infty$, which we use to prove~\Cref{lem: approx cap volume} and~\Cref{lem: approx wedge volume}. 
In this work, we mainly consider normalized wedge volumes of the form $\cW(\alpha, \alpha \mid \theta)$ or $\cW(\theta, \alpha \mid \alpha)$, in which case the condition in~\Cref{lem: approx wedge volume} simplifies to $\min\{\theta, 2\alpha - \theta\} = \Omega(1)$. 

\subsection{Random product codes and their induced relations}\label{sec:RPC}

The main tasks in our quantum algorithm for finding 3-tuple solutions can be formulated as searching for pairs $(\vx,\vy)$ of unit vectors that are somewhat ``close'' to each other, in the sense that we can bound their inner product. We simplify these tasks by only searching for pairs that form an $R$-collision under carefully constructed relations of the form $R = R_{(C,\alpha)}$, where 
\begin{equation*}
    R_{(C,\alpha)} \coloneqq \{(\vx,\vc) \in \cS^{d-1} \times C \colon \innerP{\vx}{\vc} \approx_\epsapprox \cos(\alpha)\}
\end{equation*}
for a subset $C \subseteq \cS^{d-1}$ and $\alpha \in (0,\pi/2)$. 
Consequently, if $(\vx,\vy)$ form an $R_{(C,\alpha)}$-collision (i.e., $R_{(C,\alpha)}(\vx) \cap R_{(C,\alpha)}(\vy) \neq \emptyset$), then there is a point $\vc \in C$ that is close to both $\vx$ and $\vy$, implying that $\vx$ and $\vy$ are also close to each other, where closeness is quantified by the parameter $\alpha$. When the dependencies on $C$ and $\alpha$ are clear from context, we often just write $R$. 
Note also that those relations $R_{(C,\alpha)}$ are infinite objects, but we will usually restrict them to a finite subset in the first coordinate, for instance the vectors in our list~$L$, making the relation finite. 

The family of subsets $C$ that we will work with are based on random product codes~\cite{BDGL16}. 

\begin{definition}[Random product code]\label{defn:RPC}
    We define $\cC(d,b,M)$ as the family of sets $C \subseteq \cS^{d-1}$ that can be written as $$C = \mQ(C^{(1)} \times \dots \times C^{(b)})$$ 
    where $\mQ \in \mathrm{SO}(d)$, and $C^{(i)} \subseteq \frac{1}{\sqrt{b}}\cS^{d/b - 1}$ with $|C^{(i)}| = M^{1/b}$ for each $i \in \{1,\dots,b\}$. 
    Any such tuple $(\mQ, C^{(1)}, \dots, C^{(b)})$ is called a \emph{description} of $C$.

    A \emph{random product code (RPC)} is a random set $C \in \cC(d,b,M)$ obtained by sampling a uniformly random description $(\mQ, C^{(1)}, \dots, C^{(b)})$ from the set of all valid descriptions. 
    We write $\mathrm{RPC}(d,b,M)$ for the resulting distribution over $\cC(d,b,M)$. 
\end{definition} 

\noindent Random product codes~$C$ have two very useful properties. For a parameter $\alpha$ and induced relation $R = R_{(C,\alpha)}$, we have: 
\begin{enumerate}[label=(\arabic*)]
    \item \textbf{Efficient decodability:} In certain parameter regimes (in particular, if $b$ is not too small), there is an algorithm that, on input $\vx \in \cS^{d-1}$, computes the set $R(\vx)$ in time roughly equal to its size $|R(\vx)|$. 
    See~\Cref{lem:-efficient-decodability-RPCs}. Note that this algorithm gives forward superposition query access to $R$ (see \Cref{sec:relation-and-data-structure-requirement}). 
    \item \textbf{Random behavior:} In certain parameter regimes (in particular, if $b$ is not too large), a random product code $C$ behaves like a uniformly random subset of $\cS^{d-1}$ in the following sense: for all $\vx, \vy \in \cS^{d-1}$ satisfying $\innerP{\vx}{\vy} \approx_\epsapprox \cos(\theta)$, the probability that there exists $\vc \in C$ satisfying $\innerP{\vx}{\vc} \approx_\epsapprox \cos(\alpha)$ and $\innerP{\vy}{\vc} \approx_\epsapprox \cos(\alpha)$ (meaning that $(\vx,\vy)$ is an $R$-collision) is the same,  up to subexponential factors, as in the case that each element of $C$ was independently sampled from $\cU(\cS^{d-1})$. See~\Cref{lem:random-behavior-RPCs}.
\end{enumerate}
In other words, RPCs give us sufficiently good random behavior, while still allowing for efficient decodability, which is the main reason we work with RPCs instead of uniformly random subsets. 

The next two lemmas\footnote{The proofs in~\cite{BDGL16} consider $R_{(C,\alpha)}(\vx)$ defined as $\{\vc \in C \colon \innerP{\vx}{\vc} \geq \cos(\alpha)\}$, but can easily be seen to work for our definition (with $\approx_\epsapprox$ instead of $\geq$).} show that it suffices to take $b = \log d$ for both properties to be satisfied, so we fix this choice of $b$ in the remainder of this paper.  

\begin{lemma}[Efficient decodability (implicit in {\cite[Lemma 5.1]{BDGL16}})]\label{lem:-efficient-decodability-RPCs}  
    There exists a classical algorithm that, given a description $(\mQ, C^{(1)}, \dots, C^{(b)})$ of $C \in \cC(d,b,M)$ and a target vector $\vx \in \cS^{d-1}$, returns the set $R_{(C,\alpha)}(\vx) \coloneqq \{\vc \in C \colon \innerP{\vx}{\vc} \approx_\epsapprox \cos(\alpha)\}$ in time $O(d^2 M^{1/b} + dM^{1/b} \log M +  bd  |R_{(C,\alpha)}(\vx)|)$.  
    In particular, if $b = \omega(1)$ and $M = 2^{O(d)}$, then the runtime is $O(bd  |R_{(C,\alpha)}(\vx)|) + 2^{o(d)}$.  
\end{lemma}

\noindent Next, \Cref{lem:random-behavior-RPCs} provides sufficiently tight bounds on the probability that two unit vectors form a collision under the relation $R_{(C,\alpha)}$ induced by a sample $C \sim \mathrm{RPC}(d,b,M)$. Specifically, it identifies parameter regimes where RPCs behave similarly to uniformly random subsets with respect to collision probabilities, as $\Pr_{C \sim \cU(\cS^{d-1},M)}[R_{(C,\alpha)}(\vx) \cap R_{(C,\beta)}(\vy) \neq \emptyset] = \Theta(\min\{1, M \cW(\alpha, \beta \mid \theta)\})$.

\begin{lemma}[Random behavior {\cite[Theorem 5.1]{BDGL16}}]\label{lem:random-behavior-RPCs}  
    Let $\alpha, \beta \in (0, \pi/2)$ and $\theta \in [0, \pi/2)$ satisfy $|\alpha - \beta| + \Omega(1) \leq \theta \leq \alpha + \beta - \Omega(1)$ if $\theta \neq 0$, and $\alpha = \beta = \Omega(1)$ otherwise.  
    Let $b = O(\log d)$ and let $M$ be such that $M \cdot \cW(\alpha, \beta \mid \theta) \geq 2^{-O(d)}$. 
    For all $\vx, \vy \in \cS^{d-1}$ with $\innerP{\vx}{\vy} \approx_\epsapprox \cos(\theta)$, 
    \begin{align*}
        \Pr_{C \sim \mathrm{RPC}(d,b,M)}\left[R_{(C,\alpha)}(\vx) \cap R_{(C,\beta)}(\vy) \neq \emptyset\right] =_d \min\{1, M \cdot \cW(\alpha, \beta \mid \theta)\}.
    \end{align*} 
\end{lemma}

\noindent Note that the case $\theta = 0$ deals with the probability that $R_{(C,\alpha)}(\vx)$ is nonempty (meaning there exists $\vc \in C$ that is ``close'' to $\vx = \vy$).

\section{Quantum algorithm for finding many 3-tuple solutions}\label{sec:main-quantum-algorithm}

In this section, we present a quantum algorithm for~\Cref{prob:finding-many-3-tuples} from the introduction: given a list $L$ of $m$ i.i.d.\ uniform samples from $\cS^{d-1}$, this algorithm returns $m$ triples $(\vx, \vy, \vz) \in L^3$ satisfying $\norm{\vx - \vy - \vz} \leq 1$. 
We are specifically interested in instances with $m = (27/16)^{d/4 +o(d)}$. For sufficiently large ${o(d)}$, this is the minimal list size to ensure that with high probability over the choice of $L$ there exist $m$ 3-tuple solutions~\cite[Theorem~3]{HK2017}, and hence corresponds to the minimal memory regime of~\Cref{prob:finding-many-3-tuples}. 
Our work is motivated by the observation that, for a list size $m$ that is slightly larger, but still asymptotically $m = (27/16)^{d/4 +o(d)}$, there exist in fact $m$ 3-tuple solutions $(\vx,\vy,\vz) \in L^3$ for which $\innerP{\vx}{\vy}$ is essentially $1/3$ and $\innerP{\vx-\vy}{\vz}$ is essentially $2/3$. This allows us to reduce our search problem to (a less simplified version of)~\Cref{prob:finding-many-3-tuples-relaxed} from the introduction.
More precisely, as proven in~\Cref{lem:size-of-Tsol-application}, there exist $\theta,\theta' \in (0,\pi/2)$ such that  
\begin{equation} 
    \cT_{\mathrm{sol}}(L, \theta,\theta') \coloneqq \left\{(\vx, \vy, \vz)  \in  L^3 \colon \innerP{\vx}{\vy} \approx_\epsapprox \cos(\theta),  \innerP{\tfrac{\vx-\vy}{\norm{\vx-\vy}}}{\vz} \approx_\epsapprox \cos(\theta')\right\} \label{eq:T-sol}
\end{equation} 
consists only of 3-tuple solutions and (with high probability over the choice of $L$) has size at least $m$.
We therefore design a quantum algorithm that finds $m$ elements of $\cT_{\mathrm{sol}}(L, \theta,\theta')$, thereby solving~\Cref{prob:finding-many-3-tuples}. 
As we present the algorithm for a fixed choice\footnote{For~\Cref{prob:finding-many-3-tuples}, choosing $\cos(\theta) = 1/3$ and $\cos(\theta') = \epsapprox + \sqrt{1/3 + \epsapprox/2}$ is optimal in the sense that it yields the smallest possible list size $|L|$ for which $\cT_{\mathrm{sol}}(L, \theta,\theta')$ consists of $|L|$ 3-tuple solutions. However, our quantum algorithm may also be of interest in settings where other choices are more suitable, such as the generalization of~\Cref{prob:finding-many-3-tuples} that searches for triples satisfying $\norm{\vx - \vy - \vz} \leq t$ for $t \neq 1$ (e.g., see~\cite{HK2017}). We therefore present most results for a larger range of $\theta,\theta'\in (0, \pi/2)$.} of $(\theta,\theta')$, we usually write $\cT_{\mathrm{sol}}$ as shorthand for $\cT_{\mathrm{sol}}(L, \theta,\theta')$. 

\begin{remark}\label{rem:elems-Tsol-are-distinct}
    There is a slight subtlety in the definition of $\cT_{\mathrm{sol}}(L, \theta,\theta')$: it does not explicitly require each tuple $(\vx,\vy,\vz) \in \cT_{\mathrm{sol}}(L, \theta,\theta')$ to consist of distinct vectors, which is required by~\Cref{prob:finding-many-3-tuples}. (Note that the latter problem is trivial otherwise: for instance, consider the $|L|$ 3-tuples $(\vx,\vx,\vx)$ for $\vx \in L$.) 
    While we could explicitly impose that $\vx,\vy,\vz$ be distinct in~\Cref{eq:T-sol}, this property is already implied by~\Cref{eq:T-sol} in our setting of interest. Namely, if $\theta = \Omega(1)$ and $\cos(\theta') = \epsapprox + \sqrt{(1 - \cos(\theta) + \epsapprox)/2}$ (as motivated by~\Cref{lem:size-of-Tsol-application}), then each $(\vx,\vy,\vz) \in \cT_{\mathrm{sol}}(L, \theta,\theta')$ that is not composed of distinct vectors must have $\innerP{\vx}{\vy} = \cos(\theta) - \epsapprox$. As the latter event has measure zero for $\vy \sim \cU(\cS^{d-1})$, we may, without loss of generality, assume that in the setting of interest each element of $\cT_{\mathrm{sol}}(L, \theta,\theta')$ consists of distinct vectors. 
\end{remark}

\subsection{High-level overview of our quantum algorithm and main result}\label{sec:overview-algo}

Our quantum algorithm consists of several steps of amplitude amplification (\Cref{lem:AA-without-knowing-norm}), carefully nested together, and preceded by preprocessing the list~$L$ into a useful data structure. We describe the high-level ideas here, and defer the details to the following subsections. 
As outlined in the introduction, a natural strategy for finding an element of $\cT_{\mathrm{sol}}$ is to search for a pair $(\vx,\vy) \in L^2$ satisfying $\innerP{\vx}{\vy} \approx_\epsapprox \cos(\theta)$, and then attempt to find $\vz \in L$ such that $(\vx,\vy,\vz) \in \cT_{\mathrm{sol}}$. 
This approach can be viewed as a two-oracle search (\Cref{sec:prelim-two-oracle-search}) with sets $\cT_2 \subseteq \cT_1 \subseteq \cT_0$ defined by  
\begin{align*} 
    \cT_0 &\coloneqq L^2, \\
    \cT_1 &\coloneqq \{(\vx, \vy)  \in \cT_0 \colon \innerP{\vx}{\vy} \approx_\epsapprox \cos(\theta)\}, \\
    \cT_2 &\coloneqq \{(\vx, \vy)  \in \cT_1 \colon \exists \vz \in L, \innerP{\tfrac{\vx-\vy}{\norm{\vx-\vy}}}{\vz} \approx_\epsapprox \cos(\theta')\}.
\end{align*}   
Whereas checking membership in $\cT_1$ has negligible time complexity ${\sf C}_1 = 2^{o(d)}$, checking membership in $\cT_2$ is a more involved search problem with nontrivial time complexity ${\sf C}_2$. For instance, using amplitude amplification we obtain ${\sf C}_2 = \sqrt{|L|} \,  2^{o(d)}$ when $|L| = 2^{O(d)}$. 
Assuming that, given $(\vx,\vy) \in \cT_2$, we can also \textit{find} $\vz$ such that $(\vx,\vy,\vz) \in \cT_{\mathrm{sol}}$ in time ${\sf C}_2$ (which is true for amplitude amplification), this results in a quantum algorithm for finding an element of $\cT_{\mathrm{sol}}$ in time 
\[\sqrt{\frac{\varepsilon_1}{\varepsilon_2}} \left(\frac{1}{\sqrt{\varepsilon_1}}{\sf C}_1+{\sf C}_2\right) 2^{o(d)}\]
where $\varepsilon_1$ and $\varepsilon_2$ are the probabilities that an element sampled uniformly at random from $\cT_0$ lies in $\cT_1$ and $\cT_2$, respectively. 
Repeating the algorithm about $|L|$ times then hopefully solves~\Cref{prob:finding-many-3-tuples}. Unfortunately, this naive strategy is too expensive.\footnote{One can show  $\varepsilon_1 =_d p_{\theta}$ and $\varepsilon_2 =_d  p_{\theta} \min\{1, |L| p_{\theta'}\}$. For $|L| = (27/16)^{d/4 +o(d)}$, taking $\cos(\theta) \approx 1/3$ and $\cos(\theta') \approx 1/\sqrt{3}$ is necessary and sufficient to ensure $\cT_{\mathrm{sol}}(L,\theta,\theta')$ has size $\geq |L|$ with high probability (see~\Cref{lem:size-of-Tsol-application}). This implies that the naive approach (using amplitude amplification to check membership in $\cT_2$) takes time at least $2^{0.3350d + o(d)}$ to find $|L|$ solutions. Moreover, it yields $\frac{1}{\sqrt{\varepsilon_1}}{\sf C}_1 \ll {\sf C}_2$, suggesting that this approach is suboptimal.} 

Inspired by the locality-sensitive filtering technique, which has been used in state-of-the-art lattice sieving algorithms since \cite{BDGL16}, our key idea to improve this naive strategy is to search more \textit{locally}. The sets $\cT_0, \cT_1, \cT_2$ will be replaced by random subsets of them that only include those pairs of vectors that lie in the same ``local'' region of $\cS^{d-1}$ (formally defined as forming a collision under some suitable relation), meaning that these vectors are not too far apart. 
While those subsets may not cover all of $\cT_{\mathrm{sol}}$, their randomness will ensure that repeating this approach sufficiently many times for different random subsets allows us to find all elements of $\cT_{\mathrm{sol}}$.
Altogether, this modified strategy results in a better trade-off between the different cost components of two-oracle search. 

Specifically, we restrict $\cT_0 = L^2$ to those pairs of vectors in $L$ that are both ``close'' to some vector in a fixed subset $C \subseteq \cS^{d-1}$, where closeness is measured by a parameter~$\alpha$. Letting $R$ be the relation on $\cS^{d-1} \times C$ including exactly those pairs $(\vx,\vc)$ that satisfy $\innerP{\vx}{\vc}\approx_\epsapprox\cos(\alpha)$, we define 
\begin{equation*}
    \cT_0(R) \coloneqq \{(\vx,\vy) \in \cT_0 \colon R(\vx) \cap R(\vy) \neq \emptyset\}
\end{equation*}
as the set of $R$-collisions in $\cT_0 = L^2$. 
For each $(\vx,\vy) \in \cT_0(R)$, we are now guaranteed that both $\vx$ and $\vy$ are close to some $\vc \in C$, so they are also somewhat close to each other, and have a higher chance of satisfying $\innerP{\vx}{\vy} \approx_\epsapprox \cos(\theta)$. We will therefore replace the role of $\cT_1$ by a suitable subset 
\begin{equation*}
    \cT_1(R) \subseteq  \{(\vx, \vy)  \in \cT_0(R) \colon \innerP{\vx}{\vy} \approx_\epsapprox \cos(\theta) \}
\end{equation*} 
consisting of $R$-collisions in $\cT_1$. The probability $\varepsilon'_1$ that an element sampled from $\cT_0(R)$ is in $\cT_1(R)$ could now be larger than $\varepsilon_1$, while the cost ${\sf C}'_1$ of checking membership in $\cT_1(R)$ remains negligible, so we seem to have reduced one component of the two-oracle search cost.

However, there is a caveat: in order to project onto $\cT_1(R)$ using amplitude amplification, we need a unitary that creates a superposition over this restricted subset $\cT_0(R)$ of $L^2$, and it is not immediately clear that finding $R$-collisions in $L^2$ is easy. This will be resolved by adding a \textit{preprocessing} phase during which the algorithm prepares a data structure in QCRAM that stores the finite relation $R_L \coloneqq R \vert_{L \times C}$, allowing us to efficiently construct a superposition over the elements of $\cT_0(R)$ at any later stage of the algorithm. 
By taking $C$ to be a random product code (RPC, \Cref{defn:RPC}), we can prepare such a data structure at reasonable cost. 

Next, given a superposition over $\cT_1(R)$, the goal is to check whether there exists $\vz\in L$ satisfying $\innerP{\tfrac{\vx-\vy}{\norm{\vx-\vy}}}{\vz}\approx_\epsapprox\cos(\theta')$ for a given pair $(\vx,\vy) \in \cT_1(R)$. 
Again, we will restrict our search to achieve this more efficiently than performing amplitude amplification over $L$: given $(\vx,\vy) \in \cT_1(R)$, we only consider those $\vz \in L$ that collide with the normalization of $\vx-\vy$ under the relation $R' \coloneqq \{(\vx, \vc) \in \cS^{d-1} \times C' \colon \innerP{\vx}{\vc}\approx_\epsapprox\cos(\alpha')\}$ defined by another subset $C' \subseteq \cS^{d-1}$ and parameter $\alpha'$. 
That is, the search is restricted to a subset 
\begin{equation*}
    \cT_2(R,R') \subseteq \{(\vx,\vy) \in \cT_1(R) \colon \exists \vz \in L, \innerP{\tfrac{\vx-\vy}{\norm{\vx-\vy}}}{\vz} \approx_\epsapprox \cos(\theta'), R'(\tfrac{\vx-\vy}{\norm{\vx-\vy}}) \cap R'(\vz) \neq \emptyset\}. 
\end{equation*}
This ``local'' search is again facilitated by letting $C'$ be a random product code and by preparing a data structure for $R'_L \coloneqq R' \vert_{L \times C'}$. 
If restricting to $R'$-collisions reduces the search for $\vz$ to a much smaller subset of $L$, then the cost ${\sf C}'_2$ of checking membership in $\cT_2(R,R')$ could be significantly less than the cost ${\sf C}_2$ in the naive approach, possibly resulting in an improved overall time complexity.

Indeed, this \textit{local} two-oracle search algorithm finds elements of $\cT_2(R,R') \subseteq \cT_2$ more efficiently than the naive strategy. 
By sampling the subsets $C, C'$ randomly, using the RPC distribution from~\Cref{defn:RPC}, we can ensure that this approach finds a sufficiently random subset of $\cT_{\mathrm{sol}}$, so repeating the local two-oracle search algorithm for sufficiently many random pairs $(C,C')$ allows us to find $|L|$ elements of $\cT_{\mathrm{sol}}$, thereby solving~\Cref{prob:finding-many-3-tuples}. 

We summarize the resulting quantum algorithm, called ${\tt 3List}$, in~\Cref{algo:main-3List}.\footnote{The algorithm {\tt SolutionSearch} in the \textbf{Search} phase is essentially the aforementioned local two-oracle search algorithm, and samples from $\cT_2(R,R')$ for fixed $(R,R')$. By choosing the parameter $\searchthreshold$ carefully, the \textbf{Search} phase finds all elements of $\cT_2(R,R')$ using $|\cT_2(R,R')| 2^{o(d)}$ calls to {\tt SolutionSearch}, even if $|\cT_2(R,R')|$ is not known a priori.}  
In the following sections, we explain and analyze the individual phases of {\tt 3List} in more detail, allowing us to then prove our main result, \Cref{thm:main-3List}, in~\Cref{sec:proof-of-main-thm}. 
We remark that we cannot use the result of~\cite{KLL15twoOracles} or~\cite{Amb10VariableTime} directly: as our goal is to find all elements of $\cT_2(R,R')$, we want to sample a single element with sufficient randomness so that many samples are likely to correspond to many distinct elements. We therefore give our own algorithm and analysis, for our particular setting. 

\bigskip

\begin{algorithm}[H]
\caption{${\tt 3List}$}\label{algo:main-3List}
\vskip3pt
\makebox[5.7em][l]{Input:} A list $L \subseteq \cS^{d-1}$;  \\[1.5pt] 
\makebox[5.7em][l]{ } Angles $\theta,\theta' \in (0, \pi/2)$ \\[4pt] 
\makebox[5.7em][l]{Parameters:} Angles $\alpha,\alpha' \in (0,\pi/2)$; \\[1.5pt] 
\makebox[5.7em][l]{ } Number of repetitions $\nrep$; \\[1.5pt] 
\makebox[5.7em][l]{ } Search threshold $\searchthreshold$  \\[4pt] 
\makebox[5.7em][l]{Output:} A list $L'$ consisting of 3-tuple solutions in $\cT_{\mathrm{sol}}(L,\theta,\theta')$
\vskip5pt
\hrule
\vskip5pt
\begin{enumerate}
    \item $L'\leftarrow \emptyset$
    \item Repeat for $\nrep$ times: 
    \begin{enumerate}
        \item \textbf{Sample:} Sample $C \sim \mathrm{RPC}(d,\log d,1/p_{\alpha})$ and $C' \sim \mathrm{RPC}(d,\log d,1/p_{\alpha'})$ 
        \item \textbf{Preprocess:} $(D,D')\leftarrow{\tt Preprocess}(L, C, C')$ (\Cref{algo:preprocess})
        \item \textbf{Search:} Use  $(D, D')$ to find 3-tuple solutions: 
        \begin{enumerate}
            \item $S \leftarrow \emptyset$, $\textsc{count} \leftarrow 0$ 
            \item While $\textsc{count} < (|S|+1) \searchthreshold$: 
            \begin{enumerate}
                \item $\vs \leftarrow {\tt SolutionSearch}(D,D')$ (\Cref{algo:SolutionSearch}) 
                \item If $\vs \in L^3 \setminus  S$, then add $\vs$ to $S$ and set $\textsc{count} \leftarrow 0$ 
                \item Else, set $\textsc{count} \leftarrow \textsc{count} + 1$
            \end{enumerate}
            \item Add $S$ to $L'$
        \end{enumerate}  
    \end{enumerate}
    \item Return $L'$  
\end{enumerate}
\end{algorithm}

\begin{theorem}\label{thm:main-3List} 
There exists a quantum algorithm that, with probability $1 - 2^{-\Omega(d)}$, solves~\Cref{prob:finding-many-3-tuples} with list size $m = (27/16)^{d/4 +o(d)}$ in time $2^{0.284551d + o(d)}$. 
This algorithm uses $m 2^{o(d)}$ classical memory and QCRAM bits, and $2^{o(d)}$ qubits.  
\end{theorem}

\noindent As explained in~\Cref{sec:application}, ~\Cref{thm:main-3List} implies the existence of a quantum algorithm that \textit{heuristically} solves the Shortest Vector Problem with the stated time and memory complexity.

\subsection{The Sampling phase}\label{sec:alg-sample}

The \textbf{Sampling} phase of~\Cref{algo:main-3List} samples two RPCs (\Cref{defn:RPC}) $C$ and $C'$, and stores their descriptions. This phase can be completed using $2^{o(d)}$ time and classical memory.
The sizes of $C$ and $C'$ depend on the parameters $\alpha, \alpha' \in (0,\pi/2)$, which affect the complexity of {\tt 3List} and will be chosen to optimize performance. 
The sampled RPCs $C,C'$ and angles $\alpha,\alpha'$ induce the relations
\begin{equation*}
    R \coloneqq \{(\vx,\vc) \in \cS^{d-1} \times C \colon \innerP{\vx}{\vc}\approx_\epsapprox\cos(\alpha)\} \quad
    \text{ and } \quad
    R' \coloneqq \{(\vx,\vc) \in \cS^{d-1} \times C' \colon \innerP{\vx}{\vc}\approx_\epsapprox\cos(\alpha')\}.
\end{equation*} 

\begin{remark}[On the RPC parameter setting]
The parameter setting of the distribution of $C \sim \mathrm{RPC}(d,\log d,1/p_{\alpha})$ ensures that, for a given $\vx \in \cS^{d-1}$, the set $R(\vx) = \{\vc \in C \colon \innerP{\vx}{\vc} \approx_\epsapprox \cos(\alpha)\}$ has expected size $2^{o(d)}$ (taken over the choice of $C$) by~\Cref{lem: approx cap volume}.
The set $R(\vx)$ can therefore be computed in expected time $2^{o(d)}$ by~\Cref{lem:-efficient-decodability-RPCs}. The same is true for $C'$, and those properties are utilized during the \textbf{Preprocessing} and \textbf{Search} phases of {\tt 3List}. 
\end{remark}

\noindent To simplify our general analysis of {\tt 3List}, we impose the following conditions on the relations $(R,R')$. In the proof of \Cref{lem:main-3List-numberofreps} (the core lemma underlying \Cref{thm:main-3List}), we show that  with overwhelming probability over the input list $L \sim \cU(\cS^{d-1},m)$, all relations encountered during {\tt 3List} satisfy these conditions. 
\begin{definition}[Well-balanced $(R,R')$]\label{defn:well-balanced}
    For $L, C, C' \subseteq \cS^{d-1}$, let $R \subseteq \cS^{d-1} \times C$ and $R' \subseteq \cS^{d-1} \times C'$ be relations. 
    We say that $(R,R')$ is \emph{well-balanced on $L$} if the following conditions hold: 
    \begin{enumerate}
        \item[$(i)$] $|R_L| =_d |L|$ and $|(R_L)^{\!\transpose}(\vc)| =_d \frac{|R_L|}{|C|}$ for all $\vc \in C$, where $R_L \coloneqq R \vert_{L \times C}$. 
        \item[$(ii)$] $|R'_L| =_d |L|$ and $|(R'_L)^{\!\transpose}(\vc')| =_d \frac{|R'_L|}{|C'|}$ for all $\vc' \in C'$, where $R'_L \coloneqq R' \vert_{L \times C'}$. 
        \item[$(iii)$] For all $(\vx,\vy,\vz) \in \cT_{\mathrm{sol}}(R,R')$ (\Cref{eq:defn-T-relations}), $|\{\vz' \in L \colon (\vx,\vy,\vz') \in \cT_{\mathrm{sol}}(R,R')\}| =_d 1$. 
    \end{enumerate}
\end{definition}

\subsection{The Preprocessing phase}\label{sec:alg-preprocess}

After having sampled two subsets $C,C' \subseteq  \cS^{d-1}$, we proceed to the \textbf{Preprocessing} phase, which uses~\Cref{algo:preprocess} to construct QCRAM data structures $D$ and $D'$ (as defined in~\Cref{sec:relation-and-data-structure-requirement}) for the relations induced by these subsets and the parameters $\alpha$ and $\alpha'$. 

By the end of this classical algorithm, $D$ stores the relation $R_L \subseteq L \times C$, which relates each $\vx \in L$ to all ``close'' vectors in $C$, quantified by the parameter $\alpha$. 
Similarly, $D'$ stores the relation $R'_L \subseteq L \times C'$, which relates each $\vx\in L$ to all close vectors in $C'$, this time quantified by $\alpha'$. 

By construction of $C$ and $C'$, the \textbf{Preprocessing} phase can be completed in time that is optimal up to subexponential factors. 

\begin{lemma}[Preprocessing cost]\label{lem:preprocessing-cost}
Given as input a list $L \subseteq \cS^{d-1}$ and the descriptions of sets  $C \in \cC(d,\log d,1/p_{\alpha})$ and $C' \in \cC(d,\log d,1/p_{\alpha'})$, 
${\tt Preprocess}(L, C, C')$ (\Cref{algo:preprocess}) returns the data structures $D(R_L)$ and $D(R'_L)$ using $(|R_L| + |R'_L|)2^{o(d)}$ time and classical memory. 
In particular, if $(R,R')$ is well-balanced on $L$ (\Cref{defn:well-balanced}), it uses $|L| 2^{o(d)}$ time and classical memory. 
\end{lemma}

\begin{proof} 
By \Cref{lem:-efficient-decodability-RPCs}, for each $\vx \in L$, step 2(a) and step 2(c) take time $|R_L(\vx)|  2^{o(d)}$ and $|R'_L(\vx)|  2^{o(d)}$, respectively, so the cost of step 2 is $\sum_{\vx \in L} (|R_L(\vx)| + |R'_L(\vx)|) 2^{o(d)} = (|R_L| + |R'_L|) 2^{o(d)}$. As the complexity of {\tt Preprocess} is determined by step 2, this proves the first part of the lemma.
The second part follows from conditions $(i)$ and $(ii)$ in~\Cref{defn:well-balanced}. 
\end{proof}

\begin{algorithm}[H]
\caption{${\tt Preprocess}(L, C, C')$}\label{algo:preprocess}
\vskip3pt
\makebox[3.7em][l]{Input:} A list $L \subseteq \cS^{d-1}$; \\[1.5pt]
\makebox[3.7em][l]{ } Descriptions of $C \in \cC(d,\log d,1/p_{\alpha})$ and $C' \in \cC(d,\log d,1/p_{\alpha'})$ \\[4pt]
\makebox[3.7em][l]{Output:} QCRAM data structures $D=D(R_L)$ and $D'=D(R'_L)$ for the relations
\begin{equation*}
    R_L=\{(\vx,\vc)\in L \times C:\innerP{\vx}{\vc}\approx_\epsapprox \cos(\alpha)\} \quad
    \text{ and } \quad
    R'_L=\{(\vx,\vc)\in L \times C':\innerP{\vx}{\vc}\approx_\epsapprox \cos(\alpha')\}
\end{equation*}
\vskip5pt
\hrule
\vskip5pt
\begin{enumerate}
    \item Initialize a pair of empty data structures $D$ and $D'$
    \item For each $\vx\in L$:
\begin{enumerate}
    \item Compute $R_L(\vx)$ using \Cref{lem:-efficient-decodability-RPCs}
    \item For each $\vc\in R_L(\vx)$:
    \begin{enumerate}
        \item Insert $(\vx,\vc)$ into $D$
    \end{enumerate}
    \item Compute $R'_L(\vx)$ using \Cref{lem:-efficient-decodability-RPCs}
    \item For each $\vc\in R'_L(\vx)$:
    \begin{enumerate}
        \item Insert $(\vx,\vc)$ into $D'$
    \end{enumerate}
\end{enumerate}
    \item Return $D$ and $D'$
\end{enumerate}
\end{algorithm}

\noindent While most steps in the \textbf{Search} phase of~\Cref{algo:main-3List} can be achieved using the data structures $D(R_L)$ and $D(R'_L)$ that are prepared during the \textbf{Preprocessing} phase, we actually need one more tool. Namely, we would like to be able to efficiently create a uniform superposition over 
$$
R'(\vx) = \{\vc' \in C' \colon \innerP{\vx}{\vc'}\approx_\epsapprox \cos(\alpha')\}
$$ 
for a large number of vectors $\vx \in \cS^{d-1}$ that the algorithm will encounter (in superposition). That is, we want forward superposition query access to $R'$ (recall~\Cref{sec:relation-and-data-structure-requirement}) for those vectors.   
These $\vx$ are not vectors from our list $L$ itself, but rather (normalized) differences of two vectors from $L$. As the number of those $\vx$ will be rather large ($\gg m$), it is too expensive for us to store each $R'(\vx)$ in a data structure.  
To ensure that we can create the superposition over $R'(\vx)$ in subexponential time (i.e., at negligible cost for us), we therefore consider a ``decoding'' subroutine ${\tt Dec}(C')$ that is guaranteed to run in time $2^{o(d)}$, and achieves the desired mapping for all $\vx \in \cS^{d-1}$ satisfying $|R'(\vx)| \leq 2^{d/\!\log d}$, which suffices for our purposes. 

\begin{lemma}\label{lem:R-beta-oracle}
    For $\alpha' \in (0,\pi/2)$ and $C' \in \cC(d,\log d,1/p_{\alpha'})$, define the relation
    \[
    R' \coloneqq \{(\vx,\vc')\in \cS^{d-1} \times C': \innerP{\vx}{\vc'}\approx_\epsapprox \cos(\alpha')\}.
    \]     
    There is a quantum algorithm ${\tt Dec}(C')$ that, given a description of $C'$, implements the map 
    \begin{align*}
        \widetilde{\cO}_{R'} \colon \ket{\vx}\ket{0} \mapsto 
        \begin{cases}
            \displaystyle \ket{\vx}\sum_{\vc'\in R'(\vx)}\frac{1}{\sqrt{|R'(\vx)|}}\ket{\vc'}  &\quad\text{if } |R'(\vx)| \in [1, 2^{d/\!\log d}] \\
            \displaystyle \ket{\vx}\ket{\bot}  &\quad\text{otherwise} 
        \end{cases} 
    \end{align*}
    in time $2^{o(d)}$ using an auxiliary register of $2^{o(d)}$ qubits. 
    In particular, ${\tt Dec}(C')$ correctly implements $\cO_{R'}$ for all $\vx \in \cS^{d-1}$ satisfying $|R'(\vx)| \leq 2^{d/\!\log d}$. 
\end{lemma}

\begin{proof}
    The subroutine starts by enumerating the elements of $R'(\vx)$ using~\Cref{lem:-efficient-decodability-RPCs} in an auxiliary register, but stops as soon as it has found $2^{d/\!\log d} + 1$ elements, unless it has already finished earlier. (Note that the algorithm from~\Cref{lem:-efficient-decodability-RPCs} finds elements of $R'(\vx)$ one by one.) If it finds either zero or $2^{d/\!\log d} + 1$ elements, which means $|R'(\vx)| \notin [1, 2^{d/\!\log d}]$, then it undoes the computation and maps the second register to $\ket{\bot}$. Otherwise, when $|R'(\vx)| \in [1, 2^{d/\!\log d}]$, the subroutine prepares a uniform superposition over the found elements in the second register and then uncomputes the auxiliary register. 
    Altogether this takes time $2^{o(d)}$ and uses $2^{o(d)}$ auxiliary qubits. 
    (Note that, just as in \Cref{lem:DSlemma}, $\ket{\vx,\vc'}$ will be encoded by $\ket{\vx,i,\vc'}$ for some index $i$ that depends on $\vx$, $\vc'$, and $C'$, but this is not an issue.) 
\end{proof}

\subsection{The Search phase}\label{sec:alg-search}

The \textbf{Search} phase of~\Cref{algo:main-3List} repeatedly invokes a quantum algorithm called {\tt SolutionSearch}, which will be presented in~\Cref{algo:SolutionSearch}. 
As mentioned before, this algorithm searches for elements of $\cT_{\mathrm{sol}}$ by carefully nesting two layers of amplitude amplification  (\Cref{lem:AA-without-knowing-norm}) and by searching for collisions under the relations $R,R'$ defined by the sets $C,C'\subseteq \cS^{d-1}$ obtained during the \textbf{Sampling} phase (and by the angles $\alpha, \alpha'$). This is achieved by leveraging the data structures $D(R_L)$ and $D(R'_L)$ that were prepared during the \textbf{Preprocessing} phase for the finite relations $R_L \coloneqq R \vert_{L \times C}$ and $R'_L \coloneqq R' \vert_{L \times C'}$. 

More precisely, {\tt SolutionSearch} samples 3-tuple solutions from the set  
\begin{equation}
\begin{split}
    \cT_{\mathrm{sol}}(R,R') \coloneqq \bigl\{(\vx,\vy, \vz) \in \cT_{\mathrm{sol}} \colon &~R(\vx) \cap R(\vy) \neq \emptyset, |R(\vx)| \leq 2^{d /\!\log d}, \\ 
    &~R'(\tfrac{\vx-\vy}{\norm{\vx-\vy}}) \cap R'(\vz) \neq \emptyset, |R'(\tfrac{\vx-\vy}{\norm{\vx-\vy}})| \leq 2^{d /\!\log d}\bigr\} 
    \label{eq:defn-T-relations}
\end{split}
\end{equation}
by gradually refining the search through the sets $\cT_0(R) \supseteq \cT_1(R) \supseteq \cT_2(R,R')$ defined as  
\begin{alignat}{2}
    &\cT_0(R) &&\coloneqq \bigl\{(\vx,\vy) \in L^2 \colon R(\vx) \cap R(\vy) \neq \emptyset \bigr\}, \notag \\
    &\cT_1(R) &&\coloneqq \bigl\{(\vx, \vy)  \in \cT_0(R) \colon \innerP{\vx}{\vy} \approx_\epsapprox \cos(\theta), |R(\vx)| \leq 2^{d /\!\log d} \bigr\}, \label{eq:defn-Ti-R} \\ 
    &\cT_2(R,R') &&\coloneqq \bigl\{(\vx, \vy)  \in \cT_1(R) \colon \exists \vz \in L, (\vx,\vy,\vz) \in  \cT_{\mathrm{sol}}(R,R') \vphantom{2^{d /\!\log d}}\bigr\}. \notag
\end{alignat} 
The constraints that $R(\vx)$ and $R'(\tfrac{\vx-\vy}{\norm{\vx-\vy}})$ are of size $2^{o(d)}$ are imposed for technical reasons.\footnote{In our applications, these sets will be of subexponential size for most $(\vx,\vy)$. Excluding the rare cases where this is not the case ensures that the sampling probabilities over $\cT_{\mathrm{sol}}(R,R')$ remain approximately uniform.} 

The core of {\tt SolutionSearch} is a subroutine, {\tt TupleSamp} (presented in~\Cref{algo:xyz-sampler}), which generates a superposition over a subset of $\cT_1(R) \times L$ and ``flags'' those tuples that belong to $\cT_{\mathrm{sol}}(R,R')$. 
{\tt SolutionSearch} applies amplitude amplification on top of {\tt TupleSamp} to keep only those flagged tuples, and then measures the resulting state, yielding an element of $\cT_{\mathrm{sol}}(R,R') \subseteq \cT_{\mathrm{sol}}$. 

The sampling subroutine $\tt TupleSamp$ first generates a superposition over $(\vx,\vy)\in \cT_1(R)$ by applying amplitude amplification to a subroutine $\tt RCollisionSamp$ (presented in~\Cref{algo:xy-sampler}) that uses $D(R_L)$ to efficiently generate a superposition over $\cT_0(R)$. 
$\tt TupleSamp$ then proceeds by searching for $\vz$ such that $(\vx,\vy,\vz) \in \cT_{\mathrm{sol}}$, setting a flag if such a $\vz$ is found. Using $D(R'_L)$, we save computation effort by restricting this search to those $\vz$ that form an $R'$-collision with $\tfrac{\vx-\vy}{\norm{\vx-\vy}}$. 

The structure of the main algorithm {\tt 3List} and its subroutines, including the \textbf{Search} phase described in this section, is illustrated in~\Cref{fig:alg-structure}. We now give precise definitions of the algorithms forming the \textbf{Search} phase, and analyze them in full detail.

\begin{figure}[H]
    \centering
    \vspace*{0.6cm}
    \begin{tikzpicture}
    \tikzset{
    boxed/.style={
      draw,
      thick,
      rectangle,
      rounded corners=3pt,
      fill=darkblue!4,
      inner sep=6pt,
      align=center
    }
    }
        \node[boxed] (listbox) at (0,2) {{\tt 3List}\\ \Cref{algo:main-3List}};
        \node[boxed] (sambox) at (3.6,3.7) {{\tt Sample}};
        \node[boxed] (prebox) at (4,2) {{\tt Preprocess}\\ \Cref{algo:preprocess}};
        \node[boxed] (solbox) at (4.4,0) {{\tt SolutionSearch}\\ \Cref{algo:SolutionSearch}};
        \node[boxed] (xyzbox) at (8.5,1) {{\tt TupleSamp}\\ \Cref{algo:xyz-sampler}};
        \node at (8.5,-.85) {$(\vx,\vy,\vz) \overset{?}{\in} \cT_{\mathrm{sol}}$};
        \node[boxed] (xybox) at (12.6,2) {{\tt RCollisionSamp}\\ \Cref{algo:xy-sampler}};
        \node at (12.3,.15) {$\innerP{\vx}{\vy}\overset{?}{\approx}_{\epsapprox}\cos(\theta)$};

        \draw[<-] (listbox.east) -- (prebox.west); \node at (1.65,2.25) {\footnotesize loop};
        \draw (2,2) -- (2,3.7) -- (sambox.west);
        \draw (2,2) -- (2,0) -- (solbox.west); \node at (2.4,.25) {\footnotesize loop};
        \draw[<-] (solbox.east) -- (6.75,0); \node at (6.4,.25) {\footnotesize {\tt AA}};
        \draw (xyzbox.west) -- (6.75,1) -- (6.75,-1) -- (7.25,-1);
        \draw[<-] (xyzbox.east) -- (10.45,1); \node at (10.15,1.25) {\footnotesize {\tt AA}\,};
        \draw (xybox.west) -- (10.45,2) -- (10.45,0) -- (10.95,0);
    \end{tikzpicture}
    \vspace*{0.2cm}
    \caption{Structure of the algorithm {\tt 3List}, which repeats the following. First, the \textbf{Sampling} phase produces a pair $(R,R')$ of random relations that are stored in a data structure during the \textbf{Preprocessing} phase. The algorithm then repeatedly calls {\tt SolutionSearch}, which is instructed to find an element of $\cT_{\mathrm{sol}}$ using a nested amplitude amplification ({\tt AA}). 
    Given a subroutine {\tt RCollisionSamp} that creates a superposition over $R$-collisions $(\vx,\vy)\in L^2$, the first {\tt AA} amplifies those that satisfy $\innerP{\vx}{\vy} \approx_{\epsapprox}\cos(\theta)$.  
    Next, {\tt TupleSamp} extends them to triples $(\vx,\vy,\vz)$ such that $(\tfrac{\vx-\vy}{\norm{\vx-\vy}}, \vz)$ forms an $R'$-collision (if such a $\vz$ exists), and the final ${\tt AA}$ amplifies those triples that belong to $\cT_{\mathrm{sol}}$.}\label{fig:alg-structure}
\end{figure}
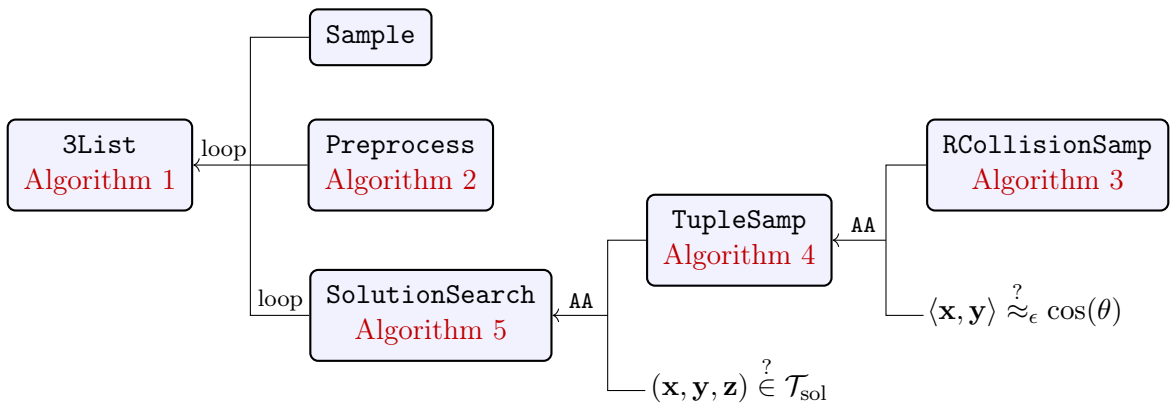

\subsubsection{Sampling an $R$-collision}
We start with the bottom layer, {\tt RCollisionSamp} (\Cref{algo:xy-sampler}). For arbitrary sets $L$ and $C$, this algorithm takes as input a QCRAM data structure $D(R)$ storing a relation $R \subseteq L \times C$ (see~\Cref{sec:relation-and-data-structure-requirement}), and outputs a superposition over $R$-collisions, specifically over $(\vx,\vc,\vy)\in L \times C \times L$ such that $\vc \in R(\vx) \cap R(\vy)$, as visualized in part~(a) of~\Cref{fig:rel-structure}. While {\tt RCollisionSamp} works for any relation, we will apply it to $L, C \subseteq \cS^{d-1}$ (where $L$ is an instance of~\Cref{prob:finding-many-3-tuples} and $C$ an RPC), so we use vector notation such as $\vx$ and $\vc$ for elements of these sets. 

{\tt RCollisionSamp} starts by taking a uniform superposition over all $\vx\in L$. Then, for all $\vx$ such that $R(\vx) \neq \emptyset$, it creates a superposition over all $\vc \in R(\vx)$ in the second register, and subsequently over all $\vy \in R^{\transpose}(\vc)$ in the third register. These steps are easy using the data structure $D(R)$.  
In case $R(\vx) = \emptyset$, the second and third register are mapped to $\ket{\bot}\ket{\bot}$, but in our applications this typically accounts for a small fraction of the state, so it is easily suppressed using amplitude amplification when~\Cref{algo:xy-sampler} is called by {\tt TupleSamp} (\Cref{algo:xyz-sampler}).

\begin{algorithm}[H]
\caption{${\tt RCollisionSamp}(D(R))$}\label{algo:xy-sampler}
\vskip3pt
\makebox[3.7em][l]{Input:} $\ket{0,0,0}$; \\[1.5pt]
\makebox[3.7em][l]{ } A QCRAM data structure $D(R)$ for $R \subseteq L \times C$, where $L$ and $C$ are finite sets \\[4pt]
\makebox[3.7em][l]{Output:} A superposition $\ket{\psi}$ over $(\vx,\vc,\vy) \in L \times (C\cup\bot)\times (L\cup\bot)$  
\vskip5pt
\hrule
\vskip5pt 
\begin{enumerate}
    \item Generate a uniform superposition over $L$ in the first register: $\ket{0,0,0}\mapsto\frac{1}{\sqrt{|L|}} \displaystyle\sum_{\substack{\vx \in L}} \ket{\vx}\ket{0}\ket{0}$.
    \item Controlled on $\vx$ in the first register, generate a uniform superposition over the set $R(\vx)$ in the second register, or map the second register to $\ket{\bot}$ if $R(\vx)$ is empty: 
    {\setlength{\abovedisplayskip}{6pt}\setlength{\belowdisplayskip}{4pt}%
    \[
    \ket{\vx}\ket{0}\ket{0}\mapsto 
    \begin{cases}
        \displaystyle \ket{\vx} \Bigg( \frac{1}{\sqrt{|R(\vx)|}} \displaystyle\sum_{\substack{\vc \in R(\vx)}} \ket{\vc}\Bigg)\ket{0} &\quad\text{if } R(\vx)\neq\emptyset \\[20pt]
        \displaystyle \ket{\vx}\ket{\bot}\ket{0} &\quad\text{if } R(\vx)=\emptyset
    \end{cases}
    \]}%
    \item Controlled on $\vc\neq\bot$ in the second register, generate a superposition over the set $R^{\transpose}(\vc)$ in the third register, and map $\ket{\bot}\ket{0}$ to $\ket{\bot} \ket{\bot}$: 
    {\setlength{\abovedisplayskip}{6pt}\setlength{\belowdisplayskip}{4pt}%
    \[
    \ket{\vx}\ket{\vc}\ket{0}\mapsto \begin{cases}
        \displaystyle \ket{\vx}\ket{\vc} \Bigg( \frac{1}{\sqrt{|R^{\transpose}(\vc)|}} \sum_{\substack{\vy \in R^{\transpose}(\vc)}} \ket{\vy} \Bigg) &\quad\text{if } \vc\neq\bot \\[20pt]
        \displaystyle \ket{\vx}\ket{\vc}\ket{\bot} &\quad\text{if } \vc=\bot
    \end{cases}
    \]}%
    \item Return the final quantum state 
\end{enumerate}
\end{algorithm}

\begin{lemma}[Analysis of {\tt RCollisionSamp}]\label{lem:easy-analysis-xy-sampler}
For finite sets $L$ and $C$, let $R \subseteq L \times C$.
Let $D(R)$ be a QCRAM data structure for $R$. ${\tt RCollisionSamp}(D(R))$ (\Cref{algo:xy-sampler}) outputs a state $\ket{\psi}$ such that, for all $(\vx,\vc,\vy)\in L \times C \times L$ satisfying $\vc \in R(\vx) \cap R(\vy)$,  we have
\[ 
\braket{\vx,\vc,\vy | \psi} = \frac{1}{\sqrt{|L| \, |R(\vx)| \, |R^{\transpose}(\vc)|}}.\]  
The algorithm uses $O(\log|L|+\log|C|)$ time and QCRAM queries, and $O(\log|L|+\log|C|)$ qubits. 
\end{lemma}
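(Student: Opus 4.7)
The plan is to carry out a direct step-by-step analysis of \Cref{algo:xy-sampler}, tracking the quantum state after each of the three unitary operations. Both the correctness of the output amplitude and the time/QCRAM bounds will follow essentially from the properties of the QCRAM data structure as guaranteed by \Cref{lem:DSlemma}.

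First I would describe the implementation of each step. Step~1 is just a Hadamard-type preparation on a register of size $\lceil\log|L|\rceil$ qubits, producing $\frac{1}{\sqrt{|L|}}\sum_{\vx\in L}\ket{\vx}\ket{0}\ket{0}$ at cost $O(\log|L|)$. Steps~2 and~3 are precisely the ``Lookup by $x$ in superposition'' and ``Lookup by $y$ in superposition'' operations provided by \Cref{lem:DSlemma}, each of which costs $O(\log|L|+\log|\cC|)$ time and QCRAM queries, and uses that many qubits, with the $\ket{\bot}$ behaviour on empty preimages already built into the lemma. Combining these three costs gives the stated $O(\log|L|+\log|\cC|)$ bound for time, QCRAM queries, and qubits.

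Next I would compute the amplitude on $\ket{\vx,\vc,\vy}$ for a triple with $\vc\in R(\vx)\cap R(\vy)$. After Step~1 the amplitude on $\ket{\vx}$ is $1/\sqrt{|L|}$. After Step~2, since $\vc\in R(\vx)$ implies $R(\vx)\neq\emptyset$, the amplitude on $\ket{\vx,\vc}$ is $1/\sqrt{|L|\cdot|R(\vx)|}$. After Step~3, since $\vc\neq\bot$, we apply the lookup by $\vc$, so the amplitude on $\ket{\vx,\vc,\vy}$ becomes $1/\sqrt{|L|\cdot|R(\vx)|\cdot|R^{-1}(\vc)|}$, which is exactly the claimed value. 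Triples where $\vc\in R(\vx)\cap R(\vy)$ are clearly unaffected by the branches where the second or third register ends up in $\ket{\bot}$, so no interference from those branches can disturb the amplitude on a fixed such triple.

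There is essentially no hard step here; the only minor subtlety to address is the encoding convention noted after \Cref{lem:DSlemma}, namely that $\ket{\vx,\vc}$ (respectively $\ket{\vx,\vc,\vy}$) is actually stored together with the index pointers $i$ (respectively $i,j$) into the arrays of $R(\vx)$ and $R^{-1}(\vc)$. I would briefly note that this only adds $O(\log|L|+\log|\cC|)$ auxiliary qubits, does not change the amplitude calculation, and is consistent with the abuse of notation already flagged in \Cref{sec:relation-and-data-structure-requirement}. With that remark in place, the lemma is proved.
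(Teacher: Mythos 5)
Your proof is correct and follows essentially the same route as the paper's: both appeal to \Cref{lem:DSlemma} for the cost of each of the three steps, then track the state through the algorithm to read off the amplitude on any triple with $\vc\in R(\vx)\cap R(\vy)$. The paper writes out the full output state in one display and reads off the amplitude, whereas you track the amplitude step by step, but the content is the same; your extra remark about the index-pointer encoding is a reasonable addition that the paper handles once globally after \Cref{lem:DSlemma}.
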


\begin{proof}
The operations used by the subroutine (taking a uniform superposition over $L$, taking a uniform superposition over $R(\vx)$ for any $\vx \in L$, and taking a uniform superposition over $R^{\transpose}(\vc)$ for any $\vc \in C$) can all be done using $O(\log|L|+\log|C|)$ time and QCRAM queries, by \Cref{lem:DSlemma}. Since~\Cref{algo:xy-sampler} uses $O(\log|L|+\log|C|)$ qubits, the claim on the time and memory complexities follows. 
The output state of~\Cref{algo:xy-sampler} is 
\begin{equation*}
    \ket{\psi} \coloneqq \frac{1}{\sqrt{|L|}} \sum_{\substack{\vx \in L \colon \\ R(\vx) \neq \emptyset}}  \frac{1}{\sqrt{|R(\vx)|}} \displaystyle\sum_{\substack{\vc \in R(\vx)}} \frac{1}{\sqrt{|R^{\transpose}(\vc)|}} \displaystyle\sum_{\vy \in R^{\transpose}(\vc)} \ket{\vx}\ket{\vc}\ket{\vy} 
    + 
    \frac{1}{\sqrt{|L|}} \sum_{\substack{\vx \in L \colon \\ R(\vx) = \emptyset}} \ket{\vx}\ket{\bot}\ket{\bot}. 
\end{equation*}
In particular, $\braket{\vx,\vc,\vy | \psi} = 1 / \sqrt{|L| \, |R(\vx)| \, |R^{\transpose}(\vc)|}$ whenever $\vc\in R(\vx)$ and $\vy\in R^{\transpose}(\vc)$. 
\end{proof}

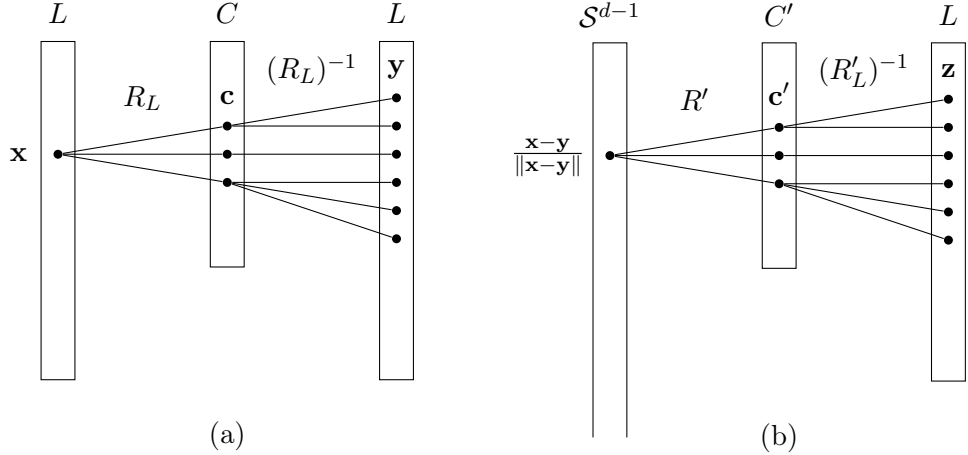
\begin{figure}[H]
\centering
    \begin{tikzpicture}
        \node at (0,0){\begin{tikzpicture}[scale = 1.5]
            \draw (0,0) rectangle (.3,3); \node at (.15,3.25) {$L$};
            \node[circle, fill, inner sep = 1.2pt] (x) at (.15,2) {}; \node at (-0.2,2) {$\vx$};

            \draw (1.5,1) rectangle (1.8,3); \node at (1.65,3.25) {$C$};
            \node[circle, fill, inner sep = 1.2pt] (c1) at (1.65,2.25) {}; \node at (1.65,2.5) {$\vc$};
            \node[circle, fill, inner sep = 1.2pt] (c2) at (1.65,2) {};
            \node[circle, fill, inner sep = 1.2pt] (c3) at (1.65,1.75) {};

            \draw (x)--(c1);
            \draw (x)--(c2);
            \draw (x)--(c3);

            \node at (.9,2.5) {$R_L$};

            \draw (3,0) rectangle (3.3,3); \node at (3.15,3.25) {$L$};
            \node[circle, fill, inner sep = 1.2pt] (y1) at (3.15,2.5) {}; \node at (3.15,2.75) {$\vy$};
            \node[circle, fill, inner sep = 1.2pt] (y2) at (3.15,2.25) {};
            \node[circle, fill, inner sep = 1.2pt] (y3) at (3.15,2) {};
            \node[circle, fill, inner sep = 1.2pt] (y4) at (3.15,1.75) {};
            \node[circle, fill, inner sep = 1.2pt] (y5) at (3.15,1.5) {};
            \node[circle, fill, inner sep = 1.2pt] (y6) at (3.15,1.25) {};

            \draw (c1)--(y1);
            \draw (c1)--(y2);
            \draw (c2)--(y3);
            \draw (c3)--(y4);
            \draw (c3)--(y5);
            \draw (c3)--(y6);

            \node at (2.4,2.75) {$(R_L)^{\!\transpose}$}; 

        \node at (1.65,-.5) {(a)};    
        \end{tikzpicture}};
        
        \node at (8,0){\begin{tikzpicture}[scale = 1.5]
            \draw (0,-0.5) -- (0,3) -- (.3,3) -- (.3,-0.5); \node at (.15,3.25) {$\cS^{d-1}$};
            \node[circle, fill, inner sep = 1.2pt] (x) at (.15,2) {}; \node at (-.4,2) {$\tfrac{\vx-\vy}{\norm{\vx-\vy}}$};

            \draw (1.5,1) rectangle (1.8,3); \node at (1.65,3.25) {$C'$};
            \node[circle, fill, inner sep = 1.2pt] (c1) at (1.65,2.25) {}; \node at (1.65,2.55) {$\vc'$};
            \node[circle, fill, inner sep = 1.2pt] (c2) at (1.65,2) {};
            \node[circle, fill, inner sep = 1.2pt] (c3) at (1.65,1.75) {};

            \draw (x)--(c1);
            \draw (x)--(c2);
            \draw (x)--(c3);

            \node at (.9,2.5) {$R'$};

            \draw (3,0) rectangle (3.3,3); \node at (3.15,3.25) {$L$};
            \node[circle, fill, inner sep = 1.2pt] (y1) at (3.15,2.5) {}; \node at (3.15,2.75) {$\vz$};
            \node[circle, fill, inner sep = 1.2pt] (y2) at (3.15,2.25) {};
            \node[circle, fill, inner sep = 1.2pt] (y3) at (3.15,2) {};
            \node[circle, fill, inner sep = 1.2pt] (y4) at (3.15,1.75) {};
            \node[circle, fill, inner sep = 1.2pt] (y5) at (3.15,1.5) {};
            \node[circle, fill, inner sep = 1.2pt] (y6) at (3.15,1.25) {};

            \draw (c1)--(y1);
            \draw (c1)--(y2);
            \draw (c2)--(y3);
            \draw (c3)--(y4);
            \draw (c3)--(y5);
            \draw (c3)--(y6);

            \node at (2.4,2.75) {$(R'_L)^{\!\transpose}$};

            \node at (1.65,-.5) {(b)};
        \end{tikzpicture}};
    \end{tikzpicture}
    \caption{Summary of the relations between vectors encountered during the \textbf{Search} phase. 
    Part~(a) visualizes the relations during the subroutine {\tt RCollisionSamp}, which first creates a superposition over all $\vx\in L$, followed by taking, for each such $\vx$, a superposition over all $\vc$ such that $(\vx,\vc) \in R_L$, and then, for each such $\vc$, over all $\vy$ such that $(\vy,\vc) \in R_L$. 
    This results in a superposition over all $R_L$-collisions (that is, all $R$-collisions in $L^2$). 
    Part~(b) visualizes what happens after the first step of {\tt TupleSamp}, which amplifies those $R_L$-collisions $(\vx,\vy)$ that satisfy $\innerP{\vx}{\vy} \approx_\epsapprox \cos(\theta)$. Namely, the second step creates, for any such $(\vx,\vy)$, a superposition over $\vc' \in C'$ satisfying $(\tfrac{\vx-\vy}{\norm{\vx-\vy}}, \vc') \in R'$, and, for each such $\vc'$, the third step creates a superposition over all $\vz$ such that $(\vz,\vc') \in R_L'$, as visualized in the figure, and amplifies those $\vz$ satisfying $\innerP{\tfrac{\vx-\vy}{\norm{\vx-\vy}}}{\vz}\approx_\epsapprox\cos(\theta')$. 
    As for most $(\vx,\vy)$ no such $\vz$ exists, {\tt SolutionSearch} applies amplitude amplification on top of {\tt TupleSamp} to amplify exactly those $(\vx,\vy)$ where such a $\vz$ does exist.}\label{fig:rel-structure} 
\end{figure}

\subsubsection{TupleSamp}

The next layer is the subroutine {\tt TupleSamp}, presented in~\Cref{algo:xyz-sampler}, which considers relations on $\cS^{d-1}$. Its goal is to construct a quantum state $\ket{\psi'}$ that has sufficiently large overlap with $\cT_{\mathrm{sol}}(R,R')$ (\Cref{eq:defn-T-relations}) so that putting amplitude amplification on top (as will be done by {\tt SolutionSearch}) allows to sample from $\cT_{\mathrm{sol}}(R,R')$ at not too high cost. In fact, for our applications, we want that \textit{each} element in $\cT_{\mathrm{sol}}(R,R')$ has rather large overlap with $\ket{\psi'}$, ensuring that repeatedly calling {\tt SolutionSearch} allows for finding \emph{all} elements of $\cT_{\mathrm{sol}}(R,R')$, and not just the same element over and over again.

{\tt TupleSamp} takes as input two relations $R \subseteq \cS^{d-1} \times C$ and $R' \subseteq \cS^{d-1} \times C'$, or rather their restrictions to $L$, given as data structures $D(R_L)$ and $D(R'_L)$. 
It also assumes the existence of an oracle that creates a uniform superposition over $R'(\tfrac{\vx-\vy}{\norm{\vx-\vy}})$ for vectors $(\vx,\vy) \in L^2$, which we will implement using~\Cref{lem:R-beta-oracle}. 
{\tt TupleSamp} first creates a superposition over $R$-collisions $(\vx,\vy)\in L^2$ that belong to $\cT_1(R)$ (implying that $\vx$ and $\vy$ are relatively ``close''), and then searches for a ``close'' $\vz \in L$ among those that form an $R'$-collision with $\tfrac{\vx-\vy}{\norm{\vx-\vy}}$. Such a $\vz$ might not exist for all $(\vx,\vy)$, so we use a flag qubit that is set to $\ket{1}$ whenever such a $\vz$ was found. 

In particular, step 1 of {\tt TupleSamp} creates a superposition over $\cT_1(R)$ by applying amplitude amplification to {\tt RCollisionSamp} (\Cref{algo:xy-sampler}), amplifying those $R$-collisions $(\vx,\vy) \in L^2$ that satisfy $\innerP{\vx}{\vy}\approx_\epsapprox\cos(\theta)$ and $|R(\vx)| \leq 2^{d/\!\log d}$.\footnote{Note that the amplification step also checks whether $\vy \neq \bot$, because the superposition constructed by {\tt RCollisionSamp} may have nonzero amplitude on pairs $(\vx,\vy)$ with $\vy = \bot$, namely if $R(\vx) = \emptyset$.}

\begin{algorithm}[tp]
\caption{${\tt TupleSamp}(D(R_L),D(R'_L))$}\label{algo:xyz-sampler} 
\vskip3pt
\makebox[3.7em][l]{Input:} $\ket{0,0,0,0,0}\ket{0,0}$, where the last two qubits form the flag register; \\[1.5pt] 
\makebox[3.7em][l]{ } QCRAM data structures $D(R_L)$ and $D(R'_L)$ for the relations 
\begin{equation*}
    R_L \coloneqq R \vert_{L \times C} \quad \text{ and } \quad R'_L \coloneqq R' \vert_{L \times C'}
\end{equation*} 
\makebox[3.7em][l]{ } where $L, C, C' \subseteq \cS^{d-1}$ are finite, $R \subseteq \cS^{d-1} \times C$, and $R' \subseteq \cS^{d-1} \times C'$ \\[4pt] 
\makebox[3.7em][l]{Output:} A superposition $\ket{\psi'}$ over $(\vx,\vc,\vy,\vc',\vz,b^{\phantom{\prime}}_F,b^{\prime}_F)$ such that all basis states in its support \\
\makebox[3.7em][l]{ } with $b^{\prime}_F = 1$ satisfy $(\vx,\vy,\vz) \in \cT_{\mathrm{sol}}(R,R')$
\vskip5pt
\hrule
\vskip5pt
\begin{enumerate}
    \item Apply ${\tt AA}_r({\tt RCollisionSamp}(D(R_L)),{\tt RCollisionCheck})$ to the first three registers and first flag qubit, where:  
    \begin{enumerate} 
        \item[--] $r = \max\{1, \sqrt{|\cT_0(R)|}\} \, 2^{o(d)}$ (computed using $D(R_L)$) 
        \item[--] {\tt RCollisionSamp} (\Cref{algo:xy-sampler}) generates a superposition $\ket{\psi}$ over $(\vx,\vc,\vy)$ such that either $\vc \in R(\vx) \cap R(\vy)$ or $\vy = \bot$
        \item[--] {\tt RCollisionCheck} checks if $\vy\neq\bot$, $\innerP{\vx}{\vy} \approx_\epsapprox \cos(\theta)$, and $|R(\vx)| \leq 2^{d/\!\log d}$ 
    \end{enumerate} 

    {\it This step sets the first flag qubit to $\ket{1}$ if and only if $\cT_1(R) \neq \emptyset$.} 
    
    \item Controlled on the first three registers being in state $\ket{\vx,\vc,\vy}$ and the first flag qubit in $\ket{1}$, apply the following unitary map to the fourth register: 
    \begin{align*}
        \ket{0} \mapsto 
        \begin{cases} 
        \displaystyle \frac{1}{\sqrt{|R'(\tfrac{\vx-\vy}{\norm{\vx-\vy}})|}} \displaystyle\sum_{\vc'\in R'(\tfrac{\vx-\vy}{\norm{\vx-\vy}})} \ket{\vc'} &\quad\text{if } |R'(\tfrac{\vx-\vy}{\norm{\vx-\vy}})| \in [1, 2^{d/\!\log d}] \\[20pt] 
        \displaystyle \ket{\bot}  &\quad\text{otherwise}
        \end{cases}
    \end{align*} 
    
    \item Controlled on the first four registers being in state $\ket{\vx,\vc,\vy,\vc'}$ and the first flag qubit in $\ket{1}$, apply ${\tt AA}_{r'}({\tt zSamp}(\vc'),{\tt zCheck}(\vx,\vy))$ to the fifth register and second flag qubit,  
    where:  
    \begin{enumerate} 
        \item[--] $r' \coloneqq \sqrt{|(R'_L)^{\!\transpose}(\vc')|}$ (computed using $D(R'_L)$)
        \item[--] {\tt zSamp}$(\vc')$ uses $D(R'_L)$ to apply the following map to the fifth register: 
        \begin{align*} 
            \ket{0}\mapsto 
            \begin{cases}
            \displaystyle \frac{1}{\sqrt{|(R'_L)^{\!\transpose}(\vc')|}} \displaystyle\sum_{\vz \in (R'_L)^{\!\transpose}(\vc')} \ket{\vz}  &\quad\text{if } \vc'\neq \bot \text{ and } (R'_L)^{\!\transpose}(\vc') \neq \emptyset \\[20pt]
            \displaystyle \ket{\bot} &\quad\text{otherwise} 
            \end{cases}
        \end{align*}
        \item[--] {\tt zCheck}$(\vx,\vy)$ checks if $\vz \neq \bot$ and $\innerP{\tfrac{\vx-\vy}{\norm{\vx-\vy}}}{\vz}\approx_\epsapprox\cos(\theta')$ 
    \end{enumerate}

    {\it This step ensures the second flag qubit has support on $\ket{1}$ if and only if $\cT_{\mathrm{sol}}(R,R') \neq \emptyset$.}
    \item Return the final quantum state 
\end{enumerate}
\end{algorithm}

\begin{lemma}[Analysis of {\tt TupleSamp}]\label{lem:analysis-xyz-sampler}
    For $L, C, C' \subseteq \cS^{d-1}$ of size $2^{O(d)}$, let $R \subseteq \cS^{d-1} \times C$ and $R' \subseteq \cS^{d-1} \times C'$ be such that $(R,R')$ is well-balanced on $L$ (\Cref{defn:well-balanced}). 
    Let $D(R_L)$ and $D(R'_L)$ be QCRAM data structures for $R_L \coloneqq R \vert_{L \times C}$ and $R'_L \coloneqq R' \vert_{L \times C'}$.
    With probability at least $1 - 2^{-\omega(d)}$, ${\tt TupleSamp}(D(R_L),D(R'_L))$ (\Cref{algo:xyz-sampler}) generates a quantum state $\ket{\psi'}$ and satisfies: 
    \begin{enumerate}
        \item[(1)] \text{Complexity:} If $\cT_1(R)\neq \emptyset$, then it uses  
        \begin{align*} 
            \left(\sqrt{\frac{|\cT_0(R)|}{|\cT_1(R)|}} + \sqrt{\frac{|L|}{|C'|}} \right) 2^{o(d)} 
        \end{align*}
        time and QCRAM queries, and  $\max\{1, \sqrt{|\cT_0(R)|}\} \, 2^{o(d)}$ otherwise. It uses $2^{o(d)}$ qubits.    
        \item[(2)] \text{Near-uniform sampling:} The probability that measuring $\ket{\psi'}$ yields outcome $(\vx,\vy,\vz, b^{\prime}_F)$ with $b^{\prime}_F = 1$ satisfies
        \begin{align*}
        \Pr[(\vx,\vy,\vz,1)] =_d 
        \begin{cases}
            \displaystyle \frac{1}{|\cT_1(R)|} &\quad\text{if } (\vx,\vy,\vz) \in \cT_{\mathrm{sol}}(R,R') \\[12pt]
            \displaystyle 0 &\quad\text{otherwise.}
        \end{cases} 
        \end{align*}
    \end{enumerate}
\end{lemma}

\begin{proof} 
Suppose $(R,R')$ is well-balanced on $L$.  
We will show that there exists an implementation of ${\tt TupleSamp}(D(R_L),D(R'_L))$ (\Cref{algo:xyz-sampler}) with the desired properties. 

Let $\Pi$ be the orthogonal projector onto the subspace marked by {\tt RCollisionCheck}. 
\Cref{lem:easy-analysis-xy-sampler} implies that ${\tt RCollisionSamp}(D(R_L))$ has complexity $2^{o(d)}$ and generates a state $\ket{\psi}$ such that 
\begin{align*}
    \norm{\Pi \ket{\psi}}^2 
    &= \sum_{\substack{(\vx,\vy)\in \cT_1(R), \\ \vc\in R(\vx) \cap R(\vy)}} \frac{1}{|L| \, |R(\vx)| \, |(R_L)^{\!\transpose}(\vc)|}  =_d \frac{|\cT_1(R)|}{|\cT_0(R)|}
\end{align*} 
if $\cT_1(R) \neq \emptyset$, and $\norm{\Pi \ket{\psi}} = 0$ otherwise. 
Here, we use that $|R(\vx)| \in [1,2^{o(d)}]$ for all $(\vx,\vy)\in \cT_1(R)$, and that condition $(i)$ of~\Cref{defn:well-balanced} implies $|L| \, |(R_L)^{\!\transpose}(\vc)| =_d |L| \, |R_L| / |C| =_d |\cT_0(R)|$ for all $\vc \in C$. 
Moreover, all $r =_d \max\{1, |L|/\sqrt{|C|}\}$ satisfy $r =_d \max\{1, \sqrt{|\cT_0(R)|}\}$. Choosing $r = \max\{1, |L|/\sqrt{|C|}\} \, 2^{o(d)}$ sufficiently large (note that $|L|$ and $|C|$ are known) ensures $\norm{\Pi \ket{\psi}} \geq 1/r$ whenever $\norm{\Pi \ket{\psi}} \neq 0$ (the latter holds if and only if $\cT_1(R) \neq \emptyset$). 
Next, note that {\tt RCollisionCheck} can be implemented in time $2^{o(d)}$, using one query to $D(R_L)$ to compute $|R(\vx)|$.  
Therefore, \Cref{lem:AA-without-knowing-norm} implies that, with probability $1 - 2^{-\omega(d)}$, step 1 of ${\tt TupleSamp}$ can be implemented using $t_1 \leq_d \sqrt{\max\{1, |\cT_0(R)|\} / \max\{1, |\cT_1(R)|\}}$ time and QCRAM queries and using one auxiliary qubit, and maps \[\ket{0,0,0}\ket{0} \mapsto \frac{\Pi\ket{\psi}}{\norm{\Pi\ket{\psi}}} \ket{1}\] if $\cT_1(R) \neq \emptyset$, and $\ket{0,0,0}\ket{0} \to \ket{\psi}\ket{0}$ otherwise.  
In the remainder of this proof, we assume step~1 was successful, so the first flag qubit is either fully supported on $\ket{0}$ or fully supported on $\ket{1}$. Since steps~2 and~3 are only applied in the latter case, in the remainder of the proof we omit writing this first flag qubit. 

Step 2 can be implemented using the quantum algorithm ${\tt Dec}(C')$ from~\Cref{lem:R-beta-oracle} in time $t_2 = 2^{o(d)}$ using $2^{o(d)}$ auxiliary qubits.\footnote{We remark that ${\tt Dec}(C')$ takes as input a description of $C'$. We assume without loss of generality that such a description (which is of size $2^{o(d)}$) is stored in $D(R'_L)$ during the \textbf{Preprocessing} phase.} 
For step 3, we use the data structure $D(R'_L)$ to implement ${\tt zSamp}(\vc')$ using $2^{o(d)}$ time and QCRAM queries, as established by~\Cref{lem:DSlemma}. Note that $D(R'_L)$ stores the size $|(R'_L)^{\!\transpose}(\vc')|$, so step 3 (controlled on $(\vx,\vc,\vy,\vc')$) can be implemented using the algorithm ${\tt AA}_{r'}$ from~\Cref{lem:AA-without-knowing-norm} with $r' \coloneqq \sqrt{|(R'_L)^{\!\transpose}(\vc')|}$.  
In addition, ${\tt zCheck}(\vx,\vy)$ is a straightforward check that takes time at most $2^{o(d)}$. 
By~\Cref{lem:AA-without-knowing-norm} (see also~\Cref{rem:AA-special-case-search}), this implements step 3 with probability $1 - 2^{-\omega(d)}$ in time $t_3 = \sqrt{\max_{\vc' \in C'} |(R'_L)^{\!\transpose}(\vc')|} \, 2^{o(d)}$, and maps 
\begin{align*}
    \ket{\vx,\vc,\vy,\vc'}\ket{0}\ket{1,0} \mapsto \begin{cases}
        \displaystyle \ket{\vx,\vc,\vy,\vc'}\frac{1}{\sqrt{|L(\vx,\vy,\vc')|}} \sum_{\vz \in L(\vx,\vy,\vc')} \ket{\vz} \ket{1,1} &\text{if } \vc' \neq \bot \text{ and } L(\vx,\vy,\vc') \neq \emptyset \\
        \displaystyle \ket{\vx,\vc,\vy,\vc'}\ket{\bot} \ket{1,0} &\text{otherwise}
        \end{cases}
\end{align*}
where $L(\vx,\vy,\vc') \coloneqq \{ \vz \in L \colon \innerP{\tfrac{\vx-\vy}{\norm{\vx-\vy}}}{\vz} \approx_\epsapprox \cos(\theta'), \vc' \in R'(\tfrac{\vx-\vy}{\norm{\vx-\vy}}) \cap R'(\vz)\}$. 
Since $t_3 \leq_d |L|/|C'|$ by condition $(ii)$ of being well-balanced, ${\tt TupleSamp}$ uses 
\[ 
    t_1 + t_2 + t_3 \leq_d \sqrt{\frac{|\cT_0(R)|}{|\cT_1(R)|}} + \sqrt{\frac{|L|}{|C'|}}
\] 
time and QCRAM queries if $\cT_1(R) \neq \emptyset$, and $t_1 + t_2 + t_3 \leq_d \max\{1, \sqrt{|\cT_0(R)|}\}$ otherwise. 
The claim on the memory complexity follows immediately from~\Cref{lem:R-beta-oracle}, \Cref{lem:easy-analysis-xy-sampler}, and the construction of the algorithm. This proves  statement (1). 

By construction of {\tt TupleSamp}, its output state $\ket{\psi'}$ has nonzero overlap with $\ket{\vx,\vy,\vz} \ket{1}$ (the latter register corresponding to the second flag qubit) if and only if $(\vx,\vy,\vz) \in \cT_{\mathrm{sol}}(R,R')$, so fix an arbitrary $(\vx,\vy,\vz) \in \cT_{\mathrm{sol}}(R,R')$. Note that this implies $\cT_1(R) \neq \emptyset$, so the first flag qubit is fully supported on $\ket{1}$. By the above, $\ket{\psi'}$ satisfies
\begin{align*}
    \braket{\vx,\vc,\vy,\vc',\vz, 1, 1 | \psi'} = \frac{1}{\sqrt{\,\norm{\Pi \ket{\psi}}^{2}  |L| \, |R(\vx)| \, |(R_L)^{\!\transpose}(\vc)| \, |R'(\tfrac{\vx-\vy}{\norm{\vx-\vy}})| \, |L(\vx,\vy,\vc')|}} 
\end{align*}    
if $\vc \in R(\vx)\cap R(\vy)$ and $\vc' \in R'(\tfrac{\vx-\vy}{\norm{\vx-\vy}}) \cap R'(\vz)$; and $\braket{\vx,\vc,\vy,\vc',\vz, 1, 1 | \psi'} = 0$ otherwise. 
Using condition $(i)$ and $(iii)$ of being well-balanced and the fact that $|R(\vx)| =_d |R'(\tfrac{\vx-\vy}{\norm{\vx-\vy}})| =_d 1$ for $(\vx,\vy,\vz) \in \cT_{\mathrm{sol}}(R,R')$, we conclude that a measurement of $\ket{\psi'}$ yields outcome $(\vx,\vy,\vz, 1)$ with probability $1/|\cT_1(R)|$, up to subexponential factors in $d$.  
\end{proof}

\noindent Using the ingredients of step~1 of {\tt TupleSamp} (\Cref{algo:xyz-sampler}), we can also estimate $|\cT_1(R)|$ up to a subexponential factor in $d$, which will be used in~\Cref{algo:SolutionSearch}. 

\begin{lemma}[Estimating $|\cT_1(R)|$]\label{lem:estimate-size-T1-R}
    For $L,C \subseteq \cS^{d-1}$ of size $2^{O(d)}$, let $R \subseteq \cS^{d-1} \times C$ satisfy condition~$(i)$ in~\Cref{defn:well-balanced}, and let $D(R_L)$ be a QCRAM data structure for $R_L \coloneqq R \vert_{L \times C}$.
    There exists a quantum algorithm that, with probability $1 - 2^{-\omega(d)}$, returns an integer $\tilde{\tau}$ satisfying $\tilde{\tau} = 0$ if $\cT_1(R) = \emptyset$ and  $\tilde{\tau} =_d |\cT_1(R)|$ otherwise, using $\sqrt{\max\{1, |\cT_0(R)|\}/\max\{1, |\cT_1(R)|\}}2^{o(d)}$ time and QCRAM queries, and using $2^{o(d)}$ qubits.  
\end{lemma} 

\begin{proof}  
    Define {\tt RCollisionSamp} and {\tt RCollisionCheck} as in step~1 of~\Cref{algo:xyz-sampler}. Let $\ket{\psi}$ be the state generated by {\tt RCollisionSamp}, and $\Pi$ the orthogonal projector onto the subspace marked by {\tt RCollisionCheck}. 
    Note that $\cT_1(R) = \emptyset$ if and only if $\norm{\Pi \ket{\psi}} = 0$. Moreover, by the proof of~\Cref{lem:analysis-xyz-sampler} (using the assumption on $R$), if $\cT_1(R) \neq \emptyset$, then $\norm{\Pi \ket{\psi}}^2 =_d |\cT_1(R)|/|\cT_0(R)|$, where $|\cT_0(R)|$ is known up to a subexponential factor in $d$. 
    To prove the lemma, it thus suffices to show we can decide whether $\norm{\Pi \ket{\psi}} = 0$, and compute an estimate $\tilde{a} =_d \norm{\Pi \ket{\psi}}$, in the desired complexity and with success probability $1 - 2^{-\omega(d)}$.  

    To decide whether $\norm{\Pi \ket{\psi}} = 0$, we can apply step~1 of~{\tt TupleSamp} (\Cref{algo:xyz-sampler}) to the state $\ket{0,0,0}\ket{0}$, and measure the last (flag) qubit.  By the proof of~\Cref{lem:analysis-xyz-sampler}, with probability $1-2^{-\omega(d)}$, the measurement outcome is 0 if and only if $\norm{\Pi \ket{\psi}} = 0$, and the complexity is within the desired bounds. 

    Moreover, $\norm{\Pi \ket{\psi}}$ can be estimated using a variant of amplitude estimation. For instance, by~\cite[Thm.~3]{AR20}, there is a quantum algorithm that, with probability $1 - 2^{-\omega(d)}$, returns an integer $\tilde{a} = \Theta(\norm{\Pi \ket{\psi}})$ if $0 < \norm{\Pi \ket{\psi}} < 1$, using at most $\norm{\Pi \ket{\psi}}^{-1} 2^{o(d)} \leq \sqrt{|\cT_0(R)|/|\cT_1(R)|}2^{o(d)}$ applications of {\tt RCollisionSamp} and {\tt RCollisionCheck}, and using $2^{o(d)}$ qubits. 
    The possibility that $\norm{\Pi \ket{\psi}} = 1$ can be dealt with in affordable time using standard methods. 
\end{proof}

\subsubsection{SolutionSearch}
Finally, the outermost layer of the \textbf{Search} phase is \Cref{algo:SolutionSearch}, which uses amplitude amplification to project the output state $\ket{\psi'}$ of {\tt TupleSamp} (\Cref{algo:xyz-sampler}) onto $\cT_{\mathrm{sol}}(R,R')$. If $\cT_{\mathrm{sol}}(R,R')$ is nonempty, then this state $\ket{\psi'}$ is solely supported on $(\vx,\vy,\vz, b^{\prime}_F)$ with $(\vx,\vy)\in \cT_1(R)$, and with the flag bit set to~$b^{\prime}_F=1$ only if $(\vx,\vy,\vz)\in \cT_{\mathrm{sol}}(R,R')$.  
Thus, as long as we perform sufficiently many rounds of amplitude amplification, \Cref{algo:SolutionSearch} successfully outputs an element of $\cT_{\mathrm{sol}}(R,R')$. 

\begin{algorithm}[ht!]
\caption{${\tt SolutionSearch}(D(R_L),D(R'_L))$}\label{algo:SolutionSearch}
\vskip3pt
\makebox[3.7em][l]{Input:} QCRAM data structures $D(R_L)$ and $D(R'_L)$ for the relations 
\begin{equation*}
    R_L \coloneqq R \vert_{L \times C} \quad \text{ and } \quad R'_L \coloneqq R' \vert_{L \times C'}
\end{equation*} 
\makebox[3.7em][l]{ } where $L, C, C' \subseteq \cS^{d-1}$ are finite, $R \subseteq \cS^{d-1} \times C$, and $R' \subseteq \cS^{d-1} \times C'$ \\[4pt] 
\makebox[3.7em][l]{Output:} An element $(\vx,\vy,\vz)\in \cT_{\mathrm{sol}}(R,R')$ if $\cT_{\mathrm{sol}}(R,R') \neq \emptyset$, and ``no solution'' otherwise  
\vskip5pt
\hrule
\vskip5pt
\begin{enumerate}
    \item Initialize $\ket{0,0,0,0,0}\ket{0,0}$ 
    \item If $\cT_1(R) \neq \emptyset$ (decided using~\Cref{lem:estimate-size-T1-R}), then apply 
    ${\tt AA}_{r}({\tt TupleSamp}(D(R_L),D(R'_L)), \linebreak {\tt TupleCheck})$, where: 
    \begin{enumerate}
        \item[--] $r = \sqrt{|\cT_1(R)|} \, 2^{o(d)}$ (computed using~\Cref{lem:estimate-size-T1-R}) 
        \item[--] {\tt TupleSamp} (\Cref{algo:xyz-sampler}) generates a superposition over $\ket{\vx,\vc,\vy,\vc',\vz}\ket{b^{\phantom{\prime}}_F,b^{\prime}_F}$ such that either $b^{\prime}_F=0$ or $(\vx,\vy,\vz)\in \cT_{\mathrm{sol}}(R,R')$  
        \item[--] {\tt TupleCheck} checks if $b^{\prime}_F=1$  \end{enumerate}
    \item Measure. From outcome $(\vx,\vy,\vz,b^{\prime}_F)$, output $(\vx,\vy,\vz)$ if $b^{\prime}_F=1$, and ``no solution'' otherwise. 
\end{enumerate}
\end{algorithm}

\noindent 
The relations between the vectors encountered during {\tt SolutionSearch} are visualized in~\Cref{fig:rel-structure}. 

\begin{lemma}[Analysis of {\tt SolutionSearch}]\label{lem:analysis-SolutionSearch} 
    For $L, C, C' \subseteq \cS^{d-1}$ of size $2^{O(d)}$, let $R \subseteq \cS^{d-1} \times C$ and $R' \subseteq \cS^{d-1} \times C'$ be such that $(R,R')$ is well-balanced on $L$ (\Cref{defn:well-balanced}).  
    Let $D(R_L)$ and $D(R'_L)$ be QCRAM data structures for $R_L \coloneqq R \vert_{L \times C}$ and $R'_L \coloneqq R' \vert_{L \times C'}$. 
    With probability $1 - 2^{-\omega(d)}$, ${\tt SolutionSearch}(D(R_L),D(R'_L))$ (\Cref{algo:SolutionSearch}) satisfies the following: 
    \begin{enumerate}
        \item[(1)] \text{Complexity:} 
        If $\cT_{\mathrm{sol}}(R,R') \neq \emptyset$, then it outputs $(\vx,\vy,\vz) \in \cT_{\mathrm{sol}}(R,R')$ using
        \begin{align*}
            \sqrt{\frac{|\cT_1(R)|}{|\cT_{\mathrm{sol}}(R,R')|}} \left(\sqrt{\frac{|\cT_0(R)|}{|\cT_1(R)|}} + \sqrt{\frac{|L|}{|C'|}} \right) 2^{o(d)}
        \end{align*}
        time and QCRAM queries; otherwise, it outputs ``no solution'' using
        \begin{align*}
            \left(\max\left\{1, \sqrt{|\cT_0(R)|}\right\} + \sqrt{|\cT_1(R)|\frac{|L|}{|C'|}} \right) 2^{o(d)}
        \end{align*}
        time and QCRAM queries. 
        It uses $2^{o(d)}$ qubits. 
        \item[(2)] \text{Near-uniform sampling:} 
        For all $(\vx,\vy,\vz) \in \cT_{\mathrm{sol}}(R,R')$, 
        \begin{align*}
            \Pr\left[\text{${\tt SolutionSearch}(D(R_L),D(R'_L))$ outputs $(\vx,\vy,\vz)$}\right] =_d  \frac{1}{|\cT_{\mathrm{sol}}(R,R')|} 
        \end{align*} 
        where the probability is over the internal randomness of {\tt SolutionSearch}. 
    \end{enumerate}
\end{lemma}

\begin{proof} 
For step~2 of {\tt SolutionSearch}, we invoke the algorithm from~\Cref{lem:estimate-size-T1-R} to decide whether $\cT_1(R) \neq \emptyset$ and to estimate $|\cT_1(R)|$ up to a subexponential factor in $d$. This uses $2^{o(d)}$ auxiliary qubits. If $\cT_1(R) = \emptyset$, then the lemma statement follows immediately, so suppose  $\cT_1(R) \neq \emptyset$. 
By~\Cref{lem:analysis-xyz-sampler}, with probability at least $1 - 2^{-\omega(d)}$, 
the output state $\ket{\psi'}$ of {\tt TupleSamp} (\Cref{algo:xyz-sampler})  satisfies 
\begin{equation*}
\begin{split}
    \norm{\Pi' \ket{\psi'}}^2 =_d \begin{cases}
        \displaystyle \frac{|\cT_{\mathrm{sol}}(R,R')|}{|\cT_1(R)|} &\quad\text{if } \cT_{\mathrm{sol}}(R,R') \neq \emptyset \\ 
        \displaystyle 0 &\quad\text{otherwise}
    \end{cases}
\end{split}
\end{equation*} 
where $\Pi'$ denotes the orthogonal projector onto the subspace where $b^{\prime}_F = 1$. If $\cT_{\mathrm{sol}}(R,R') \neq \emptyset$, then $\norm{\Pi' \ket{\psi'}} \geq_d 1/\sqrt{|\cT_1(R)|}$. Choosing $r = \sqrt{|\cT_1(R)|} \, 2^{o(d)}$ sufficiently large (using the computed estimate for $|\cT_1(R)|$) thus ensures $\norm{\Pi' \ket{\psi'}} \geq 1/r$, allowing us to invoke algorithm ${\tt AA}_{r}$ from~\Cref{lem:AA-without-knowing-norm} to implement step~2. 
Since ${\tt TupleCheck}$ can be implemented in time $2^{o(d)}$, statements (1) and (2) now follow from~\Cref{lem:analysis-xyz-sampler}. 
\end{proof}

\subsubsection{Overall analysis of the Search phase} 
We now complete our analysis of the \textbf{Search} phase of {\tt 3List} (\Cref{algo:main-3List}). By applying a coupon-collector argument (see~\Cref{lem:find-all}), it follows that for a sufficiently large search threshold $\searchthreshold = 2^{o(d)}$, the \textbf{Search} phase finds, with overwhelming probability, all elements of $\cT_{\mathrm{sol}}(R,R')$ using at most $|\cT_{\mathrm{sol}}(R,R')| 2^{o(d)}$ calls to {\tt SolutionSearch}, even if $|\cT_{\mathrm{sol}}(R,R')|$ is not known to the algorithm a priori. This establishes the overall complexity of the \textbf{Search} phase.

\begin{lemma}[Analysis of the \textbf{Search} phase of {\tt 3List}]\label{lem:analysis-Search-phase} 
Let $\searchthreshold = 2^{o(d)}$ be sufficiently large.\footnote{By ``sufficiently large'', we mean that there exists some $f(d) \leq 2^{o(d)}$, determined by the proof of~\Cref{lem:analysis-Search-phase}, such that the lemma holds for all $\searchthreshold = 2^{o(d)}$ with $\searchthreshold \geq f(d)$.\label{footnote:sufficiently-large}} 
Consider a single repetition of the \textbf{Search} phase of {\tt 3List} (\Cref{algo:main-3List}) for $L, C,C' \subseteq \cS^{d-1}$ of size $2^{O(d)}$, given the QCRAM data structures $D(R_L)$ and $D(R'_L)$ constructed in the \textbf{Preprocessing} phase. 
Suppose $(R,R')$ is well-balanced on $L$ (\Cref{defn:well-balanced}). With probability at least $1 - 2^{-\omega(d)}$, the \textbf{Search} phase constructs a list consisting of all elements of $\cT_{\mathrm{sol}}(R,R')$ and uses 
\begin{align*}
     \sqrt{\max\{1,|\cT_{\mathrm{sol}}(R,R')|\}} \left(\max\{1,\sqrt{|\cT_0(R)|}\} + \sqrt{|\cT_1(R)| \frac{|L|}{|C'|}} \right) 2^{o(d)}
\end{align*}   
time and QCRAM queries, $2^{o(d)}$ qubits, and $\max\{1, |\cT_{\mathrm{sol}}(R,R')|\} \, 2^{o(d)}$ additional classical memory. 
\end{lemma}

\begin{proof} 
The \textbf{Search} phase of {\tt 3List} is exactly the algorithm $\cA({\tt Samp}, \searchthreshold)$ in the proof of~\Cref{lem:find-all} with $X = \cT_{\mathrm{sol}}(R,R')$ and ${\tt Samp} = {\tt SolutionSearch}$. 
If $\cT_{\mathrm{sol}}(R,R') = \emptyset$, then the \textbf{Search} phase will finish after at most $\searchthreshold$ repetitions of step~$ii$, correctly yielding an empty set $S$. Otherwise, by~\Cref{lem:analysis-SolutionSearch}, with probability $1 - 2^{-\omega(d)}$, there exists $\varepsilon \geq 2^{-o(d)}$ such that \[\Pr[{\tt SolutionSearch} \text{ outputs } (\vx,\vy,\vz)] \geq \varepsilon / |\cT_{\mathrm{sol}}(R,R')|\] for all $(\vx,\vy,\vz) \in \cT_{\mathrm{sol}}(R,R')$.  
Therefore, for all $\searchthreshold \geq \frac{1}{\varepsilon} \ln(|\cT_{\mathrm{sol}}(R,R')|) + \omega(d)$, ~\Cref{lem:find-all} implies that the \textbf{Search} phase outputs all elements of $\cT_{\mathrm{sol}}(R,R')$ in time $\searchthreshold \, {\sf S} \, |\cT_{\mathrm{sol}}(R,R')|^{1+o(d)}$, where ${\sf S}$ denotes the time complexity of ${\tt SolutionSearch}$. Since $|\cT_{\mathrm{sol}}(R,R')| \leq |L|^3 \leq 2^{O(d)}$, it suffices to take a sufficiently large $\searchthreshold = 2^{o(d)}$. The complexity of the \textbf{Search} phase then follows from~\Cref{lem:analysis-SolutionSearch}. 
\end{proof}

\subsection{Final ingredients}

This section presents a few lemmas that will be used to prove our main theorem, \Cref{thm:main-3List}. 

\begin{lemma}\label{lem:bucket-size} 
    Let $m$ be a positive integer and let $\alpha \in (0, \pi/2)$ satisfy $\alpha = \Omega(1)$ and $m p_{\alpha} = \omega(d)$. 
    Let $C \subseteq \cS^{d-1}$ be of size $2^{O(d)}$. 
    With probability $1 - 2^{-\omega(d)}$ over $L \sim \cU(\cS^{d-1},m)$, $|(R_L)^{\!\transpose}(\vc)| =_d m  p_{\alpha}$ for all $\vc \in C$, and thus $|R_L| =_d |C| m p_{\alpha}$, where 
    $R_L \coloneqq \{(\vx,\vc) \in L \times C \colon \innerP{\vx}{\vc} \approx_\epsapprox \cos(\alpha)\}$.
\end{lemma}

\begin{proof}
Fix $\vc \in C$. View $L \sim \cU(\cS^{d-1},m)$ as an ordered sequence $(\vx_1,\dots,\vx_m)$ of i.i.d.\ uniformly random unit vectors, and consider the sum $X = \sum_{i \in [m]} X_i$ of i.i.d.\ random variables $X_i \in \{0,1\}$ defined by $X_i = 1$ if and only if $\innerP{\vx_i}{\vc} \approx_\epsapprox \cos(\alpha)$. Note that $X = |(R_L)^{\!\transpose}(\vc)|$. 
By~\Cref{lem: approx cap volume} and by assumption, $\E_L[X] =_d m p_{\alpha} = \omega(d)$, so~\Cref{cor:simple-application-of-Chernoff} implies $|(R_L)^{\!\transpose}(\vc)| =_d m p_{\alpha}$ except with probability at most $\exp(-\omega(d))$.  
Hence, applying a union bound over all $2^{O(d)}$ vectors $\vc\in C$ yields $|(R_L)^{\!\transpose}(\vc)| =_d m p_{\alpha}$ for all $\vc \in C$, except with probability $|C| \exp(-\omega(d)) = \exp(-\omega(d))$.
\end{proof}

\begin{lemma}\label{lem:number-of-z-under-distribution-L}
    Let $m = 2^{O(d)}$ and let $\theta' \in (0,\pi/2)$ satisfy $\theta' = \Omega(1)$. 
    With probability $1 - 2^{-\omega(d)}$ over $L \sim \cU(\cS^{d-1},m)$, $|\{\vz \in L \colon \innerP{\tfrac{\vx-\vy}{\norm{\vx-\vy}}}{\vz} \approx_\epsapprox \cos(\theta') \}| \leq_d \max\{1, m p_{\theta'}\}$ for all $(\vx, \vy) \in L^2$.  
\end{lemma}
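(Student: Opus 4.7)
The plan is to reduce to a fixed-direction Chernoff bound by removing the dependence of $\vu := \tfrac{\vx-\vy}{\norm{\vx-\vy}}$ on the random list~$L$. Since $\vx$ and $\vy$ themselves lie in $L$, I cannot directly apply concentration to a sum of indicators involving $\vu$. Instead, I would fix two distinct indices $i,j \in \{1,\dots,m\}$ and condition on $L_i$ and $L_j$. Under this conditioning, $\vu = \tfrac{L_i - L_j}{\|L_i - L_j\|}$ becomes a deterministic unit vector, while the remaining $m-2$ samples of $L$ are still i.i.d.\ uniform on $\cS^{d-1}$.

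For each $k \notin \{i,j\}$, let $X_k \in \{0,1\}$ indicate $\langle \vu, L_k \rangle \approx_\epsapprox \cos(\theta')$. By \Cref{lem: approx cap volume}, each $X_k$ is a Bernoulli random variable with mean $=_d p_{\theta'}$, and the conditioning ensures the $X_k$ are independent. Hence $X := \sum_{k \notin \{i,j\}} X_k$ satisfies $\mathbb{E}[X] =_d m p_{\theta'}$, and \Cref{cor:simple-application-of-Chernoff} gives
\begin{equation*}
\Pr\!\bigl[X > 2^{o(d)} \max\{1, m p_{\theta'}\}\bigr] \leq 2^{-\omega(d)}.
\end{equation*}
The quantity $|\{\vz \in L \colon \langle \vu, \vz \rangle \approx_\epsapprox \cos(\theta')\}|$ differs from $X$ by at most $2$ (accounting for whether $\vx = L_i$ or $\vy = L_j$ themselves meet the constraint), so the same $\leq_d \max\{1, m p_{\theta'}\}$ bound holds for the count under conditioning on $L_i, L_j$.

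Finally I would union bound over all $m^2 = 2^{O(d)}$ choices of the pair $(i,j)$, which costs only a $2^{O(d)}$ factor in the failure probability and leaves $m^2 \cdot 2^{-\omega(d)} = 2^{-\omega(d)}$, as desired. The only subtle point is that when $m p_{\theta'} \ll 1$, the ``expected-value'' form of Chernoff is vacuous, yet we still need a $2^{-\omega(d)}$ concentration bound for the union bound to go through. This is handled by the upper-tail form $\Pr[X \geq B] \leq (e m p_{\theta'}/B)^B$, which, for a threshold $B$ growing as (say) $2^{d/\log^2 d}$, yields a failure probability much smaller than $2^{-\omega(d)}$; this edge case is already absorbed into \Cref{cor:simple-application-of-Chernoff}, so no separate argument is needed.
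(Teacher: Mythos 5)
Your proposal is correct and takes essentially the same approach as the paper: concentrate the count for a fixed direction via Chernoff (using \Cref{lem: approx cap volume} plus \Cref{cor:simple-application-of-Chernoff}), then union bound over the $m^2 = 2^{O(d)}$ pairs. Your version is in fact written more carefully than the paper's, since you explicitly condition on the two indices $i,j$ defining $\vu$ so that the remaining $m-2$ list elements are genuinely independent of $\vu$, whereas the paper informally ``fixes $(\vx,\vy)\in L^2$'' without spelling out that this is a conditioning step; both routes are sound, and your observation that the small-$\mathbb{E}[X]$ regime is already absorbed into \Cref{cor:simple-application-of-Chernoff} is also correct.
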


\begin{proof} 
View $L$ as an ordered sequence $(\vx_1,\dots,\vx_m)$ and let $(i,j) \in [m]^2$. For all $k \in [m-2]$, define the random variable $X_k \in \{0,1\}$ by $X_k = 1$ if and only if $\innerP{(\vx_i-\vx_j)/\norm{\vx_i-\vx_j}}{\vx_k} \approx_\epsapprox \cos(\theta')$. If  $L \sim \cU(\cS^{d-1},m)$, then the $X_k$ are i.i.d.\ and $\E_L[\sum_{k \in [m-2]} X_k] =_d m p_{\theta'}$ by~\Cref{lem: approx cap volume}. Therefore, \Cref{cor:simple-application-of-Chernoff} implies $\sum_{k \in [m-2]} X_k \leq_d \max\{1, m p_{\theta'}\}$ except with probability at most $\exp(-\omega(d))$. 
The lemma statement now follows by applying a union bound over all $(i,j) \in [m]^2$, using that $m = 2^{O(d)}$ and that including $\vx_i$ or $\vx_j$ in the count does not affect the asymptotic upper bound.  
\end{proof}

\begin{lemma}[Size of $\cT_1(R)$]\label{lem:ub-on-T1-R} 
    Let $m = 2^{O(d)}$ and let $\theta, \alpha \in (0, \pi/2)$ satisfy $\min\{\theta, 2\alpha - \theta\} = \Omega(1)$ and $m\cW(\theta, \alpha \mid \alpha) = \omega(d)$.
    Let $C \subseteq \cS^{d-1}$ be of size $2^{O(d)}$, $R \coloneqq \{(\vx,\vc) \in \cS^{d-1} \times C \colon \innerP{\vx}{\vc} \approx_\epsapprox \cos(\alpha)\}$, and $R_L \coloneqq R \vert_{L \times C}$. 
    With probability $1 - 2^{-\omega(d)}$ over $L \sim \cU(\cS^{d-1},m)$, 
    $|\cT_1(R)| \leq_d |R_L| m \cW(\theta, \alpha \mid \alpha)$,  
    where $\cT_1(R)$ is defined in~\Cref{eq:defn-Ti-R}.
\end{lemma}

\begin{proof}
Viewing $L$ as an ordered sequence $(\vx_1,\dots,\vx_m)$ of unit vectors, let $i \in [m]$ and $\vc \in R(\vx_i)$. Note that $L \sim \cU(\cS^{d-1},m)$ implies $L_{-i} \sim \cU(\cS^{d-1},m-1)$, where $L_{-i} \coloneqq (\vx_1, \dots, \vx_{i-1}, \vx_{i+1}, \dots, \vx_m)$. Therefore, by~\Cref{lem: approx wedge volume}, $\Pr_{\vy \sim \cU(L_{-i})}[\innerP{\vx_i}{\vy} \approx_\epsapprox \cos(\theta), \innerP{\vy}{\vc} \approx_\epsapprox \cos(\alpha)] =_d \cW(\theta,\alpha \mid \alpha)$, giving 
\[
    \E_{L_{-i}}[|\{\vy \in L_{-i} \colon \innerP{\vx_i}{\vy} \approx_\epsapprox \cos(\theta), \innerP{\vy}{\vc} \approx_\epsapprox \cos(\alpha)\}|] =_d (m - 1) \cW(\theta,\alpha \mid \alpha) =_d m\cW(\theta,\alpha \mid \alpha)
\]
which is $\omega(d)$. 
Thus, $|\{\vy \in L_{-i} \colon \innerP{\vx_i}{\vy} \approx_\epsapprox \cos(\theta), \innerP{\vy}{\vc} \approx_\epsapprox \cos(\alpha)\}| =_d m \cW(\theta,\alpha \mid \alpha)$ except with probability $\exp(-\omega(d))$ by~\Cref{cor:simple-application-of-Chernoff}. 
Using a union bound over all $i \in [m]$ and $\vc \in R(\vx_i)$ yields $\sum_{i \in [m]} \sum_{\vc \in R(\vx_i)}  |\{\vy \in L_{-i} \colon \innerP{\vx_i}{\vy} \approx_\epsapprox \cos(\theta), \innerP{\vy}{\vc} \approx_\epsapprox \cos(\alpha)\}| \leq_d |R_L| m \cW(\theta,\alpha \mid \alpha)$, except with probability $m |C| \exp(-\omega(d)) = \exp(-\omega(d))$, so the statement follows by definition of $\cT_1(R)$.  
\end{proof}

\begin{lemma}[Size of $\cT_{\mathrm{sol}}(L, \theta,\theta')$]\label{lem:size-of-Tsol}
    Let $m = 2^{\Omega(d)}$ and let $\theta, \theta' \in (0,\pi/2)$ satisfy $\min\{\theta, \theta'\} = \Omega(1)$ and $\min\{m^2 p_{\theta}, m^2 p_{\theta'}, m^3 p_{\theta} p_{\theta'}\} = 2^{\Omega(d)}$.  
    With probability $1 - 2^{-\Omega(d)}$ over $L \sim \cU(\cS^{d-1}, m)$, we have $|\{(\vx,\vy,\vz) \in \cT_{\mathrm{sol}}(L, \theta,\theta') \colon \text{$\vx,\vy,\vz$ are distinct}\}| =_d m^3 p_{\theta} p_{\theta'}$, where $\cT_{\mathrm{sol}}(L, \theta,\theta')$ is defined in~\Cref{eq:T-sol}. 
\end{lemma}

\begin{proof}
We think of $L$ as an ordered sequence $(\vx_1,\dots,\vx_m)$ of i.i.d.\ uniformly random unit vectors. Let $I$ denote the set of all $m(m-1)(m-2)$  triples $(i,j,k)$ of distinct indices from $[m]$. For each $(i,j,k) \in I$, let $Z_{ijk}$ be the indicator random variable defined by $Z_{ijk} = 1$ if and only if both $\innerP{\vx_{i}}{\vx_{j}} \approx_\epsapprox \cos(\theta)$ and $\innerP{\tfrac{\vx_{i}-\vx_{j}}{\norm{\vx_{i}-\vx_{j}}}}{\vx_{k}} \approx_\epsapprox \cos(\theta')$. 
Then $Z \coloneqq \sum_{(i,j,k) \in I}Z_{ijk}$ counts the number of elements in $\cT_{\mathrm{sol}}(L, \theta,\theta')$ that consist of distinct vectors.  
By linearity of expectation, $\mu \coloneqq \E[Z]$ satisfies $\mu=m(m-1)(m-2)p p'$, where 
\begin{equation*}
    p \coloneqq \Pr[\innerP{\vx}{\vy} \approx_\epsapprox \cos(\theta)] \quad \text{ and } \quad p' \coloneqq \Pr[\innerP{\tfrac{\vx-\vy}{\norm{\vx-\vy}}}{\vz} \approx_\epsapprox \cos(\theta')]
\end{equation*}
for independent samples $\vx,\vy,\vz \sim \cU(\cS^{d-1})$. 
By~\Cref{lem: approx cap volume}, $p =_d p_\theta$ and $p' =_d p_{\theta'}$, so it suffices to show $Z =_d \mu$. 
We will now argue that the variance of $Z$ is 
$\sigma^2 \leq \mu^2 2^{-\Omega(d)}$. Chebyshev's inequality (which gives $\Pr[|Z-\mu|\geq k\sigma]\leq 1/k^2$ for any real $k > 0$) then implies tight concentration: the probability that $|Z - \mu| \geq \mu/2$ is at most $4\sigma^2/\mu^2$, so $Z =_d \mu$ except with probability $2^{-\Omega(d)}$. 

To upper bound $\sigma^2=\E[Z^2]-\mu^2$, we note $\E[Z^2]=\sum_{(i,j,k) \in I} \sum_{(i',j',k') \in I} \E[Z_{ijk} Z_{i'j'k'}]$.
Fix $(i,j,k) \in I$ and partition $I$ into four disjoint sets $I_0, I_1, I_2, I_3$, where $I_\ell \coloneqq \{(i',j',k') \in I \colon |\{i,j,k\} \cap \{i',j',k'\}| = \ell\}$.  
By independence and by definition of the sets $I_\ell$, we obtain \begin{align*}
    \sum_{(i',j',k') \in I_\ell}\Pr[Z_{i'j'k'} = 1 \mid Z_{ijk} = 1] \leq \begin{cases}
        (m-3)(m-4)(m-5) pp' &\quad\text{if $\ell = 0$} \\ 
        3(m-3)(m-4) pp' &\quad\text{if $\ell = 1$} \\ 
        3(m-3) \max\{p,p'\} &\quad\text{if $\ell = 2$} \\ 
        1 &\quad\text{if $\ell = 3$.}
    \end{cases}
\end{align*} 
Since $\mu = m(m-1)(m-2)pp'$, we obtain $\E[Z^2] \leq \mu^2 + O(\mu^2/\min\{m, m^2 p, m^2 p', \mu\})$. By assumption, $\min\{m, m^2 p, m^2 p', \mu\} =_d \min\{m, m^2 p_\theta, m^2 p_{\theta'}, m^3 p_\theta p_{\theta'}\} = 2^{\Omega(d)}$, which implies $\sigma^2 \leq \mu^2 2^{-\Omega(d)}$. As explained before, the lemma statement now follows from Chebyshev's inequality. 
\end{proof} 

\begin{lemma}\label{lem:probability-being-covered-by-RPC}
Let $\theta, \alpha \in (0, \pi/2)$ satisfy $\min\{\theta, 2\alpha - \theta\} = \Omega(1)$. 
For all $\vx,\vy\in \cS^{d-1}$ that satisfy $\innerP{\vx}{\vy} \approx_\epsapprox \cos(\theta)$, 
\begin{align*}
    \Pr_{C \sim \mathrm{RPC}(d,\log d,1/p_{\alpha})}\!\big[R(\vx) \cap R(\vy) \neq \emptyset \text{ and } |R(\vx)| \leq 2^{d /\!\log d}\big] =_d \frac{\cW(\alpha, \alpha \mid \theta)}{p_\alpha} 
\end{align*}
where $R \coloneqq \{(\vx,\vc) \in \cS^{d-1} \times C \colon \innerP{\vx}{\vc} \approx_\epsapprox \cos(\alpha)\}$.
\end{lemma}

\begin{proof} 
Define $b \coloneqq \log d$. 
The upper bound follows from~\Cref{lem: approx wedge volume}, because $\innerP{\vx}{\vy} \approx_\epsapprox \cos(\theta)$. 
Since $\Pr_{C}\!\big[|R(\vx)| \in [1, 2^{d /\!\log d}]\big] \geq_d 1$ by~\Cref{lem:random-behavior-RPCs} (applied with $\theta = 0$) and a careful application of Markov's inequality, we obtain 
\begin{align*}
    \Pr_{C}\!\big[R(\vx) \cap R(\vy) \neq \emptyset \text{ and } |R(\vx)| \leq 2^{d /\!\log d}\big] &= \Pr_{C}\!\big[R(\vx) \cap R(\vy) \neq \emptyset \text{ and } |R(\vx)| \in [1, 2^{d /\!\log d}]\big] \\ 
    &\geq_d \Pr_{C}\!\big[R(\vx) \cap R(\vy) \neq \emptyset \, \big\vert \, |R(\vx)| \in [1, 2^{d /\!\log d}]\big]. 
\end{align*} 
Therefore, it remains to lower bound the latter, i.e., the probability of the event $E$ that there is a center point  $\vc \in C$  that lands in the region 
$$\widetilde{\cW}_{\vx,\alpha,\vy,\alpha} = \{\vs \in \cS^{d-1} \colon \innerP{\vx}{\vs} \approx_\epsapprox \cos(\alpha), \innerP{\vy}{\vs} \approx_\epsapprox \cos(\alpha)\},$$ 
conditioned on the event $E'$ that the number of center points $\vc \in C$ that lie in the ``approximate'' spherical cap $\widetilde{\cH}_{\vx,\alpha} = \{\vs \in \cS^{d-1} \colon \innerP{\vx}{\vs} \approx_\epsapprox \cos(\alpha)\} \supseteq \widetilde{\cW}_{\vx,\alpha,\vy,\alpha}$ is between $1$ and $2^{d /\!\log d}$. 
Thus, suppose event $E'$ holds. Then event $E$ holds if one of those center points $\vc^*$ that lie in $\widetilde{\cH}_{\vx,\alpha}$ (let's say $\vc^*$ is the first such center point, assuming without loss of generality some ordering on $C$) also satisfies $\innerP{\vy}{\vc^*} \approx_\epsapprox \cos(\alpha)$, i.e., $\vc^* \in \widetilde{\cW}_{\vx,\alpha,\vy,\alpha}$. 

We recall that the distribution $C \sim \mathrm{RPC}(d,b,1/p_{\alpha})$ is defined as taking $C = \mQ(C^{(1)} \times \dots \times C^{(b)})$, where $\mQ \sim \cU(\mathrm{SO}(d))$ and $C^{(i)} \sim \cU(\frac{1}{\sqrt{b}}\cS^{d/b -1}, (1/p_{\alpha})^{1/b})$ for $i \in \{1,\dots,b\}$. 
Therefore, sampling $C \sim \mathrm{RPC}(d,b,1/p_{\alpha})$ conditional on $|R(\vx)| \in [1, 2^{d /\!\log d}]$ is equivalent to sampling $C$ as follows: 
\begin{enumerate}
    \item Sample $\vc^* \coloneqq \mQ(\vv^*_1,\dots,\vv^*_b)$, by first sampling $\mQ \sim \cU(\mathrm{SO}(d))$ and then sampling the tuple $(\vv^*_1,\dots,\vv^*_b) \sim \cU(\frac{1}{\sqrt{b}}\cS^{d/b -1}) \times \dots \times\cU(\frac{1}{\sqrt{b}}\cS^{d/b -1})$ conditional on $\innerP{\vx}{\vc^*} \approx_\epsapprox \cos(\alpha)$.
    \item Sample $\widetilde{C}^{(i)} \sim \cU(\frac{1}{\sqrt{b}} \cS^{d/b-1}, (1/p_{\alpha})^{1/b} - 1)$ for all $i \in \{1,\dots,b\}$, conditional on $|C \cap \widetilde{\cH}_{\vx,\alpha}| \in [1,2^{d /\!\log d}]$ where $C \coloneqq \mQ(C^{(1)}\times \dots \times C^{(b)})$ for $C^{(i)} \coloneqq \{\vv^*_i\} \cup \widetilde{C}^{(i)}$. 
\end{enumerate} 
The second step does not influence the probability that the vector $\vc^*$ sampled in the first step lands in $\widetilde{\cW}_{\vx,\alpha,\vy,\alpha}$. Moreover, the distribution of the vector $\vc^*$ sampled in the first step is the same as sampling $\vc^* \sim \cU(\cS^{d-1})$ conditional on $\innerP{\vx}{\vc^*} \approx_\epsapprox \cos(\alpha)$. 
In other words, we obtain
\begin{align*}
    \Pr_{C}\!\big[R(\vx) \cap R(\vy) \neq \emptyset \mid |R(\vx)| \in [1, 2^{d /\!\log d}]\big]  
    &\geq \Pr_{C}\!\big[ \innerP{\vy}{\vc^*} \approx_\epsapprox \cos(\alpha) \, \big\vert \, |R(\vx)| \in [1, 2^{d /\!\log d}] \big]  \\ 
    &= \Pr_{\vc^* \sim \cU(\cS^{d-1})}\!\big[ \innerP{\vy}{\vc^*} \approx_\epsapprox \cos(\alpha) \, \big\vert \, \innerP{\vx}{\vc^*} \approx_\epsapprox \cos(\alpha) \big] \\
    &=_d \frac{\cW(\alpha, \alpha \mid \theta)}{p_{\alpha}}. 
\end{align*} 
by~\Cref{lem: approx cap volume} and~\Cref{lem: approx wedge volume} (again, using $\innerP{\vx}{\vy} \approx_\epsapprox \cos(\theta)$). 
\end{proof}

\subsection{Final analysis of our quantum algorithm}\label{sec:proof-of-main-thm}

We now show that {\tt 3List} (i.e., \Cref{algo:main-3List}) solves~\Cref{prob:finding-many-3-tuples} for a random list $L \subseteq \cS^{d-1}$ of size $m = (27/16)^{d/4 +o(d)}$ in complexity $2^{0.284551d + o(d)}$, thereby proving~\Cref{thm:main-3List}. 
First, we formalize the claim we made in the introduction of~\Cref{sec:main-quantum-algorithm} regarding $\cT_{\mathrm{sol}}(L, \theta,\theta')$.  

\begin{lemma}\label{lem:size-of-Tsol-application} 
    Let $L \subseteq \cS^{d-1}$ and let $\theta, \theta' \in (0,\pi/2)$ satisfy $\cos(\theta) = \frac{1}{3}$ and $\cos(\theta') = \epsapprox + \sqrt{\frac{1}{3} + \frac{\epsapprox}{2}}$. 
    Then $\norm{\vx - \vy - \vz} \leq 1$ for all $(\vx,\vy,\vz) \in \cT_{\mathrm{sol}}(L, \theta,\theta')$, where $\cT_{\mathrm{sol}}(L, \theta,\theta')$ is defined in~\Cref{eq:T-sol}. 
    
    \noindent Moreover, there exists $m' = (27/16)^{d/4 + o(d)}$ such that for all $m \geq m'$: 
    \begin{enumerate}[label=(\roman*)]
        \item $\E_{L \sim \cU(\cS^{d-1},m)}[|\cT_{\mathrm{sol}}(L, \theta,\theta')|] \geq m$. 
        \item With probability $1 - 2^{-\Omega(d)}$ over $L \sim \cU(\cS^{d-1},m)$, $m \leq |\cT_{\mathrm{sol}}(L, \theta,\theta')| \leq_d m^3 p_\theta p_{\theta'}$ and every $(\vx,\vy,\vz) \in \cT_{\mathrm{sol}}(L, \theta,\theta')$ satisfies $\vx \neq \vy$, $\vy \neq \vz$, and $\vz \neq \vx$. 
    \end{enumerate}
\end{lemma} 

\begin{proof} 
For all $\vx,\vy,\vz \in \cS^{d-1}$, we have $\norm{\vx - \vy - \vz}^2 \leq 1$ if and only if $\innerP{\tfrac{\vx-\vy}{\norm{\vx-\vy}}}{\vz} \geq \sqrt{(1 - \innerP{\vx}{\vy})/2}$. 
Consider arbitrary~$\theta \in (0, \pi/2)$, and define $\theta'$ by $\cos(\theta') = \epsapprox + \sqrt{(1 - \cos(\theta) + \epsapprox)/2}$. 
Then all $(\vx,\vy,\vz) \in \cT_{\mathrm{sol}}(L, \theta,\theta')$ 
satisfy 
\[
    \sqrt{\frac{1 - \innerP{\vx}{\vy}}{2}} \leq \sqrt{\frac{1 - \cos(\theta) + \epsapprox}{2}} = \cos(\theta') - \epsapprox \leq \innerP{\frac{\vx-\vy}{\norm{\vx-\vy}}}{\vz} 
\] 
and thus $\norm{\vx - \vy - \vz} \leq 1$. 
We will now show that the lemma statement holds for $\cos(\theta) = 1/3$ (and thus $\cos(\theta') = \epsapprox + \sqrt{1/3 + \epsapprox/2}$).

Note that $p_{\theta'} = (1 - \cos^2(\theta'))^{d/2} =_d (1 - (\frac{1 - \cos(\theta)}{2}))^{d/2} = (\frac{1 + \cos(\theta)}{2})^{d/2} = \cos(\tfrac{\theta}{2})^{d}$ by the fixed choice of $\epsapprox = 1/(\log d)^2$. Therefore, if $\theta = \Omega(1)$ and $\theta' = \Omega(1)$, we obtain (for $L \sim \cU(\cS^{d-1},m)$)
\begin{align*}
    \E_{L}\big[|\cT_{\mathrm{sol}}(L, \theta,\theta')|\big] &= \sum_{(\vx,\vy,\vz) \in L^3} \Pr_{L}\!\big[\innerP{\vx}{\vy} \approx_\epsapprox \cos(\theta)\big] \cdot \Pr_{L}\!\big[\innerP{\tfrac{\vx-\vy}{\norm{\vx-\vy}}}{\vz} \approx_\epsapprox \cos(\theta') \, \big\vert \, \innerP{\vx}{\vy} \approx_\epsapprox \cos(\theta)\big]  \\
    &=_d m^3 p_{\theta} p_{\theta'} \\ 
    &=_d m^3 (\sin(\theta)\cos(\tfrac{\theta}{2}))^{d}
\end{align*}  
by~\Cref{lem: approx cap volume}.
In particular, there exists $m' = (\sin(\theta)\cos(\tfrac{\theta}{2}))^{-d/2}  2^{o(d)}$ such that, whenever the list size $|L|=m$ is $\geq m'$, the expected size of $\cT_{\mathrm{sol}}(L, \theta,\theta')$ is at least $m$. 
As $(\sin(\theta)\cos(\tfrac{\theta}{2}))^{-d/2}$ is minimized for $\cos(\theta) = 1/3$, we choose $\cos(\theta) = 1/3$ and choose $\theta'$ accordingly. 
The corresponding $m'$ then satisfies $m' = (27/16)^{d/4 + o(d)}$. 

It remains to prove part~$(ii)$. For all sufficiently large $d$, with probability $1$ over $L \sim \cU(\cS^{d-1},m)$, every $(\vx,\vy,\vz) \in \cT_{\mathrm{sol}}(L, \theta,\theta')$ satisfies  $\vx \neq \vy$, $\vy \neq \vz$, and $\vz \neq \vx$ (recall~\Cref{rem:elems-Tsol-are-distinct}). Indeed, consider $(\vx,\vy,\vz) \in \cT_{\mathrm{sol}}(L, \theta,\theta')$, which implies $\norm{\vx-\vy-\vz} \leq 1$. By definition, $\innerP{\vx}{\vy} \approx_\epsapprox \cos(\theta)$, so $\vx \neq \vy$ for sufficiently large $d$. Next, $\vy \neq \vz$, because otherwise $\|\vx - 2 \vy\| > 1$, a contradiction. Finally, if $\vx \neq \vy$ while $\vx = \vz$, then $\innerP{\vx}{\vy} = \cos(\theta) - \epsapprox$ (by definition of $\cT_{\mathrm{sol}}(L, \theta,\theta')$ and $\theta'$), which occurs with measure 0 for independent $\vx, \vy \sim \cU(\cS^{d-1})$. 
Therefore, by choosing $m' = (27/16)^{d/4 + o(d)}$ sufficiently large, part~$(ii)$ follows from~\Cref{lem:size-of-Tsol} (note that $\theta$ and $\theta'$ satisfy the constraints). 
\end{proof}

\noindent The following lemma implies that if $|\cT_{\mathrm{sol}}(L, \theta,\theta')| \geq m$ and the parameters $(\alpha, \alpha',\nrep, \searchthreshold)$ of ${\tt 3List}$ are chosen appropriately, then, with high probability, ${\tt 3List}$ outputs $m$ distinct elements of $\cT_{\mathrm{sol}}(L, \theta,\theta')$. 
To prove~\Cref{thm:main-3List}, it then remains to show that, for $\theta,\theta' \in (0,\pi/2)$ as given in~\Cref{lem:size-of-Tsol-application}, there exists a suitable choice of $(\alpha, \alpha')$ such that the complexity is as desired. 
 
\begin{lemma}\label{lem:main-3List-numberofreps} 
    Let $m = 2^{\Theta(d)}$ and let~$\theta, \theta', \alpha, \alpha' \in (0, \pi/2)$ satisfy $\min\{\theta, \theta', 2\alpha - \theta, 2\alpha' - \theta'\} = \Omega(1)$, $m p_{\theta'} \leq 2^{o(d)}$, and $\min \{m p_{\alpha}, m p_{\alpha'}, m \, \cW(\theta, \alpha \mid \alpha), m^2 \, \cW(\theta', \alpha' \mid \alpha')\} = \omega(d)$. 
    Let $\nrep = \frac{p_{\alpha}}{\cW(\alpha, \alpha \mid \theta)}  \frac{p_{\alpha'}}{\cW(\alpha', \alpha' \mid \theta')} 2^{o(d)}$ and $\searchthreshold = 2^{o(d)}$ be sufficiently large.\footnote{Similarly to~\Cref{lem:analysis-Search-phase}, ``sufficiently large'' means that there exists some $f(d) \leq 2^{o(d)}$ such that~\Cref{lem:main-3List-numberofreps} holds for all $\nrep = \frac{p_{\alpha}}{\cW(\alpha, \alpha \mid \theta)}  \frac{p_{\alpha'}}{\cW(\alpha', \alpha' \mid \theta')} 2^{o(d)}$ and $\searchthreshold = 2^{o(d)}$ with $\nrep \geq \frac{p_{\alpha}}{\cW(\alpha, \alpha \mid \theta)}  \frac{p_{\alpha'}}{\cW(\alpha', \alpha' \mid \theta')} f(d)$ and $\searchthreshold \geq f(d)$.} 
    With probability $1 - 2^{-\Omega(d)}$ over $L \sim \cU(\cS^{d-1},m)$ and the internal randomness of the algorithm, ${\tt 3List}(L, \theta,\theta')$ (\Cref{algo:main-3List}) with parameters $(\alpha,\alpha', \nrep, \searchthreshold)$ outputs a list containing $m$ elements of $\cT_{\mathrm{sol}}(L, \theta,\theta')$, if $m$ elements exist, in time 
    \begin{align}
        \nrep \left( m  + \sqrt{\frac{|\cT_{\mathrm{sol}}(L, \theta,\theta')|}{\nrep}} \left(\sqrt{m^2 p_\alpha} + \sqrt{m^3 \cW(\theta, \alpha \mid \alpha) p_{\alpha'}} \right)\right) 2^{o(d)}
    \label{eq:main-3List-time-complexity}
\end{align}
    using $\max\{m, |\cT_{\mathrm{sol}}(L, \theta,\theta')|\} \, 2^{o(d)}$ classical memory and QCRAM bits, and $2^{o(d)}$ qubits. 
\end{lemma} 

\begin{proof}  
On input $(L, \theta, \theta')$ and with parameters $(\alpha, \alpha', \nrep, \searchthreshold)$, {\tt 3List} (\Cref{algo:main-3List}) samples $\nrep$ RPC pairs, independently. For each sampled RPC pair $(C,C')$, it obtains corresponding data structures $(D, D')$ from ${\tt Preprocess}(L, C, C')$ (\Cref{algo:preprocess}), and repeatedly applies ${\tt SolutionSearch}(D, D')$ (\Cref{algo:SolutionSearch}).  
In the remainder of this proof, we write $\cT_{\mathrm{sol}}$ as shorthand for $\cT_{\mathrm{sol}}(L, \theta,\theta')$ and, for each $j \in [\nrep]$, we write $(R_j, R'_j)$ for the relations corresponding to the $j$-th sampled RPC pair. 

We claim that, for all sufficiently large $\nrep = \frac{p_{\alpha}}{\cW(\alpha, \alpha \mid \theta)}  \frac{p_{\alpha'}}{\cW(\alpha', \alpha' \mid \theta')} 2^{o(d)}$, the following statements simultaneously hold, except with probability $2^{-\omega(d)}$ over $L$ and the $\nrep$ sampled RPC pairs: 
\begin{enumerate}[label=(\Roman*)]
    \item For all $j \in [\nrep]$, $(R_j, R'_j)$ is well-balanced on $L$ (\Cref{defn:well-balanced}).
    \item For all $j \in [\nrep]$, $|\cT_1(R_j)| \leq_d m^2 \cW(\theta, \alpha \mid \alpha)$.  
    \item For all $(\vx,\vy,\vz) \in \cT_{\mathrm{sol}}$, $|\{j \in [\nrep] \colon (\vx,\vy,\vz) \in \cT_{\mathrm{sol}}(R_j,R'_j)\}| \in [1, 2^{o(d)}]$. 
\end{enumerate} 
By~\Cref{lem:analysis-Search-phase}, (I) implies that for sufficiently large $\searchthreshold = 2^{o(d)}$, for all $j \in [\nrep]$, the $j$-th repetition of the outer loop of {\tt 3List} finds all elements of the corresponding set $\cT_{\mathrm{sol}}(R_j,R'_j)$. 
Since (III) implies $\bigcup_{j=1}^\nrep \cT_{\mathrm{sol}}(R_j,R'_j) = \cT_{\mathrm{sol}}$, the output of ${\tt 3List}$ contains all elements of $\cT_{\mathrm{sol}}$.  

Next, an upper bound on the time complexity of ${\tt 3List}$ is given by
$$T_{\tt 3List} = \sum_{j=1}^\nrep (T_{\tt Sample}(j) + T_{\tt Preprocess}(j) + T_{\tt Search}(j))$$ 
where $T_{\tt Sample}(j), T_{\tt Preprocess}(j), T_{\tt Search}(j)$ are upper bounds on the time complexity of the respective phases in the $j$-th repetition of the outer loop of ${\tt 3 List}$.  
It is immediate that $T_{\tt Sample}(j) = 2^{o(d)}$ for all $j \in [\nrep]$, and the \textbf{Sampling} phases together use at most $2^{o(d)}$ classical memory (as the memory can be reinitialized in each repetition).  
By~\Cref{lem:preprocessing-cost} and (I), $T_{\tt Preprocess}(j) = m 2^{o(d)}$ for all $j \in [\nrep]$, and the \textbf{Preprocessing} phases together use at most $m 2^{o(d)}$ classical memory and QCRAM bits (as the memory can be reinitialized in each repetition). 
By~\Cref{lem:analysis-Search-phase}, (I) and (II) imply that, with probability at least $1 - 2^{-\omega(d)}$, 
\begin{align*}
     T_{\tt Search}(j) = \sqrt{\max\{1, |\cT_{\mathrm{sol}}(R_j,R'_j)|\}} \left(\sqrt{m^2 p_\alpha} + \sqrt{m^3 \cW(\theta, \alpha \mid \alpha) p_{\alpha'}} \right) 2^{o(d)}
\end{align*}   
time and QCRAM queries, $2^{o(d)}$ qubits, and $|\cT_{\mathrm{sol}}(R_j,R'_j)|  2^{o(d)}$ additional classical memory. 
Here, we use that (I) implies $|\cT_0(R_j)| =_d |L|^2 / |C| =_d m^2 p_\alpha$, which is $\omega(d)$ by assumption. 
By the Cauchy-Schwarz inequality and (III), we obtain \[\sum_{j=1}^{\nrep}\sqrt{\max\{1, |\cT_{\mathrm{sol}}(R_j,R'_j)|\}}  \leq \sqrt{\nrep \sum_{j=1}^{\nrep} \max\{1, |\cT_{\mathrm{sol}}(R_j,R'_j)|\}} \leq_d \sqrt{\nrep \max\{\nrep, |\cT_{\mathrm{sol}}|\}}.\] 
By assumption, $\min\{m \cW(\theta, \alpha \mid \alpha), m^2 \cW(\theta', \alpha' \mid \alpha')\} = \omega(d)$, so $m^3 \cW(\theta, \alpha \mid \alpha) \cW(\theta', \alpha' \mid \alpha') = \omega(d)$. Thus, $\nrep \leq_d m^3p_{\theta} p_{\theta'}$ and $\min\{m^2 p_{\theta}, m^2 p_{\theta'}, m^3p_{\theta} p_{\theta'}\} = 2^{\Omega(d)}$, because $p_{\theta} = \cW(\theta, \alpha \mid \alpha) \, 2^{\Omega(d)}$ and $p_{\theta'} = \cW(\theta', \alpha' \mid \alpha') \, 2^{\Omega(d)}$. 
Therefore, with probability at least $1 - 2^{-\Omega(d)}$, $|\cT_{\mathrm{sol}}| \geq_d m^3 p_{\theta} p_{\theta'}$ by~\Cref{lem:size-of-Tsol}, which implies $\sqrt{\nrep \max\{\nrep, |\cT_{\mathrm{sol}}|\}} =_d \sqrt{ \nrep |\cT_{\mathrm{sol}}|}$. 
In particular, we obtain \begin{align*}
    T_{\tt 3List} &\leq_d  \nrep m  + \sqrt{\nrep |\cT_{\mathrm{sol}}|} \left(\sqrt{m^2 p_\alpha} + \sqrt{m^3 \cW(\theta, \alpha \mid \alpha) p_{\alpha'}} \right) 
\end{align*}
from which the lemma follows (as the total amount of classical memory used is $\max\{m, |\cT_{\mathrm{sol}}|\} \, 2^{o(d)}$). 

It remains to prove the conditions (I), (II), and (III) claimed above. By~\Cref{lem:bucket-size},  \Cref{lem:number-of-z-under-distribution-L}, and a union bound over all $j \in [\nrep]$, (I) holds except with probability $2^{-\omega(d)}$ over $L$. Here, we use that the RPCs are of size $1/ p_\alpha$ and $1 / p_{\alpha'}$ (resp.), that $m \min\{p_\alpha, p_{\alpha'}\} = \omega(d)$, and that $m p_{\theta'} \leq_d 1$. 
Moreover, by applying~\Cref{lem:ub-on-T1-R} (using $m\cW(\theta, \alpha \mid \alpha) = \omega(d)$) together with a union bound over the $\nrep = 2^{O(d)}$ sampled RPC pairs, we obtain that (II) follows from (I), except with probability $2^{-\omega(d)}$. 
Finally, to prove (III), let $(\vx,\vy,\vz) \in \cT_{\mathrm{sol}}$ and define $p_{(\vx,\vy,\vz)} \coloneqq \Pr_{(C,C')}[(\vx,\vy,\vz) \in \cT_{\mathrm{sol}}(R,R')]$, where the distribution is over $C \sim \mathrm{RPC}(d,\log d,1/p_{\alpha})$ and $C' \sim \mathrm{RPC}(d,\log d,1/p_{\alpha'})$. 
By~\Cref{lem:probability-being-covered-by-RPC},  \begin{align*}
    p_{(\vx,\vy,\vz)} &= \Pr_{C}\!\big[|R(\vx)| \leq 2^{d /\!\log d}, R(\vx) \cap R(\vy) \neq \emptyset\big] \cdot \Pr_{C'}\!\big[|R'(\tfrac{\vx-\vy}{\norm{\vx-\vy}})| \leq 2^{d /\!\log d}, R'(\tfrac{\vx-\vy}{\norm{\vx-\vy}}) \cap R'(\vz) \neq \emptyset\big] \\ 
    &=_d \frac{\cW(\alpha, \alpha \mid \theta)}{p_{\alpha}} \frac{\cW(\alpha', \alpha' \mid \theta')}{p_{\alpha'}} 
\end{align*} 
where the first equality uses that $C$ and $C'$ are independent. Consequently, for sufficiently large $\nrep = \frac{p_{\alpha}}{\cW(\alpha, \alpha \mid \theta)}  \frac{p_{\alpha'}}{\cW(\alpha', \alpha' \mid \theta')} 2^{o(d)}$, we obtain $\mu \coloneqq \E[|\{j \in [\nrep] \colon (\vx,\vy,\vz) \in \cT_{\mathrm{sol}}(R_j,R'_j)\}|] = \omega(d)$ and $\mu \leq 2^{o(d)}$, where the expectation is taken over the $\nrep$ RPC pairs that are (independently) sampled during {\tt 3List}. 
\Cref{cor:simple-application-of-Chernoff} then implies $|\{j \in [\nrep] \colon (\vx,\vy,\vz) \in \cT_{\mathrm{sol}}(R_j,R'_j)\}| \in [1, 2^{o(d)}]$, except with probability $2^{-\omega(d)}$. 
By applying a union bound over all (at most $2^{O(d)}$) $(\vx,\vy,\vz) \in \cT_{\mathrm{sol}}$, we conclude that (III) holds with probability $1 - 2^{-\omega(d)}$.  
\end{proof} 

\noindent We complete our analysis by proving~\Cref{thm:main-3List}. 

\begin{proof}[Proof of~\Cref{thm:main-3List}] 
Consider an instance $L \sim \cU(\cS^{d-1},m)$ of~\Cref{prob:finding-many-3-tuples} for sufficiently large $m = (27/16)^{d/4 + o(d)}$, and let $\theta, \theta'$ be according to~\Cref{lem:size-of-Tsol-application}. 
By~\Cref{lem:size-of-Tsol-application}, with probability  $1- 2^{-\Omega(d)}$, $\cT_{\mathrm{sol}}(L, \theta,\theta')$ consists only of 3-tuple solutions and satisfies $m \leq |\cT_{\mathrm{sol}}(L, \theta,\theta')| \leq_d m^3 p_{\theta} p_{\theta'} =_d m$. 
Therefore, it suffices to show there exists a quantum algorithm that solves~\Cref{prob:finding-many-3-tuples}  with the claimed time and memory complexity if $m \leq |\cT_{\mathrm{sol}}(L, \theta,\theta')| \leq_d m$, except with probability $2^{-\Omega(d)}$.  

We remark that $\cos(\theta')$ is so close to $1/\sqrt{3}$ (when $d \to \infty$) that we may without loss of generality assume they are equal (as it only affects $p_{\theta'}$ and $\cW(\alpha', \alpha' \mid \theta')$ by subexponential factors, for similar reasons as in the proofs of~\Cref{lem: approx cap volume} and~\Cref{lem: approx wedge volume}).  
Suppose $\alpha, \alpha' \in (0,\pi/2)$ are constants that satisfy: 
\begin{enumerate}[label=(\Roman*)]
    \item $\min\{\theta, \theta', 2\alpha - \theta, 2\alpha' - \theta'\} = \Omega(1)$;
    \item $\min \{m p_{\alpha}, m p_{\alpha'}, m \cW(\theta, \alpha \mid \alpha), m^2 \cW(\theta', \alpha' \mid \alpha')\} = \omega(d)$.  
\end{enumerate}
Then, by~\Cref{lem:main-3List-numberofreps} (using that $mp_{\theta'} \leq 2^{o(d)}$, $m^3 p_{\theta} p_{\theta'} =_d m$, and that $\theta \geq \theta' = \Omega(1)$), there is a quantum algorithm that, with probability $1 - 2^{-\Omega(d)}$ over the randomness of $L$ and the internal randomness of the algorithm, finds $m$ elements of $\cT_{\mathrm{sol}}(L, \theta,\theta')$ (if they exist) with time complexity given by~\Cref{eq:main-3List-time-complexity} for some $\nrep = \frac{p_{\alpha}}{\cW(\alpha, \alpha \mid \theta)}  \frac{p_{\alpha'}}{\cW(\alpha', \alpha' \mid \theta')} 2^{o(d)}$, using $\max\{m, |\cT_{\mathrm{sol}}(L, \theta,\theta')|\} \, 2^{o(d)}$ classical memory and QCRAM bits, and $2^{o(d)}$ qubits. 

To conclude the proof (with  $\cos(\theta) = 1/3$ and $\cos(\theta') = 1/\sqrt{3}$), we show that setting $\alpha, \alpha' \in (0,\pi/2)$ such that $\cos(\alpha) = 0.347606$ and $\cos(\alpha') = 0.427124$ implies that (I) and (II) are satisfied, and results in the desired time complexity. 
It is easy to verify that condition (I) follows from the choice of $\theta,\theta',\alpha,\alpha'$. 
Moreover, this particular choice of $(\alpha, \alpha')$ implies that condition (II) is satisfied and that~\Cref{eq:main-3List-time-complexity} is at most $2^{0.284551d + o(d)}$, as can be verified using the script available at \url{https://github.com/lynnengelberts/3List}.
\end{proof}  

\section{Application to lattice problems}\label{sec:application}

In this section, we detail how our quantum algorithm for~\Cref{prob:finding-many-3-tuples} aids in finding short vectors in a lattice, and can be turned into a quantum algorithm for solving approximate SVP with small approximation factor. 

\subsection{Lattices and the shortest vector problem}

A matrix $\mathbf{B} \in \R^{d\times d}$ with linearly independent columns generates a lattice $\Lambda$ defined as the set of all integer linear combinations of the columns of $\mathbf{B}$. Such a matrix $\mathbf{B}$ is called a \textit{basis} of $\Lambda$, and we remark that (for $d > 1$) such a basis is not unique: by combining elements we can form infinitely many bases for the same lattice. 
Every lattice has at least one shortest nonzero element (with respect to the Euclidean norm), and we denote its length by $\lambda_1>0$.
This gives rise to the Shortest Vector Problem (SVP) that we mentioned in the introduction.

\begin{problem}[Shortest Vector Problem]\label{prob:SVP}
    Given a basis of a lattice $\Lambda \subseteq \R^d$, find a lattice vector of Euclidean norm $\lambda_1$. 
\end{problem} 

\noindent SVP is known to be NP-hard under randomized reductions~\cite{van1981another,Ajtai96}. 
For cryptanalytic purposes, it would suffice to find a nonzero lattice vector of norm $\leq \gamma \lambda_1$ for some reasonably small approximation factor $\gamma \geq 1$ (for fixed $\gamma$, this variant of approximate SVP is often called $\gamma$-SVP), since the security of lattice-based cryptosystems is based on the hardness of (a decision variant of) this problem. 
In particular, many algorithms for attacking lattice-based cryptoschemes --- such as the BKZ algorithm \cite{Schnorr1987AHO,SE94} --- require a subroutine for solving approximate SVP.

\subsection{Sieving algorithms for SVP and the uniform heuristic}\label{sec:explanation-sieving}

This paper is about sieving algorithms, which are an important class of algorithms, including the fastest known classical algorithm~\cite{ADRS15} for (provably) solving SVP. First developed by Ajtai, Kumar, and Sivakumar~\cite{AKS01} and then improved by a series of subsequent works~\cite{Regev04,NV08,PS09,MV10,ADRS15,AS18}, the strategy is to begin by generating numerous lattice vectors, and then iteratively combine and ``sieve'' them to create shorter and shorter vectors. 

To prove the correctness of the runtime of sieving algorithms, existing methods add random perturbations to the lattice vectors (to make them continuous) or carefully control the distribution of the lattice points that are formed in each iteration~\cite{AKS01,MV10,HPS11SVPCVP,ADRS15,AS18}. 
However, these approaches incur extra time and are often the major contributors to the runtime of these --- provably correct --- algorithms. Nguyen and Vidick~\cite[Section 4]{NV08}  therefore suggested a heuristic assumption to simplify the analysis for a subclass of those algorithms, which we will refer to as \textit{heuristic sieving algorithms}.\footnote{The use of heuristics is common in cryptology, since cryptographers are primarily concerned with the practical solvability of problem and with the concrete runtimes of algorithms. Indeed, the practical performance of algorithms directly relates to the security of the cryptosystem, and helps determine appropriate security parameters. When a heuristic assumption holds in the sense that the observed runtime of an algorithm closely matches the asymptotic prediction obtained by assuming the heuristic, then this motivates analyzing the best possible runtime under this heuristic.}
Roughly speaking, this heuristic assumes that after each iteration of the sieving algorithm, the obtained lattice points are uniformly and independently distributed within some thin annulus. 

\begin{heuristic}[Uniform heuristic]\label{heur:iid}
    Let $L \subseteq \R^d$ be a list of lattice vectors obtained after some iteration of a heuristic sieving algorithm. After appropriate rescaling, every element of $L$ behaves as if it is an i.i.d.\ uniform sample from $\textsf{ann}_\rho=\{\vx\in\R^d: \rho \leq \norm{\vx} \leq 1\}$ for some fixed $\rho <1$.
\end{heuristic} 

\noindent This heuristic is typically considered for $\rho\rightarrow 1$ (that is, the list vectors are assumed to be essentially i.i.d.\ uniform on $\cS^{d-1}$). 
Note that this heuristic is technically incorrect, since the output list could include vectors of the form $\vx+\vy$ and $\vx+\vz$, which are correlated to each other (and hence not independent). 
However, it has been verified by several experimental studies that it does appear to capture the empirical behavior of sieving, and under the uniform heuristic, one can end up with more efficient algorithms for solving SVP (both classical and quantum ones) than are known in the provable setting. (E.g., see~\cite{NV08,MV10,BLS2016,ADHKPS19}.) 

Assuming the uniform heuristic, (heuristic) \textit{$2$-tuple sieving} algorithms have the following high-level form (as already described in the introduction). The algorithm first generates a large number of lattice vectors of norm roughly $R$ (for large $R$), and then tries to find pairs of vectors whose sum or difference yields a vector of smaller length. This process is repeated until the algorithm has found a sufficiently short vector to solve (approximate) SVP. 
If we instruct the algorithm in a way that each iteration, given lattice vectors of norm $\leq 1$ (recall that this norm bound is without loss of generality, by scaling the lattice), constructs lattice vectors of norm $\leq 1 - 1/\tau$, then there is a choice of $\tau = \poly(d)$ such that $\tau^2$ (which is polynomial in $d$) sieving iterations suffice to end up with a bunch of sufficiently short lattice vectors.\footnote{
Although the heuristic may fail when vectors become very short, it is typically assumed in this case that the (approximate) SVP instance has already been solved.}  
If each iteration runs in time $\mathsf{T}$, then the uniform heuristic (\Cref{heur:iid}) would ensure we solve SVP in time $\poly(d) \mathsf{T}$. Hence, we typically care about analyzing the time $\mathsf{T}$ and the amount of memory used in each iteration.  
Note that if we start with too few vectors, then  we may end up with not enough vectors for the next iteration. However, the uniform heuristic allows us to calculate how many vectors are needed in the initial list: for 2-tuple sieving, $(4/3)^{d/2+o(d)} \approx 2^{0.2075d}$ vectors suffice (see~\cite{NV08} for a detailed calculation).

Bai, Laarhoven, and Stehl{\'e}~\cite{BLS2016} further generalized the idea of this (2-tuple) sieving algorithm to $k$-tuple sieving: instead of just considering pairs of vectors, they introduced the idea of combining $3$-tuples (or even $k$-tuples) of vectors; that is, trying to find $3$-tuples $(\vx,\vy,\vz)$ in the current list such that $\vx\pm \vy\pm \vz$ is of reduced length. They observed that the memory complexity of $k$-tuple sieving decreases as $k$ increases (at the cost of increased time complexity). Later on, Herold and Kirshanova~\cite{HK2017} extended the uniform heuristic to the case of $k$-tuple sieving (i.e., the output of each iteration of $k$-tuple sieving is distributed uniformly over the sphere) and introduced the $k$-list problem. 
Given a list $L$ of $m$ i.i.d.\ uniform samples from $\cS^{d-1}$, this problem asks to find $m$ $k$-tuples of distinct elements $\vx_1, \dots, \vx_k \in L$ such that $\norm{\vx_1 - \vx_2 - \dots - \vx_k} \leq 1$. Indeed, for $k=3$, this is exactly~\Cref{prob:finding-many-3-tuples}. Herold and Kirshanova observed that under the uniform heuristic (\Cref{heur:iid}), solving the $k$-list problem suffices to solve SVP, at least for a small constant approximation factor.  
Moreover, the uniform heuristic allows us to calculate the minimal required list size $m$ so that $m$ $k$-tuples solutions to the $k$-list problem (and hence $k$-tuple sieving) exist, at least with high probability over the list $L$. When $k$ is constant, this minimal list size is $((k^{\frac{k}{k-1}}) / (k+1))^{\frac{d}{2}}$ up to subexponential factors in $d$~\cite{HK2017}.\footnote{This can be shown by an argument similar to the proof of~\cite[Corollary~1] {HK2017}, using the fact that~\cite[Theorem~1]{HK2017} also gives a lower bound on the probability that a $k$-tuple is a solution. 
}  
When $k=3$, the minimal list size satisfies $|L| = (27/16)^{d/4+o(d)}$, which is exactly the regime we considered for~\Cref{thm:main-3List}.

\subsection{An improved quantum algorithm for SVP using 3-tuple sieving}

Combining~\Cref{heur:iid} and~\Cref{thm:main-3List} yields a quantum algorithm that heuristically solves SVP in time $2^{0.284551d+o(d)}$ using $(27/16)^{d/4 +o(d)} = 2^{0.188722d+o(d)}$ classical bits and QCRAM bits, and subexponentially many qubits. This is the fastest known (heuristic) quantum algorithm for SVP when the total memory is limited to $2^{0.188722d+o(d)}$. 

\begin{hclaim}\label{hclaim:SVP}
    Let $\gamma = O(1)$. Under~\Cref{heur:iid}, there exists a quantum algorithm that solves $\gamma$-SVP in dimension $d$ in time $2^{0.284551d + o(d)}$ with probability $1 - 2^{-\Omega(d)}$. 
    This algorithm uses $(27/16)^{d/4 +o(d)}$ classical memory and QCRAM bits, and $2^{o(d)}$ qubits. 
\end{hclaim}

\noindent As mentioned in~\Cref{sec:result} and shown in~\Cref{table:k234complexities}, under the same heuristic and the same memory complexity (optimal for $3$-tuple sieving), our quantum algorithm improves upon all prior algorithms. Yet, compared to the fastest known heuristic quantum algorithm for SVP~\cite[Proposition~$4$]{BCSS22}, which uses 2-tuple sieving, the gain in memory complexity that we obtain by using $3$-tuple sieving still does not fully compensate for the increase in time complexity (despite our quantum speedup): time-memory product $2^{(0.1887+0.2846)d} \approx 2^{0.4733d}$ versus $2^{(0.2075+0.2563)d} \approx 2^{0.4638d}$.\footnote{Note that a time-memory product may not be the best measure for comparing these algorithms, as time and memory are not directly interchangeable.} However, our quantum algorithm uses only $2^{o(d)}$ qubits, whereas~\cite{BCSS22} uses an exponential number of qubits and QQRAM (i.e., quantum-readable quantum-writable \emph{quantum} memory, which is technologically more demanding than the QCRAM required in our approach) for implementing quantum walks.  
Considering the class of heuristic quantum algorithms for SVP that use at most $2^{o(d)}$ qubits and no QQRAM, the best known time complexity is $2^{0.2571d}$~\cite{Heiser21},  
achieved using $2^{0.2075d}$ classical memory (and also QCRAM of exponential size), which therefore still yields a better time-memory product than our result: $2^{0.4646d}$ versus $2^{0.4733d}$.

\paragraph{Acknowledgements.}
The authors thank L\'{e}o Ducas, Elena Kirshanova, Johanna Loyer, Eamonn Postlethwaite, and Yixin Shen for helpful discussions and references on sieving, and the anonymous reviewers for their valuable feedback. 
 
\bibliographystyle{alpha}
\bibliography{ref-arXiv.bib}

\appendix

\section{Appendix} 

\subsection{Proofs of~\Cref{lem: approx cap volume} and~\Cref{lem: approx wedge volume}}
\label{app:proofs-approx-lemmas} 

\begin{proof}[Proof of~\Cref{lem: approx cap volume}]
Let $\vx \in \cS^{d-1}$, and define $p\coloneqq \Pr_{\vc \sim \cU(\cS^{d-1})}[\innerP{\vx}{\vc} \approx_\epsapprox  \cos(\alpha)]$. 
By~\Cref{lem: cap volume}, 
\begin{align*}
    p &\leq \Pr_{\vc \sim \cU(\cS^{d-1})}[\innerP{\vx}{\vc} \geq \cos(\alpha) - \epsapprox] \\
    &=_d  (1 - (\cos(\alpha) - \epsapprox)^2)^{d/2}  \\
    &= (1 - \cos^2(\alpha))^{d/2} \left(1 + \frac{2\epsapprox\cos(\alpha) - \epsapprox^2}{1 - \cos^2(\alpha)}\right)^{d/2} \\
    &\leq (1 - \cos^2(\alpha))^{d/2} \left(1 + \frac{2\epsapprox}{1 - \cos^2(\alpha)}\right)^{d/2} \\ 
    &\leq (1 - \cos^2(\alpha))^{d/2} \exp(O(\epsapprox d))
\end{align*}
since $\alpha = \Omega(1)$ implies $1 - \cos^2(\alpha) = \Omega(1)$, and using the fact that $1 + x \leq \exp(x)$ for all $x \in \R$. As $\epsapprox = 1/(\log d)^2$, we obtain $p \leq_d p_\alpha$. 

Moreover, by the union bound and~\Cref{lem: cap volume}, we obtain that for some constant~$k > 0$,  
\begin{align*}
    p &\geq \Pr_{\vc \sim \cU(\cS^{d-1})}[\innerP{\vx}{\vc} \geq \cos(\alpha)]  - \Pr_{\vc \sim \cU(\cS^{d-1})}[\innerP{\vx}{\vc} \geq \cos(\alpha) + \epsapprox] \\
    &\geq \frac{1}{d^k} (1 - \cos^2(\alpha))^{d/2}  - d^k(1 - (\cos(\alpha) + \epsapprox)^2)^{d/2} \\
    &= (1 - \cos^2(\alpha))^{d/2} \left(\frac{1}{d^k} -  d^k \left(1 - \frac{2\epsapprox\cos(\alpha) + \epsapprox^2}{1 - \cos^2(\alpha)}\right)^{d/2} \right).
\end{align*} 
Since $(1 - \frac{2\epsapprox\cos(\alpha) + \epsapprox^2}{1 - \cos^2(\alpha)})^{d/2} \leq (1 - \epsapprox^2)^{d/2} \leq \exp(-\epsapprox^2 d/2)$, the choice $\epsapprox = 1/(\log d)^2$ yields $p \geq_d p_\alpha$.
\end{proof}

\begin{proof}[Proof of~\Cref{lem: approx wedge volume}] 
Let $\vx, \vy \in \cS^{d-1}$ be such that $\innerP{\vx}{\vy} \approx_\epsapprox \cos(\theta)$, and define
\begin{equation*}
    p \coloneqq \Pr_{\vc \sim \cU(\cS^{d-1})}[\innerP{\vx}{\vc} \approx_\epsapprox \cos(\alpha), \innerP{\vy}{\vc} \approx_\epsapprox \cos(\beta)].
\end{equation*} 
By~\Cref{lem: wedge volume}, it suffices to show that $p =_d (1 - \gamma^2)^{d/2}$, where 
\[
    \gamma^2 \coloneqq \frac{\cos^2(\alpha) + \cos^2(\beta) - 2\cos(\alpha)\cos(\beta)\cos(\theta)}{\sin^2(\theta)}.
\]  
We proceed by showing that $(1 - \widetilde{\gamma}^2)^{d/2} \leq_d p \leq_d (1 - \overline{\gamma}^2)^{d/2}$ for some $\widetilde{\gamma}^2 \leq \gamma^2 + O(\epsapprox)$ and  $\overline{\gamma}^2 \geq \gamma^2 - O(\epsapprox)$ that are defined in terms of $(\alpha, \beta, \theta, \epsapprox)$, allowing us to prove $p =_d (1 - \gamma^2)^{d/2}$. 

More precisely, define $\cos(\phi) \coloneqq \innerP{\vx}{\vy}$. Then $\cos(\theta) - \epsapprox \leq \cos(\phi) \leq \cos(\theta) + \epsapprox$, which implies $- 3\epsapprox \leq \sin^2(\phi) - \sin^2(\theta) \leq 2 \epsapprox$. 
In addition, the assumption $|\alpha - \beta| + \Omega(1) \leq \theta \leq \alpha + \beta - \Omega(1)$ implies $\theta = \Omega(1)$. Using  $\varepsilon = o(1)$, it follows that there exists a constant $\kappa > 0$ such that (for sufficiently large $d$) $\sin^2(\phi) \geq \kappa$. 
Using standard trigonometric identities, we also obtain 
\begin{align*}
    \sin^2(\theta)(1 - \gamma^2) &= \sin^2(\theta) - \cos^2(\alpha) - \cos^2(\beta) + 2\cos(\alpha)\cos(\beta)\cos(\theta) \\
    &= 4\sin\left(\frac{(\beta + \theta) + \alpha}{2}\right)\sin\left(\frac{(\beta + \theta) - \alpha}{2}\right)\sin\left(\frac{\alpha + (\beta - \theta)}{2}\right)\sin\left(\frac{\alpha - (\beta - \theta)}{2}\right)
\end{align*} 
and thus the assumption on $\alpha,\beta,\theta$ implies $1 - \gamma^2 \geq \sin^2(\theta)(1 - \gamma^2) = \Omega(1)$. In other words, there also exists a constant $\kappa' > 0$ such that $1 - \gamma^2 \geq \kappa'$ (for sufficiently large $d$). 

\noindent Now, for the upper bound, note that~\Cref{lem: wedge volume} implies 
\begin{align*}
    p \leq \Pr_{\vc \sim \cU(\cS^{d-1})}[\innerP{\vx}{\vc} \geq \cos(\alpha) - \epsapprox, \innerP{\vy}{\vc} \geq \cos(\beta) - \epsapprox] 
    =_d (1 - \overline{\gamma}^2)^{d/2}
\end{align*}
where \begin{align*}
    \overline{\gamma}^2 &\coloneqq \frac{(\cos(\alpha) - \epsapprox)^2 + (\cos(\beta) - \epsapprox)^2 - 2(\cos(\alpha)- \epsapprox)(\cos(\beta)- \epsapprox)\cos(\phi)}{\sin^2(\phi)} \\ 
    &\geq \frac{\cos^2(\alpha) + \cos^2(\beta) - 2\cos(\alpha)\cos(\beta)\cos(\theta) - 2\epsapprox(\cos(\alpha) + \cos(\beta) + \cos(\alpha)\cos(\beta))}{\sin^2(\phi)} \\
    &\geq \frac{\cos^2(\alpha) + \cos^2(\beta) - 2\cos(\alpha)\cos(\beta)\cos(\theta) - 6\epsapprox}{\sin^2(\phi)} \\
    &= \gamma^2 \frac{\sin^2(\theta)}{\sin^2(\phi)} - \frac{6\epsapprox}{\sin^2(\phi)} \\
    &= \gamma^2 \left(1  -  \frac{\sin^2(\phi) - \sin^2(\theta)}{\sin^2(\phi)}\right) - \frac{6\epsapprox}{\sin^2(\phi)}.
\end{align*}
Since $\gamma^2 \leq 1$, $\sin^2(\phi) - \sin^2(\theta) \leq 2 \epsapprox$, and $\sin^2(\phi) \geq \kappa$, we obtain $\overline{\gamma}^2 \geq \gamma^2 - \frac{8 \epsapprox}{\kappa}$. The desired upper bound now follows, because (using $1 - \gamma^2 \geq \kappa'$ and $\epsapprox = 1/(\log d)^2$)
\begin{align*}
    (1 - \overline{\gamma}^2)^{d/2} = (1 - \gamma^2)^{d/2} \left( 1 + \frac{\gamma^2 - \overline{\gamma}^2}{1 - \gamma^2}\right)^{d/2}  \leq (1 - \gamma^2)^{d/2} \left(1 + \frac{8\epsapprox}{\kappa \kappa'}\right)^{d/2} \leq_d (1 - \gamma^2)^{d/2}. 
\end{align*}
For the lower bound, note that the union bound implies
\begin{align*}
    p \geq &\Pr_{\vc \sim \cU(\cS^{d-1})}[\innerP{\vx}{\vc} \geq \cos(\alpha) - \epsapprox, \innerP{\vy}{\vc} \geq \cos(\beta) - \epsapprox] \\ 
    &- \Pr_{\vc \sim \cU(\cS^{d-1})}[\innerP{\vx}{\vc} > \cos(\alpha) + \epsapprox, \innerP{\vy}{\vc} \geq \cos(\beta) - \epsapprox] \\ 
    &- \Pr_{\vc \sim \cU(\cS^{d-1})}[\innerP{\vx}{\vc} \geq \cos(\alpha) - \epsapprox, \innerP{\vy}{\vc} > \cos(\beta) + \epsapprox] \\
    \geq&_d \Pr_{\vc \sim \cU(\cS^{d-1})}[\innerP{\vx}{\vc} \geq \cos(\alpha) - \epsapprox, \innerP{\vy}{\vc} \geq \cos(\beta) - \epsapprox] \\
    \geq &\Pr_{\vc \sim \cU(\cS^{d-1})}[\innerP{\vx}{\vc} \geq \cos(\alpha), \innerP{\vy}{\vc} \geq \cos(\beta)]
\end{align*}
where the second inequality can be shown using similar methods as how we have shown the upper bound, using that $\epsapprox = 1/(\log d)^2$. 
By~\Cref{lem: wedge volume}, we thus obtain 
\begin{align*}
    p \geq_d ( 1 - \widetilde{\gamma}^2)^{d/2} = (1 - \gamma^2)^{d/2} \left(1 - \frac{\widetilde{\gamma}^2 - \gamma^2}{1 - \gamma^2}\right)^{d/2}
\end{align*}
where \begin{align*}
    \widetilde{\gamma}^2 &\coloneqq \frac{\cos^2(\alpha) + \cos^2(\beta) - 2\cos(\alpha)\cos(\beta)\cos(\phi)}{\sin^2(\phi)} \\
    &= \frac{\cos^2(\alpha) + \cos^2(\beta) - 2\cos(\alpha)\cos(\beta)(\cos(\theta) + \cos(\phi) - \cos(\theta))}{\sin^2(\theta)} \frac{\sin^2(\theta)}{\sin^2(\phi)} \\
    &= \left(\gamma^2 - \frac{2\cos(\alpha)\cos(\beta)(\cos(\phi) - \cos(\theta))}{\sin^2(\theta)}\right) \frac{\sin^2(\theta)}{\sin^2(\phi)}.
\end{align*} 
If $\sin^2(\theta) \geq \sin^2(\phi)$ (meaning $\theta \geq \phi$ and $\cos(\theta) \leq \cos(\phi)$), then $\widetilde{\gamma}^2 \leq \gamma^2 \frac{\sin^2(\theta)}{\sin^2(\phi)}$. 
Otherwise, 
\begin{align*}
    \widetilde{\gamma}^2 &= \left(\gamma^2 + \frac{2\cos(\alpha)\cos(\beta)(\cos(\theta) - \cos(\phi))}{\sin^2(\theta)}\right) \frac{\sin^2(\theta)}{\sin^2(\phi)} 
    \leq  
    \gamma^2 \frac{\sin^2(\theta)}{\sin^2(\phi)} + \frac{2\epsapprox}{\sin^2(\phi)}.
\end{align*}
In either case, we have $\sin^2(\theta) - \sin^2(\phi) \leq 3 \epsapprox$, so $\gamma^2 \leq 1$ and $\sin^2(\phi) \geq \kappa$ imply 
\begin{align*}
    \gamma^2 \frac{\sin^2(\theta)}{\sin^2(\phi)}
     = \gamma^2 + \gamma^2\frac{\sin^2(\theta) - \sin^2(\phi)}{\sin^2(\phi)} \leq \gamma^2 + \frac{3\epsapprox}{\kappa}.
\end{align*}
In other words, we have shown $\widetilde{\gamma}^2 - \gamma^2 \leq \frac{5\epsapprox}{\kappa}$. 
Hence, \begin{align*}
    p &\geq_d (1 - \gamma^2)^{d/2} \left(1 - \frac{\widetilde{\gamma}^2 - \gamma^2}{1 - \gamma^2}\right)^{d/2} 
    \geq (1 - \gamma^2)^{d/2} \left(1 - \frac{5\epsapprox}{\kappa \kappa'}\right)^{d/2} 
    \geq_d (1 - \gamma^2)^{d/2}
\end{align*}
as desired. 
\end{proof}

\end{document}